\documentclass[11pt]{article} 
\pdfoutput=1
\usepackage[pagebackref,letterpaper=true,colorlinks=true,pdfpagemode=none,urlcolor=blue,linkcolor=blue,citecolor=red,pdfstartview=FitH]{hyperref}

\usepackage
{
        amssymb,
        amsthm,
        amsmath,
        graphicx,
        fullpage,
        enumerate,
        times,
        caption,
        subcaption
}

\usepackage{amsfonts,latexsym,xspace,makeidx,bm}
\usepackage[usenames]{color}

\usepackage{boxedminipage}

\addtolength{\oddsidemargin}{-.2in}
\addtolength{\evensidemargin}{-.2in}
\addtolength{\textwidth}{0.5in}
\headheight=0in
\headsep=0in
\addtolength{\textheight}{0.3in}

\newcommand{\draft}{0}

\newcommand{\full}{1}


\newcommand {\calA}   {{\cal{A}}}
\newcommand {\calB}   {{\cal{B}}}
\newcommand {\calC}   {{\cal{C}}}
\newcommand {\calD}   {{\cal{D}}}

\newcommand {\calI}   {{\cal{I}}}
\newcommand {\calL}   {{\cal{L}}}

\newcommand {\calG}   {{\cal{G}}}

\newcommand {\calX}   {{\cal{X}}}

\newcommand {\NP}   {{\mathsf{NP}}}
\newcommand {\PTIME}   {{\mathsf{P}}}


\newcommand {\tensor}   {\otimes}


\newtheorem{theorem}{Theorem}[section]
\newtheorem{lemma}[theorem]{Lemma}
\newtheorem{proposition}[theorem]{Proposition}
\newtheorem{claim}[theorem]{Claim}
\newtheorem{corollary}[theorem]{Corollary}

\newtheorem{fact}[theorem]{Fact}

\newtheorem{remark}{Remark}[section]
\newtheorem{notation}{Notation}[section]

\newtheorem{definition}{Definition}

\definecolor{DSgray}{cmyk}{0,1,0,0}

\ifnum\draft=1
\newcommand{\rnote}[1]{\footnote{\color{magenta}Ryan: {#1}}}
\newcommand{\cnote}[1]{\footnote{\color{red}Chenggang: {#1}}}
\newcommand{\ynote}[1]{\footnote{\color{blue}Yuan: {#1}}}
\newcommand{\jnote}[1]{\footnote{\color{green}John: {#1}}}
\fi
\ifnum\draft=0
\newcommand{\rnote}[1]{{}}
\newcommand{\cnote}[1]{{}}
\newcommand{\ynote}[1]{{}}
\newcommand{\jnote}[1]{{}}

\newcommand{\R}{\mathbb R}

\newcommand{\Z}{\mathbb Z}

\usepackage{prettyref}

\newcommand{\savehyperref}[2]{\texorpdfstring{\hyperref[#1]{#2}}{#2}}

\newrefformat{eq}{\savehyperref{#1}{\textup{(\ref*{#1})}}}
\newrefformat{lem}{\savehyperref{#1}{Lemma~\ref*{#1}}}
\newrefformat{def}{\savehyperref{#1}{Definition~\ref*{#1}}}
\newrefformat{thm}{\savehyperref{#1}{Theorem~\ref*{#1}}}
\newrefformat{cor}{\savehyperref{#1}{Corollary~\ref*{#1}}}
\newrefformat{corr}{\savehyperref{#1}{Corollary~\ref*{#1}}}
\newrefformat{cha}{\savehyperref{#1}{Chapter~\ref*{#1}}}
\newrefformat{sec}{\savehyperref{#1}{Section~\ref*{#1}}}
\newrefformat{app}{\savehyperref{#1}{Appendix~\ref*{#1}}}
\newrefformat{tab}{\savehyperref{#1}{Table~\ref*{#1}}}
\newrefformat{fig}{\savehyperref{#1}{Figure~\ref*{#1}}}
\newrefformat{hyp}{\savehyperref{#1}{Hypothesis~\ref*{#1}}}
\newrefformat{alg}{\savehyperref{#1}{Algorithm~\ref*{#1}}}
\newrefformat{item}{\savehyperref{#1}{Item~\ref*{#1}}}
\newrefformat{step}{\savehyperref{#1}{step~\ref*{#1}}}
\newrefformat{conj}{\savehyperref{#1}{Conjecture~\ref*{#1}}}
\newrefformat{fact}{\savehyperref{#1}{Fact~\ref*{#1}}}
\newrefformat{prop}{\savehyperref{#1}{Proposition~\ref*{#1}}}
\newrefformat{claim}{\savehyperref{#1}{Claim~\ref*{#1}}}

\let\pref=\prettyref


\newcommand{\Sref}[1]{\hyperref[#1]{Section \ref*{#1}}}



\newcommand{\E}{\mathop{\bf E\/}}

\newcommand{\eps}{\epsilon}
\renewcommand{\phi}{\varphi}

\newcommand{\GI}{{\sf GI}}
\newcommand{\aut}{{\sf AUT}}
\newcommand{\val}{{\sf val}}


\newcommand{\floor}[1]{\left\lfloor #1 \right\rfloor}

\newcommand{\ignore}[1]{{}}

\newcommand{\id}{{\rm id}}
\newcommand{\diff} {\triangle}
\newcommand{\threexor}{{\sc 3XOR}\xspace}
\newcommand{\giso}{{\sc Giso}\xspace}
\newcommand{\robustgiso}{{\sc RobustGiso}\xspace}
\newcommand{\gisolong}{{\sc GraphIsomorphism}\xspace}
\newcommand{\factoring}{{\sc Factoring}\xspace}
\newcommand{\RXOR}{{\sc R3XOR}\xspace}

\newcommand{\poly}{\mathrm{poly}}
\newcommand{\polylog}{\mathrm{polylog}}
\newcommand{\co}{\colon}
\newcommand{\homog}[1]{\underline{#1}}

\newcommand{\pE}{\widetilde{\E}}
\newcommand{\littlesum}{\mathop{{\textstyle \sum}}}

\begin{document}

\title{Hardness of Robust Graph Isomorphism, \\
Lasserre Gaps, and Asymmetry of Random Graphs}
\author{
Ryan O'Donnell\thanks{Supported by NSF grants CCF-0747250 and CCF-1116594 and a grant from the MSR--CMU Center for Computational Thinking.}\\
Department of Computer Science\\
Carnegie Mellon University\\
\tt{odonnell@cs.cmu.edu}
\and
John Wright$^*$\thanks{Some of this research done with visiting Toyota Technological Institute at Chicago.}\\
Department of Computer Science\\
Carnegie Mellon University\\
\tt{jswright@cs.cmu.edu}
\and
Chenggang Wu\thanks{This work was supported in part by the National Basic Research Program of China Grant 2011CBA00300, 2011CBA00301, the National Natural Science Foundation of China Grant 61033001, 61061130540. Some of this research done while visiting Carnegie Mellon University.}\\
IIIS\\
Tsinghua University\\
\tt{wcg06@mails.tsinghua.edu.cn}
\and
Yuan Zhou$^{*\dagger}$\thanks{Also supported by a grant from the Simons Foundation (Award Number 252545).}\\
Department of Computer Science\\
Carnegie Mellon University\\
\tt{yuanzhou@cs.cmu.edu}
}

\maketitle

\abstract{Building on work of Cai, F\"urer, and Immerman~\cite{CFI92}, we show two hardness results for the Graph Isomorphism problem.   First, we show that there are pairs of nonisomorphic $n$-vertex graphs~$G$ and~$H$ such that any sum-of-squares (SOS) proof of nonisomorphism requires degree~$\Omega(n)$. In other words, we show an $\Omega(n)$-round integrality gap for the Lasserre SDP relaxation.  In fact, we show this for pairs $G$ and~$H$ which are not even $(1-10^{-14})$-isomorphic. (Here we say that two $n$-vertex, $m$-edge graphs $G$ and $H$ are $\alpha$-isomorphic if there is a bijection between their vertices which preserves at least $\alpha m$ edges.)  Our second result is that under the \RXOR~Hypothesis~\cite{Fei02} (and also any of a class of hypotheses which generalize the \RXOR~Hypothesis), the \emph{robust} Graph Isomorphism problem is hard. I.e.\ for every $\eps > 0$, there is no efficient algorithm which can distinguish graph pairs which are $(1-\eps)$-isomorphic from pairs which are not even  $(1-\eps_0)$-isomorphic for some universal constant $\eps_0$. Along the way we prove a robust asymmetry result for random graphs and hypergraphs which may be of independent interest.
}

\setcounter{page}{0}
\thispagestyle{empty}
\newpage

\section{Introduction}

The \gisolong problem is one of the most intriguing and notorious problems in computational complexity theory (we will also refer to it as \giso for short); we refer to~\cite{KST93,Bab95,AK05,Kob06,Cod11} for surveys. It was famously referred to as a ``disease'' over 35 years ago~\cite{RC77} and maintains its infectious status to this day. Together with \factoring,\rnote{I ignore the problems related to GISO like GA, and the problems related to Factoring like Discrete Log} it is one of the very rare problems\rnote{Yes, I realize Factoring is not exactly a language, but c'mon, the reader knows what we mean} in~$\NP$ which is not known to be in~$\PTIME$ but which is believed to not be $\NP$-hard~\cite{Bab85,BHZ87,Sch88} (according to standard complexity-theoretic assumptions).  Both problems also admit an algorithm with running time ``subexponential'' (or ``moderately exponential'') in the natural witness size.  In the case of \gisolong on $n$-vertex graphs, the natural witness size is $\log_2(n!) = \Theta(n \log n)$, but the best known algorithm due to Luks solves the problem in time $2^{O(\sqrt{n \log n})}$~\cite{BL83}.

In the same breath we might mention the problems {\sc GapSVP}$_{\sqrt{n}}$ (approximating the shortest vector in an $n$-dimensional lattice to factor~$\sqrt{n}$) and {\sc UniqueGames}$_\eps$ (the Unique Games problem proposed by Khot~\cite{Kho02}).  The former is not $\NP$-hard subject to standard complexity-theoretic assumptions~\cite{GG00,AR05}, though we don't know any subexponential-time algorithm.\rnote{I don't think?}  The latter has a subexponential-time algorithm~\cite{ABS10}; whether it is $\NP$-hard or in $\PTIME$ (or neither) is hotly contested.  The potential hardness of \factoring and {\sc GapSVP} --- even under certain average-case distributions --- is well enough entrenched that many cryptographic protocols are based on it.\rnote{do we need citations?  probably not.}  (The same is true of random \threexor with noise, more on which later.) On the other hand, for \gisolong and {\sc UniqueGames}$_\eps$ we do not know any way of generating ``hard-seeming instances''; indeed, some experts have speculated that \gisolong may be in~$\PTIME$, or at least have a $2^{\polylog(n)}$-time algorithm.\rnote{cite?}

In this paper we investigate hardness results for the \giso problem.\rnote{Yes, I switched without comment to the short form here.}  Since \giso may well be in~$\PTIME$, let us discuss what this may mean.  One direction would be to show that \giso is hard for small complexity classes. This has been pursued most successfully by Tor\'{a}n~\cite{Tor04}, who has shown that \giso is hard for the class~$\mathsf{DET}$.  This is essentially the class of problems equivalent to computing the determinant; it contains~$\#\mathsf{L}$ and is contained in~$\mathsf{TC}^1$. It is not known whether \giso is $\PTIME$-hard.

In this paper, however, we are concerned with hardness results well above~$\PTIME$.  We have two related main results. The first is that solving \giso via the Lasserre/SOS hierarchy requires $2^{\Omega(n)}$ time (i.e., $\Omega(n)$ rounds/degree).  This generalizes the result of Cai, F\"urer, and Immerman~\cite{CFI92} showing that the frequently effective $o(n)$-dimensional Weisfeiler--Lehman algorithm fails to solve~\giso; it also gives even more evidence that any subexponential-time algorithm for \giso requires algebraic, non-local techniques.

Our second result is concerned with the problem of \emph{robust} graph isomorphism, \robustgiso.  Roughly speaking, this is the following problem: given two graphs which are almost isomorphic, find an ``almost-isomorphism''.  \robustgiso is strictly harder than \giso and the fact that it concerns ``isomorphisms with noise'' seems to rule out all algebraic techniques.  We show that \robustgiso is at least as hard as random \threexor with noise; hence \robustgiso has no polynomial-time algorithm assuming the well-known \RXOR Hypothesis of Feige~\cite{Fei02}.  In fact, it's possible that the \RXOR problem requires $2^{n^{1-o(1)}}$ time\rnote{I don't think you can put $2^{\Omega(n)}$ here in light of Blum--Kalai--Wasserman}, which would mean that \robustgiso is much harder than \giso itself.

\subsection{SOS/Lasserre gaps}

The most well-known heuristic for \gisolong (and the basis of most practical algorithms --- e.g., ``nauty''~\cite{McK81}) is the Weisfeiler--Lehman~(WL) algorithm~\cite{Wei76} and its ``higher dimensional'' generalizations.  To describe the basic algorithm\rnote{please check I am actually describing it correctly...} we need the notion of a colored graph. This is simply a graph, together with a function mapping the vertices to a finite set of colors; equivalently, a graph with its vertices partitioned into ``color classes''.   Isomorphisms involving colored graphs are always assumed to preserve colors.  Let~$G$ be a colored graph on the $n$-vertex set~$V$.  A \emph{color refinement step} refers to the following procedure: for each $v \in V$, one determines the multiset $C_v$ of colors in the neighborhood of~$v$;\rnote{including~$v$; this isn't a disquisition on WL so I am trying to be brief} then one recolors each~$v$ with color~$C_v$.\rnote{that's pretty terse, I know}  Now the basic WL algorithm, when given graphs~$G$ and~$H$, repeatedly applies refinement to each of them until the colorings stabilize.  (Initially, the graphs are treated as having just one color class.)  At the end, if $G$ and $H$ have the same number of vertices of each color the WL algorithm outputs that they are ``maybe isomorphic''; otherwise, it (correctly) outputs that they are ``not isomorphic''.

Note that after the initial refinement step, a graph's vertices are colored according to their degree.  Thus two $d$-regular graphs are always reported as ``maybe isomorphic'' by the basic WL algorithm.  On the other hand, the heuristic is powerful enough to work correctly for all trees\rnote{folklore?} and for almost all $n$-vertex graphs in the Erd\H{o}s--R\'enyi $\calG(n,1/2)$ model~\cite{BES80,BK79}.\rnote{Is this the notation we use?}  (We say the heuristic ``works correctly'' on a graph~$G$ if the stabilized coloring for~$G$ is distinct from the stabilized coloring of any graph not isomorphic to~$G$.)  To overcome WL's failure for regular graphs, several researchers independently introduced the ``$k$-dimensional generalization'' WL$^k$ (see~\cite{Wei76,CFI92} for discussion).  Briefly, in the WL$^k$ heuristic, each $k$-tuple of vertices has a color, and color refinement involves looking at all ``neighbors'' of each $k$-tuple of vertices~$(v_1, \dots, v_k)$ (where the neighbors are all tuples of the form $(v_1, \dots, v_{i-1}, u_i, v_{i+1}, \dots, v_k)$, where $\{u_i,v_i\}$ is an edge).\rnote{should we bother including this parenthesis?}  The WL$^k$ heuristic can be performed in time $n^{k+O(1)}$ and is thus a polynomial-time algorithm for any constant~$k$.

The WL$^k$ heuristic is very powerful.  For example, it is known to work correctly in polynomial time for all graphs which exclude a fixed minor~\cite{Gro12}, a class which includes all graphs of bounded treewidth or bounded genus.\rnote{Also works for colored graph isomorphism with color classes of size at most 3.}  Spielman's $2^{\widetilde{O}(n^{1/3})}$-time graph isomorphism algorithm~\cite{Spi96a} for strongly regular graphs is achieved by WL$^k$ with $k = \widetilde{O}(n^{1/3})$.  The WL$^k$ algorithm with $k = O(\sqrt{n})$\rnote{no log factors according to CFI} is also a key component in the $2^{O(\sqrt{n \log n})}$-time \giso algorithm~\cite{BL83}.  Throughout the '80s there was some speculation that \giso might be solvable on all graphs by running the WL$^k$ algorithm with $k = O(\log n)$ or even $k = O(1)$.  However this was disproved in the notable work of Cai, F\"urer, and Immerman~\cite{CFI92}, which showed the existence of nonisomorphic $n$-vertex\rnote{$3$-regular} graphs~$G$ and~$H$ which are not distinguished by WL$^k$ unless $k = \Omega(n)$.\footnote{Actually, $G$ and $H$ are colored graphs in~\cite{CFI92}'s construction, with each color class having size at most~$4$.  It is often stated that the colors can be replaced by gadgets while keeping the number of vertices~$O(n)$. We do not find this to be immediately obvious. However it does follow from the asymmetry of random graphs, as we will see in this paper.}\rnote{Right?  Or is it obvious?}

The \gisolong problem can be thought of as kind of constraint satisfaction problem (CSP), and readers familiar with LP/SDP hierarchies for CSPs might see an analogy between $k$-dimensional~WL and level-$k$ LP/SDP relaxations.  A very interesting recent work of Atserias and Maneva~\cite{AM13} (see also~\cite{GO12}) shows that this is more than just an analogy --- it shows that the power of WL$^k$ is precisely sandwiched between that of the $k$th and $(k+1)$st level of the canonical \emph{Sherali--Adams LP hierarchy}~\cite{SA90}.  (In fact, it had long been known~\cite{RSU94} that WL$^1$ is equivalent in power to the basic LP relaxation of \giso.)  This gives a very satisfactory connection between standard techniques in optimization algorithms and the best known non-algebraic/local techniques for \giso.

This connection raises the question of whether stronger LP/SDP hierarchies might prove more powerful than WL$^k$ in the context of \giso.  The strongest such hierarchy known\rnote{Well, I'm not sure of the status of Bienstock--Zuckerman} is the ``SOS (sum-of-squares) hierarchy'' due to Lasserre~\cite{Las00} and Parrilo~\cite{Par00}. Very recent work~\cite{BBH+12,OZ13,KOTZ13} in the field of CSP approximability has shown that $O(1)$~levels of the SOS hierarchy can succeed where $\omega(1)$ levels of weaker SDP hierarchies fail; in particular, this holds for the hardest known instances of {\sc UniqueGames}$_\eps$~\cite{BBH+12}.  This raises the question of whether there might be a subexponential-time algorithm based SOS which solves \gisolong.

We answer this question negatively.  Our first main result is that a variant of the Cai--F\"urer--Immerman instances also fools $\Omega(n)$ levels of the SOS hierarchy.  In fact, we achieve a ``constant factor Lasserre gap with perfect completeness''.  To explain this we require a definition:
\begin{definition}
    Let $G$ and $H$ be nonempty $n$-vertex graphs.\rnote{should we also define here for hypergraphs?  save it for later?  not define it at all?}  For $0 \leq \beta \leq 1$, we say that a bijection $\pi \co V(G) \to V(H)$ is an \emph{$\alpha$-isomorphism} if
\ifnum\full=1
\[
\fi
\ifnum\full=0
$
\fi
        \frac{|\{(u,v) \in E(G) : (\pi(u), \pi(v)) \in E(H)\}|}{\max\{|E(G)|, |E(H)|\}} \geq \alpha.
\ifnum\full=1
\]
\fi
\ifnum\full=0
$
\fi
    In this case we say that $G$ and $H$ are \emph{$\alpha$-isomorphic}.
\end{definition}
Observe that this definition is symmetric in $G$ and $H$.  The two graphs are isomorphic if and only if they are $1$-isomorphic.  We will almost always\rnote{always?} consider the case that $G$ and $H$ have the same number of edges. We prove:
\begin{theorem}                                     \label{thm:lasserre-gap}
    For infinitely many $n$, there exist pairs of $n$-vertex, $O(n)$-edge graphs~$G$ and~$H$ such that:
    \begin{itemize}
        \item $G$ and $H$ are not $(1-10^{-14})$-isomorphic;
        \item any SOS refutation of the statement ``$G$ and $H$ are isomorphic''\footnote{When naturally encoded.} requires degree~$\Omega(n)$.
    \end{itemize}
\end{theorem}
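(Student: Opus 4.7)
The plan is to build on the Cai--F\"urer--Immerman (CFI) construction in three stages: (i) upgrade the CFI nonisomorphism to \emph{robust} nonisomorphism, (ii) lift the resulting combinatorial gap to an $\Omega(n)$-degree Lasserre/SOS gap via a $3$XOR intermediate, and (iii) eliminate the vertex colors using random asymmetric gadgets.

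First I would set up CFI gadgets over a $3$-regular base graph $X$ on $\Theta(n)$ vertices with constant edge-expansion on linear-size sets. For a choice of ``twist bits'' $t \in \F_2^{V(X)}$ one obtains a graph $G_t$, and a classical fact is that $G_t \cong G_{t'}$ iff $\littlesum_v t_v \equiv \littlesum_v t'_v \pmod 2$. I take $G = G_0$, $H = G_{e_{v_0}}$, so $G$ and $H$ are nonisomorphic. For robustness, any bijection $\pi\co V(G)\to V(H)$ induces a ``local twist'' $s_e \in \F_2$ at each edge $e$ of $X$, and the number of edges of $G$ that $\pi$ preserves is essentially $|E(G)| - C\cdot |\{\text{vertex-constraints violated by } s\}|$. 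Since $s$ must satisfy a $3$XOR-like system with odd total parity, vertex-expansion of $X$ forces $\Omega(|V(X)|)$ violated constraints for \emph{every} $s$, hence $\pi$ loses an $\Omega(1)$ fraction of edges; tuning the expansion constants gives the $10^{-14}$ slack.

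Second --- the main technical step --- I would produce a degree-$\Omega(n)$ SOS pseudo-expectation $\pE$ over the natural $0/1$ bijection variables $x_{u,v} = \bone[\pi(u)=v]$ witnessing ``$G$ and $H$ are isomorphic.'' The idea is to identify each local CFI gadget's contribution to $x_{u,v}$ as a low-degree $\F_2$-polynomial in a set of underlying $3$XOR variables $\{y_e\}_{e \in E(X)}$; satisfying the isomorphism axioms (row/column stochasticity, edge preservation) then reduces to satisfying a linear system over $\F_2$ that is precisely the $3$XOR instance attached to $(G,H)$. The Grigoriev/Schoenebeck $\Omega(n)$-degree Lasserre lower bound for expanding $3$XOR yields a pseudo-expectation $\pE_{3XOR}$ that ``satisfies'' this unsatisfiable system to degree $\Omega(n)$; I would push it through the gadget encoding to obtain $\pE$ on the $x_{u,v}$ variables. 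Checking that $\pE$ remains a valid (positive, unital) pseudo-expectation after this change of variables, and that the SOS axioms for \giso are satisfied to degree $\Omega(n)$, is where most of the technical work lies: one must verify that every SOS relation derivable in degree $\leq cn$ from the \giso axioms pulls back to a relation derivable in degree $\leq cn$ from the $3$XOR axioms, which boils down to the locality of the CFI gadget and the linearity of the encoding.

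Finally, to pass from colored to uncolored graphs on $O(n)$ vertices, I would attach to each CFI color-class an independent random ``identifier gadget'' $R_c \sim \Gnp$ on $m = \Theta(\log n)$ vertices with $p = 1/2$. A robust asymmetry lemma for random graphs (one of the independent contributions promised in the abstract) guarantees that, with high probability, no bijection between two such identifier gadgets, other than a canonical one, preserves more than a $1 - \Omega(1)$ fraction of the edges. Hence any $(1-\eps)$-isomorphism between the uncolored versions of $G$ and $H$ must respect the attached colors up to $O(\eps)$ error, and both the robustness argument of the first step and the SOS construction of the second step go through with at most a constant loss in the parameters.

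The main obstacle I expect is the second step: one must produce an actual degree-$\Omega(n)$ positive pseudo-expectation on the $n^2$ bijection variables, not merely a Sherali--Adams solution, and make it simultaneously consistent with \emph{all} SOS-derivable consequences of the isomorphism axioms. The WL/Sherali--Adams correspondence of~\cite{AM13} suggests what the combinatorial content should be (a ``pebble-game'' style indistinguishability), but converting that into a PSD-respecting object at degree $\Omega(n)$ requires exploiting the algebraic $\F_2$-structure of CFI much more carefully than in prior hierarchy gap constructions for \giso.
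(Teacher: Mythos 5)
Your step (ii) --- lifting a Grigoriev/Schoenebeck pseudo-expectation for 3XOR to a pseudo-expectation over the $x_{u,v}$ bijection variables by writing each $x_{u,v}$ as a low-degree polynomial in the underlying 3XOR assignment variables --- is precisely what the paper does (its SOS-completeness lemma defines each $\Pi[u\mapsto v]$ as a degree-$3$ multilinear monomial in the $A[x\mapsto a]$ indeterminates, so the degree bound only degrades by a factor of $3$). On that part you have identified the right mechanism.

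Where you diverge is in steps (i) and (iii), and the divergence is not cosmetic. The paper does not start from CFI twist-bits on a fixed expander and then remove colors with identifier gadgets. Instead it takes a \emph{random} 3XOR instance $\calC$ with $m = cn$ constraints, builds uncolored graphs $G_\calC$ and $G_{\homog{\calC}}$ directly from the (F\"urer-gadget-like) encoding, and proves soundness by showing that the \emph{constraint hypergraph} --- a random $3$-uniform hypergraph $\calG^{(3)}_{n,m}$ --- is robustly asymmetric. Concretely: the only $4$-cliques in $G_{\homog{\calC}}$ are the constraint-vertex sets, so a near-isomorphism $\pi$ between $G_\calC$ and $G_{\homog{\calC}}$ must approximately send constraint clusters to constraint clusters and variable pairs to variable pairs; this induces a near-automorphism $\sigma$ of the constraint hypergraph; robust asymmetry then forces $\sigma$ to have mostly fixed points; and from the fixed points one decodes a nearly-satisfying 3XOR assignment, contradicting the random instance having value $<.51$. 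So the ``asymmetry of random graphs'' contribution is used to replace the colors \emph{entirely} rather than to certify separate per-color identifier gadgets. The fact that the 3XOR instance is random is doing double duty: it gives both the Schoenebeck SOS lower bound and the robust asymmetry of the constraint hypergraph.

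Your step (iii) as written has a concrete gap. Attaching an independent $\Theta(\log n)$-vertex $\Gnp$ gadget (with $p=1/2$, hence $\Theta(\log^2 n)$ edges) to each of the $\Theta(n)$ CFI color classes produces a graph with $\Theta(n\log n)$ vertices and $\Theta(n\log^2 n)$ edges. That already violates the theorem's $O(n)$-edge requirement, and the SOS degree you would inherit is $\Omega(n)=\Omega(N/\log N)$ in terms of the final graph size $N$, not $\Omega(N)$. (This is precisely why the paper remarks, in a footnote, that replacing CFI colors by gadgets ``while keeping the number of vertices $O(n)$'' is not immediately obvious.) Moreover, the asymmetry statement you would need is about pairwise robust non-isomorphism of the $\Theta(n)$ independent identifier gadgets, which is a different (and additional) statement from the single-graph automorphism-rigidity lemma the paper proves; the latter alone does not force a near-isomorphism of the composite graphs to match $R_c$ to $R_c$ rather than $R_{c'}$. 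If you want a gadget route at $O(n)$ scale you would have to switch to, say, a single $\Theta(n)$-vertex random asymmetric graph with one distinguished vertex attached per color class; but you would then also have to redo your step (ii) so the pseudo-expectation extends over the gadget vertices, which your sketch does not address.
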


A word on our techniques.  The essence of the Cai--F\"urer--Immerman construction is to take a $3$-regular expander graph and replace each vertex by a certain $10$-vertex gadget (originally appearing in~\cite{Imm81} and also sometimes called a ``F\"urer gadget'').  This gadget is closely related to $3$-variable equations modulo~$2$ (as observed by several authors, e.g.~\cite{Tor04}); indeed, it may be described as the ``label-extended graph'' of the \threexor constraint.  The reader may therefore recognize the~\cite{CFI92} WL$^{k}$ lower bound as stemming from the difficulty of refuting unsatisfiable, expanding \threexor CSP instances by ``local'' means.  This should make our \pref{thm:lasserre-gap} look plausible in light of the Grigoriev~\cite{Gri01} and Schoenebeck~\cite{Sch08} SOS/Lasserre lower bounds.

Nevertheless, obtaining \pref{thm:lasserre-gap} is not automatic.  For one, we still lack a complete theory of reductions within the SOS hierarchy (though see~\cite{Tul09}). Second, the pair of graphs constructed by~\cite{CFI92} only differ by one edge. More tricky is the issue of removing the ``colors'' from the~\cite{CFI92} construction.  We do not see an easy gadget-based way of doing this without sacrificing on the $\Omega(n)$~degree.  To handle this we have to: a)~modify the~\cite{CFI92} construction somewhat to make the two graphs differ by a constant fraction; b)~prove that random (hyper)graphs are ``robustly asymmetric'' --- i.e., ``far from having nontrivial automorphisms''; c)~use the robust asymmetry property to remove the ``colors''.  The result in b), described below in \pref{sec:asymm-overview}, qualitatively generalizes work of Erd\H{o}s and R\'enyi \cite{ER63}
and may be of independent interest.

In an independent work, Codenotti, Schoenebeck, and Snook~\cite{CSS14} have shown a conclusion similar to our Theorem~\ref{thm:lasserre-gap}.  Their main result is that there are expander graphs $G$ and $H$ which are not isomorphic, but any SOS refutation of the statement ``$G$ and $H$ are isomorphic'' requires degree~$\Omega(n)$.  As in our work, their proof combines the Cai--F\"urer--Immerman construction with the Schoenebeck~\cite{Sch08} SOS/Lasserre lower bounds.

\subsection{Robust graph isomorphism}
Our second main result concerns the \robustgiso problem.  To introduce it, we might imagine an algorithm trying to recover an isomorphism between $G$ and $H$, where $H$ is formed by permuting the vertices of~$G$ and then introducing a small amount of \emph{noise} --- say, adding and deleting an~$\eps$ fraction of edges.  Thinking of \giso as a CSP, we are concerned with finding ``almost-satisfying'' solutions on ``almost-satisfiable'' instances.  For example, suppose we are given graphs~$G$ and~$H$ which are promised to be $(1-\eps)$-isomorphic.  Can we efficiently find a $(1-2\eps)$-isomorphism?  A $(1-\sqrt{\eps})$-isomorphism?  A $(1-\frac{1}{\log(1/\eps)})$-isomorphism?
\begin{definition}
    We say an algorithm $\calA$ solves the \robustgiso problem if there is a function $r \co [0,1] \to [0,1]$ satisfying $r(\eps) \to  0$ as $\eps \to 0^+$ such that whenever $\calA$ is given any pair of graphs which are $(1-\eps)$-isomorphic,~$\calA$ outputs a $(1-r(\eps))$-isomorphism between them.\footnote{We could also consider the easier task of distinguishing pairs which are $(1-\eps)$-isomorphic from pairs which are not $(1-r(\eps))$-isomorphic. In fact, our hardness result will hold even for this easier problem.}
\end{definition}
\begin{remark}
    In particular, $\calA$ must solve the \giso problem, because given isomorphic graphs with at most~$m$ edges we can always take $\eps > 0$ small enough so that $r(\eps) < 1/m$.\rnote{in particular, I don't think that formally speaking we need to require $r(0) = 0$}
\end{remark}

The analogous problem of robust satisfaction algorithms for CSPs over constant-size domains was introduced by Zwick~\cite{Zwi98a} and has proved to be very interesting.  Guruswami and Zhou~\cite{GZ11} conjectured that the CSPs which have efficient robust algorithms (subject to $\PTIME \neq \NP$) are precisely those of ``bounded width'' --- roughly speaking, those that do not encode equations over abelian groups.  This conjecture was recently confirmed by Barto and Kozik~\cite{BK12a}, following partial progress in~\cite{KOT+12,DK12a}.  The problem of \robustgiso was first introduced explicitly by Wu et al.~\cite{WYZV13}; their results included, among another things, a nontrivial efficient \robustgiso algorithm for trees.  It is an interesting question to find efficient \robustgiso algorithms for more classes of graphs on which \giso is solvable --- e.g., planar graphs, or bounded treewidth graphs.

The graph isomorphism seems to share some of the flavor of ``unbounded width'' CSPs such as \threexor; these CSPs have the property that special algebraic methods (namely, Gaussian elimination) are available on satisfiable instances, but these methods break down once there is a small amount of noise.  Indeed, the $2^{O(\sqrt{n \log n})}$-time algorithm for \giso is a somewhat peculiar mix of group theory and ``local methods'' (namely, Weisfeiler--Lehman).  Generalizing from \giso to \robustgiso seems like it might rule out applicability of group-theoretic methods, thereby making the problem much harder.  Our second main theorem in a sense confirms this.\rnote{Mention that this sort of resolves an open problem from \cite{WYZV13}??} Roughly speaking, it shows that \robustgiso is hard assuming it is hard to distinguish random \threexor instances from random instances with a planted solution and slight noise:
\begin{theorem}                                     \label{thm:r3xor-hardness}
    Assume Feige's \RXOR Hypothesis~\cite{Fei02}.  Then there is no polynomial-time algorithm for \robustgiso.  More precisely, there exists $\epsilon_0 > 0$, such that suppose there exists $\epsilon > 0$ and a $t(n)$-time algorithm which can distinguish $(1-\eps)$-isomorphic $n$-vertex, $m$-edge graph pairs from pairs which are not even $(1-\epsilon_0)$-isomorphic (where $m = O(n)$).  Then there is a  universal constant $\Delta \in \Z^+$ and a $t(O(n))$-time algorithm which outputs ``typical'' for almost all $n$-variable, $\Delta n$-constraint instances of the \threexor problem, yet which never outputs ``typical'' on instances which are $(1-\Theta(\eps))$-satisfiable.
\end{theorem}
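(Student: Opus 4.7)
The plan is to construct a reduction from \RXOR to \robustgiso. Given an $n$-variable, $\Delta n$-constraint instance $\Phi$ of \threexor, I will produce a pair of graphs $(G_\Phi, H_\Phi)$ on $O(n)$ vertices and $O(n)$ edges with the following two properties:
\begin{itemize}
    \item \textbf{(Completeness)} If $\Phi$ is $(1-\Theta(\eps))$-satisfiable, then $G_\Phi$ and $H_\Phi$ are $(1-\eps)$-isomorphic.
    \item \textbf{(Soundness)} If $\Phi$ is not even $(1-\eps_0)/\text{(satisfiability gap)}$-close to satisfiable, then $G_\Phi$ and $H_\Phi$ are not $(1-\eps_0)$-isomorphic.
\end{itemize}
Assuming these, feeding $(G_\Phi, H_\Phi)$ to the hypothesized distinguisher and reporting ``typical'' iff it answers ``not $(1-\eps_0)$-isomorphic'' yields a $t(O(n))$-time \RXOR distinguisher. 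The reduction will follow the Cai--F\"urer--Immerman template adapted in \pref{thm:lasserre-gap}: each constraint $x_i \oplus x_j \oplus x_k = b$ is represented by a copy of the F\"urer gadget whose internal parity records~$b$, and variable occurrences are identified via shared edges, so that a global isomorphism corresponds to choosing a flip bit for each variable, with the flip bits XORing to the correct right-hand side on every gadget iff the induced assignment satisfies $\Phi$.

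For completeness, given an assignment $x^*$ satisfying a $(1-\Theta(\eps))$-fraction of $\Phi$, I would define an explicit bijection $\pi^{x^*} \co V(G_\Phi) \to V(H_\Phi)$ which, inside each constraint gadget, performs the local twist indexed by $(x^*_i, x^*_j, x^*_k)$. On each gadget whose constraint is satisfied, $\pi^{x^*}$ acts as a perfect local isomorphism; on each unsatisfied gadget only $O(1)$ edges are broken. Since each gadget carries $O(1)$ edges and there are $O(n)$ of them, the total fraction of broken edges is $O(\eps)$, yielding the required $(1-\eps)$-isomorphism after absorbing constants.

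For soundness, suppose $(G_\Phi, H_\Phi)$ is $(1-\eps_0)$-isomorphic via some $\pi$. To remove the ``colors'' from the CFI construction without blowing up the vertex count, the gadgets will be individually tagged by attaching small, randomly chosen subgraphs drawn from the robustly asymmetric family constructed elsewhere in the paper. The robust asymmetry property guarantees that $\pi$ must map each tag subgraph to its counterpart in $H_\Phi$ up to an $O(\eps_0)$-fraction of exceptional edges; in particular, $\pi$ respects the gadget partition almost everywhere. On each correctly-aligned gadget, the local action of $\pi$ determines for every incident variable a flip value $\hat{x}_i \in \{0,1\}$, and inconsistencies across gadgets incident to the same variable, as well as mis-satisfied constraints, both charge to $\pi$'s edge loss. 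Taking majority vote to resolve disagreements produces an assignment $\hat{x}$ to $\Phi$; a careful charging argument then shows that $\hat{x}$ violates only an $O(\eps_0)$-fraction of constraints, contradicting the assumed distance from satisfiability once $\eps_0$ is set smaller than the constant implicit in the \RXOR hypothesis.

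The main obstacle will be executing the ``almost everywhere'' arguments in the soundness step cleanly: one must ensure that an $\eps_0$-fraction of edge loss in $\pi$ propagates to at most an $O(\eps_0)$-fraction of constraint violation in $\hat{x}$, which requires the tags to be simultaneously (i) robustly asymmetric with quantitatively strong enough bounds, so that the gadget partition survives small perturbations, and (ii) edge-sparse enough that the F\"urer-gadget edges dominate the denominator defining the isomorphism fraction. This is precisely why the robust asymmetry theorem of the paper (rather than the classical Erd\H{o}s--R\'enyi asymmetry, which is too fragile) is needed, and calibrating the tag parameters against the gadget parameters so that $\Delta$ and $\eps_0$ emerge as true universal constants independent of $\eps$ will demand the most careful bookkeeping.
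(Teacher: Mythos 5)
Your overall plan — reduce from random \threexor via CFI-style gadgets and use robust asymmetry to ``decolor'' the construction — is in the right spirit, but the specific mechanism you propose for the soundness step is not what the paper does and, as stated, does not quite work.

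The paper's target pair is $(G_{\calC}, G_{\homog{\calC}})$, where $G_{\homog{\calC}}$ is built from the \emph{same} collection of constraints with all right-hand sides zeroed out. The crucial consequence is that the two graphs share an identical constraint/variable incidence structure: the $3$-uniform hypergraph $H$ on the $n$ variables whose hyperedges are the constraint scopes is literally the same object in both. When a near-isomorphism $\pi$ between $G_{\calC}$ and $G_{\homog{\calC}}$ is shown to (mostly) respect the constraint-vertex/variable-vertex partition (via the $4$-clique argument in \pref{claim:soundness-correspondence}), it induces a permutation $\sigma$ of the variables which is a near-\emph{automorphism} of $H$. Because $\calC$ is a random \threexor instance, $H$ is a random $3$-uniform hypergraph, and the paper's robust asymmetry theorem (\pref{thm:randomhypergraphwhp}) then forces $\sigma$ to be near-identity. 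That is the whole role of robust asymmetry: it is applied once, to the random constraint hypergraph inherited for free from the random \threexor instance, not to any attached tags.

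Your proposal instead tags each gadget with its own ``small, randomly chosen subgraph drawn from the robustly asymmetric family'' so that $\pi$ is forced to map each gadget to its counterpart. This runs into two concrete problems. First, robust asymmetry as proved in the paper is a statement about a \emph{single} random graph having no nontrivial approximate automorphisms; it does not provide a family of $O(n)$ pairwise-distinguishable tags, which is a genuinely different (and stronger) property that you would have to prove separately. Second, there is no parameter regime that works for per-gadget tags: tags of constant size cannot be pairwise distinguishable among $\Theta(n)$ gadgets, while tags large enough to be distinct would dominate the edge count and destroy the completeness bound $1 - O(\eps)$, since the gadget edges would no longer dominate the denominator in the isomorphism fraction. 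The paper avoids this entirely by never attaching anything: the shared random constraint hypergraph \emph{is} the tag.

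If you replace your tagging step with the observation above — that $\pi$ induces a near-automorphism on the (shared, random) constraint hypergraph $H$, to which \pref{thm:randomhypergraphwhp} applies directly — then the rest of your decoding argument (local flip bits, majority vote, charging broken edges to violated constraints) lines up with the paper's \pref{lem:soundness-decoding}, and your completeness sketch matches \pref{lem:completeness}.
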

Here we refer to:
\paragraph{Feige's \RXOR Hypothesis.} \emph{For every fixed $\eps > 0$, $\Delta \in Z^+$, there is no polynomial time algorithm which on almost all \threexor instances with $n$ variables and $m = \Delta n$ constraints outputs ``typical'', but which never outputs ``typical'' on instances which an assignment satisfying at least $(1-\eps)m$ constraints.}
\ignore{
\paragraph{Feige's \RXOR Hypothesis.} \emph{For every fixed $\eps > 0$, for $\Delta \in Z^+$ a sufficiently large constant, there is no polynomial time algorithm which on almost all \threexor instances with $n$ variables and $m = \Delta n$ constraints outputs ``typical'', but which never outputs ``typical'' on instances with an assignment satisfying at least $(1-\eps)m$ constraints.}}

\begin{remark}
    The reader may think of the output ``typical'' as a \emph{certification} that the \threexor instance has no $(1-\eps)$-satisfying solution.  Note that with high probability the random \threexor instance will not even have a $.51$-satisfying solution.  Feige originally stated his hypothesis for the random 3SAT problem rather than the random 3XOR problem, but he showed the conjectures are equivalent. See also the work of Alekhnovich~\cite{Ale03}.
\end{remark}

Feige's \RXOR Hypothesis is a fairly well-believed conjecture.  The variation in which the XOR constraints may involve any number of variables (not just~$3$) is called {\sc LPN} (Learning Parities with Noise) and is believed to be hard even with any $m = \poly(n)$ constraints.  The further variation which has linear equations modulo a large prime rather than modulo~$2$ is called {\sc LWE} (Learning With Errors) and forms the basis for a large body of cryptography. (See~\cite{Reg05} for more on LPN and LWE.) The fastest known algorithm for solving Feige's \RXOR problem seems to be the $2^{O(n/\log n)}$-time algorithm of Blum, Kalai, and Wasserman~\cite{BKW03}.  Thus it's plausible that \robustgiso requires $2^{n^{1-o(1)}}$ time, which would make it a much more difficulty problem than \giso.

In fact, we are able to base \pref{thm:r3xor-hardness} on a larger class of hypotheses which generalize Feige's \RXOR Hypothesis.  This is inspired by the recent work of Chan~\cite{Cha13}, who showed approximation resistance for all predicates supported on a pairwise independent abelian subgroup.  In light of this, we show that if no polynomial-time algorithm can distinguish random instances from nearly satisfiable instances of \emph{any} such predicate, then \pref{thm:r3xor-hardness} holds.  Proving \pref{thm:r3xor-hardness} in this more general setting requires some additional tricks.  For example, we give a novel gadget construction whose only automorphisms correspond to group elements.
\ifnum\full=1
See \pref{sec:feige} for more details.
\fi
\ifnum\full=0
See the full version of this paper for more details.
\fi

\ifnum\full=0
We also mention some related literature on approximate graph isomorphism in the full version of this paper, but it is omitted here due to space constraints.
\fi
\ifnum\full=1
We close by mentioning some related literature on approximate graph isomorphism.  The problem of finding a vertex permutation which maximizes the number of edge overlaps (or minimizes the number of edge/nonedge overlaps) was perhaps first discussed by Arora, Frieze, and Kaplan~\cite{AFK02}.  They gave an additive PTAS in the case of dense graphs ($m = \Omega(n^2)$).  Arvind et al.~\cite{AKKV12} recently defined and studied several variants of the approximate graph isomorphism problem.  Some of their results concern the case in which $G$ and $H$ have noticeably different numbers of edges and one isn't ``punished'' for uncovered edges in~$H$.  This kind of variant is more like approximate \emph{subgraph} isomorphism, and is much harder. (E.g., when $G$ is a $k$-clique and $H$ is a general graph the problem is roughly equivalent to the notorious {\sc Densest-}$k${\sc Subgraph} problem.)  The result of theirs which is most relevant to the present work involves hardness of finding approximate isomorphisms in \emph{colored} graphs. In particular, Arvind eta al.\ prove the following:
\begin{theorem}                                     \label{thm:arvind-et-al}
    (\cite{AKKV12}.)  There is a linear-time reduction from 2XOR (modulo~$2$) instances $\calI$ to pairs of \emph{colored} graphs $G, H$ such that $G$ and $H$ are $\alpha$-isomorphic if and only if $\calI$ has a solution satisfying at least an $\alpha$-fraction of constraints.  In particular it is $\NP$-hard to approximate $\alpha$-isomorphism for colored graphs to a factor exceeding $\frac{11}{12}$ and {\sc UniqueGames}-hard to approximate it to a factor exceeding~$.878$ (by results of~\cite{TSSW00,Has01}, \cite{KKMO07,MOO10} respectively).
\end{theorem}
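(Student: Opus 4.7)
The plan is to give a very simple gadget reduction based on label-extended graphs of the 2XOR constraints. Given a 2XOR instance $\calI$ with variables $x_1,\ldots,x_n$ and constraints $x_{i_k}+x_{j_k} = c_k \pmod 2$ for $k = 1,\ldots,m$, construct colored graphs $G$ and $H$ as follows. In both $G$ and $H$, introduce a color class $\{v_i^0,v_i^1\}$ (resp.\ $\{w_i^0,w_i^1\}$) for each variable $i$, colored by the variable index so that every color-preserving bijection acts on each pair as either the identity or the swap. In $G$, ignore the $c_k$ values and for each constraint $k$ put in the two ``parallel'' edges $(v_{i_k}^0,v_{j_k}^0)$ and $(v_{i_k}^1,v_{j_k}^1)$. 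In $H$, put in the edges $(w_{i_k}^a,w_{j_k}^b)$ for the two pairs $(a,b)$ with $a+b \equiv c_k\pmod 2$. Each of $G,H$ has $2m$ edges, the reduction is clearly linear time, and both graphs use the same $n$ color classes of size $2$.

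Next I would analyze color-preserving bijections $\pi\co V(G)\to V(H)$. Any such $\pi$ is determined by a vector $\sigma\in\{0,1\}^n$, where $v_i^b \mapsto w_i^{b\oplus \sigma_i}$. A direct check shows that the edge $(v_{i_k}^0,v_{j_k}^0)$ maps to an edge of $H$ iff $\sigma_{i_k}+\sigma_{j_k}\equiv c_k\pmod 2$, and the same holds for $(v_{i_k}^1,v_{j_k}^1)$ (since XORing both endpoints with $1$ doesn't change the parity of the sum of indices). Hence the two edges coming from constraint $k$ are \emph{both} preserved if $\sigma$ satisfies $\calI_k$, and \emph{neither} is preserved otherwise. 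Thus the bijection preserves exactly $2\cdot|\{k : \sigma \text{ satisfies } \calI_k\}|$ of the $2m$ edges, giving the exact correspondence: $G$ and $H$ are $\alpha$-isomorphic as colored graphs iff some assignment to $\calI$ satisfies at least an $\alpha$-fraction of constraints.

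Finally, the two inapproximability statements are then immediate by invoking known hardness results for MAX-2-LIN($2$) as a black box. The $\frac{11}{12}$ NP-hardness bound follows from H{\aa}stad's optimal inapproximability results for MAX-2-LIN($2$) \cite{Has01} (via \cite{TSSW00}); the $0.878\ldots$ UG-hardness bound follows from the UG-hardness of MAX-CUT in \cite{KKMO07,MOO10}, since MAX-CUT is the special case of MAX-2-LIN($2$) in which every constraint has $c_k=1$ and hence corresponds to a sub-family of instances produced by our reduction.

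The only nontrivial step is really the edge-counting lemma for color-preserving bijections, but once the gadget is in place this is a one-line parity calculation. The main conceptual point — and where Arvind et al.'s construction gets its punch — is that fixing the color classes to size $2$ rigidly encodes a Boolean assignment into the isomorphism, so all soundness/completeness slack comes from constraint satisfaction rather than from the graph reduction itself, giving a tight, approximation-preserving reduction for free.
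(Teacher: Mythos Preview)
The paper does not give its own proof of this theorem; it is quoted from \cite{AKKV12} and stated without proof, with only the remark afterward that ``the theorem holds for colored graphs in which each color class contains at most~$4$ vertices.''  So there is nothing in the paper to compare your argument against directly.

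That said, your construction is correct and is essentially the $2$-variable analogue of the paper's own \threexor-to-\giso reduction (\pref{def:gadgetgraph} and \pref{def:3xorencoding}): $G$ is the label-extended graph of the homogenized instance and $H$ that of the actual instance, and color-preserving bijections are exactly $\Z_2$-assignments.  One small point worth noting: your construction uses color classes of size~$2$ per variable, which is tighter than the size-$4$ bound the paper attributes to \cite{AKKV12}, but it tacitly requires that no two constraints sit on the same variable pair --- otherwise repeated pairs would collapse to single edges in $G$ while producing distinct edges in $H$, breaking the exact $\alpha$-correspondence.  This is harmless for the inapproximability conclusions, since the H{\aa}stad and KKMO hardness results for MAX-$2$-LIN($2$) already hold on simple instances, but you should state the assumption.
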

In particular, the theorem holds for colored graphs in which each color class contains at most~$4$ vertices.  However, we do not see any way of eliminating the colors and getting the analogous inapproximability results for the usual \giso problem without using gadgets that would destroy the constant-factor gap.
\fi

\ifnum\full=0
\vspace{-2ex}
\fi
\subsection{Robust asymmetry of random graphs}   \label{sec:asymm-overview}

One of our main technical contributions is showing that random graphs are ``robustly asymmetric''. In doing so, we generalize the concept of an \emph{asymmetric graph}, which is a graph whose only automorphism is the trivial identity automorphism.  A line of research (see, e.g., \cite{ER63, Bol82, MW84, KSV02}) has shown that several distributions of random graphs produce asymmetric graphs with high probability.  In their well-known $\calG(n,p)$ model, Erd\H{o}s  and R\'enyi~\cite{ER63} proved that for $\frac{\ln n}{n} \leq p \leq 1- \frac{\ln n}{n}$, $\calG(n, p)$ is asymmetric with high probability.  If we instead consider a uniformly random $d$-regular $n$-vertex graph, the sequence of works~\cite{Bol82,MW84,KSV02} shows that we get an asymmetric graph with high probability for any $3 \leq d \leq n - 4$. In this work we will work with a third variant, the $\calG_{n, m}$ model, in which a graph is chosen uniformly at random from all simple graphs with~$n$ vertices and~$m$ edges.

Given a graph $G$ and a permutation $\pi$ over $V(G)$, \rnote{the following clashes with notation we have already defined...} we call $\pi$ an $\alpha$-automorphism if the application of $\pi$ on $G$ preserves at least an $\alpha$ fraction of the edges. A graph $G$ is $(\beta, \gamma)$-asymmetric if any $\gamma$-automorphism $\pi$ has more than a fraction of $(1-\beta)$ fixed points. Intuitively, when $\beta = 1/n$, $\gamma = 0$, the property is exactly the asymmetry property; when $\beta$ and $\gamma$ become larger, the property requires that any permutation that is far from identity is far from an automorphism for the graph. We encourage the reader to refer to \pref{sec:prelim} for the precise definitions.

In this paper, we show the following robust asymmetric property of $\calG_{n, m}$.
\begin{theorem}\label{thm:robust-asymmetry}
For large enough $n$, suppose that $m=cn$, where $10^4\leq c\leq n/10^{10}$. Let $\beta^*=\max\{e^{-c/6},\frac{1}{n}\}$.  With probability $(1 - n^{-15})$, for all $\beta$ such that $\beta^* \leq \beta \leq 1$, $\calG_{n,m}$ is $(\beta,\beta/240)$-asymmetric. 
\end{theorem}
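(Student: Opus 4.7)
The plan is a union bound over all candidate permutations of $[n]$. Since $\calG_{n,m}$ and $\calG(n,p)$ with $p = 2c/n$ differ by at most a polynomial factor in probability, I will work in the independent model $\calG(n,p)$: any event of probability $\le n^{-17}$ there has probability $\le n^{-15}$ in $\calG_{n,m}$. Since a permutation has an integer number of non-fixed points, it suffices to show that for each integer $k \ge \lceil \beta^* n \rceil$ and each $\pi$ with exactly $k$ non-fixed vertices, the count $Y_\pi := |\{e \in E(G) : \pi(e) \notin E(G)\}|$ exceeds $ck/240$; this corresponds exactly to $\pi$ failing to be a $\beta/240$-automorphism with $\beta = k/n$.

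For each fixed $\pi$, let $Z_e := \mathbf{1}[e \in E(G)]$ and $X_e := Z_e(1 - Z_{\pi(e)})$, so $Y_\pi = \sum_e X_e$. An elementary count of non-fixed pairs under $\pi$ gives $N_\pi \ge kn/3$ (for $k \le n/2$; a similar bound holds for larger $k$), hence $\mathbb{E}[Y_\pi] = N_\pi p(1-p) \ge ck/2$, using $c \le n/10^{10}$ to absorb the $(1-p)$ factor. For concentration I exploit the orbit structure: writing $Y_\pi = \sum_O Y_{\pi,O}$ where $Y_{\pi, O} := \sum_{e \in O} X_e$ and the sum ranges over orbits $O$ of $\pi$ acting on $\binom{V}{2}$, the contributions $Y_{\pi,O}$ depend on disjoint sets of $Z_e$'s and are mutually independent. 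A direct case analysis also shows $\mathrm{Cov}(X_e, X_{e'}) \le 0$ for $e \ne e'$: if $\{e, \pi(e)\}$ and $\{e', \pi(e')\}$ are disjoint then $X_e, X_{e'}$ are independent; otherwise the product $X_e X_{e'}$ picks up a factor of the form $Z(1-Z) \equiv 0$. Hence $\mathrm{Var}[Y_\pi] \le \mathbb{E}[Y_\pi]$, and a Bernstein-type inequality yields $\Pr[Y_\pi \le ck/240] \le \exp(-\alpha ck)$ for some universal $\alpha > 0$.

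The number of $\pi$ with exactly $k$ non-fixed vertices is at most $n^k$, so the union bound gives
\[
    \Pr[\exists \text{ bad } \pi] \;\le\; \sum_{k = \lceil \beta^* n \rceil}^{n} n^k \exp(-\alpha ck) \;=\; \sum_{k} \exp\bigl(k(\ln n - \alpha c)\bigr).
\]
I will verify this is $\le n^{-15}$ by a case analysis on $\beta^*$. When $c \ge 6 \ln n$, $\beta^* = 1/n$, and the exponent $\ln n - \alpha c$ is sufficiently negative that the geometric sum is $\le n^{-15}$ (after a separate tighter treatment of the first few small $k$'s). When $c < 6 \ln n$, $\beta^* = e^{-c/6}$, and the leading term has exponent $n e^{-c/6}(\ln n - \alpha c)$; here the constant $1/6$ in $\beta^*$ is chosen exactly so that, combined with the concentration exponent $\alpha c$, the series is still dominated by $n^{-15}$.

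The main obstacle I anticipate is obtaining a concentration constant $\alpha$ that is large enough for the union bound to close across all regimes --- especially the borderline $c \approx 6 \ln n$ together with small $k$, where a generic Bernstein bound is typically too lossy. For small $k$ one can bypass this with a specialized analysis: for $k = 2$, $Y_\pi$ equals $|N(u) \triangle N(v)|$ (up to $\pm 2$) for the two swapped vertices $u, v$, which in $\calG(n,p)$ is exactly $\mathrm{Binomial}(n-2, 2p(1-p))$ and admits a sharp lower-tail bound $\exp(-\Theta(c))$ via the KL divergence; analogous direct analyses for $k = 3, 4, \ldots$ combined with the orbit-based Bernstein bound for larger $k$ should close all gaps and yield the theorem with the stated constants.
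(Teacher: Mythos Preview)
Your union bound does not close in the regime the theorem must cover. The per-permutation tail bound $\Pr[Y_\pi \le ck/240] \le \exp(-\alpha c k)$ is essentially \emph{tight}: with probability roughly $(1-p)^{k(n-k)} \approx e^{-2ck}$ the graph has no edge touching the non-fixed set at all, in which case $Y_\pi = 0$. So no amount of sharpening the concentration inequality can push the exponent past $O(ck)$. Now fix $c = 10^4$ (allowed by the hypothesis) and let $n \to \infty$. Your sum becomes $\sum_k \exp\bigl(k(\ln n - \alpha c)\bigr)$ with $\ln n - \alpha c > 0$ eventually, so every term exceeds $1$. The lower bound $k \ge \beta^* n = n e^{-c/6}$ does not help: the leading term has exponent $n e^{-c/6}(\ln n - \alpha c) \to +\infty$. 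Your diagnosis that the difficulty is ``small $k$ near the borderline $c \approx 6\ln n$'' is off --- the failure is global in $k$ and occurs for any constant $c$, so the proposed ad hoc treatment of $k = 2,3,4,\ldots$ cannot rescue the argument.

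The paper avoids this obstruction by a two-stage argument. First it establishes a \emph{single} high-probability event $\mathcal{B}$ (via a bound on the number of low-degree vertices, not a union over permutations): for every $\pi$, the number of edges falling in size-$\ge 2$ orbits is $t \in [\beta m/60,\; 2\beta Z/10^9]$. Second, \emph{conditioned} on $t$ edges landing in this region of $L \sim \beta n^2$ possible pairs, a direct counting of configurations with few ``half-full'' bins gives a bound of order $(t/L)^{t/4} \sim (c/n)^{\Theta(\beta c n)}$. This is $n^{-\Theta(\beta c n)}$, which beats the $n^{2\beta n}$ count of permutations as soon as $c$ exceeds an absolute constant --- with no dependence on $\ln n$. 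The conditioning is precisely what converts the hopeless $e^{-\Theta(ck)}$ into something that scales with $\ln n$, by replacing the edge/non-edge randomness over all $\binom{n}{2}$ pairs with the much sparser combinatorics of $t$ edges among $L$ slots.
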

A couple of comments are in order. First, an $\exp(-O(c))$ lower bound on $\beta$ is necessary. This is because there are at least $\lfloor \exp(-O(c)) \cdot n \rfloor$ isolated vertices in $\calG_{n, m}$ with high probability. The permutations which only permute these isolated vertices are $1$-automorphisms. Therefore, with high probability, $\calG_{n, m}$ is not $(\exp(-\omega(c)), 0)$-asymmetric. Second, it is possible to extend our theorem to the $\calG(n, p)$ model  by showing that there exists a constant $C > 0$, such that for $\frac{C}{n} < p < \frac{1}{C}$, with high probability, $\calG(n, p)$ is $(\beta, \beta / 240)$-asymmetric for all $\beta \geq \max\{\exp(-\frac{p n}{C}), \frac{1}{n}\}$. Third, when $c \geq 6 \ln n$ (or, when $p \geq \frac{C \ln n}{ n}$ in the $\calG(n, p)$ model), we can let $\beta = \frac{1}{n}$ and obtain that $\calG_{n, m}$ ($\calG(n, p)$, respectively) is asymmetric with high probability -- a result in the flavor of \cite{ER63}. Finally, we do not work hard to optimize the constants in the theorem statement; we believe a more careful analysis would bring them down to more civilized numbers, but it is still interesting to explore the limits of these constants.

Now we briefly explain our proof techniques. Let us consider the case where $c$ is a big constant and $\beta = 1$, so that we only need to worry about the permutations without fixed points. We would like to show that, for every such permutation $\pi$, $\Pr_{G\sim \calG_{n, m}}[\pi \text{~is a $\frac{1}{240}$-automorphism for $G$}] \ll \frac{1}{n!}$, and therefore we can union bound over all such possible permutations. In order to do this, from all ${n \choose 2}$ possible edges, we construct $\Omega(n^2)$ disjoint pairs of edges $(e, e')$, which we will refer to as ``bins'', such that $\pi(e) = e'$. We call a bin ``half-full'' if exactly one edge in the pair is selected in G. It is easy to see that whenever there are more than $\frac{m}{120}$ half-full bins, $\pi$ cannot be a $\frac{1}{240}$-automorphism. At this point, we would like for $\Pr_G \left[\#\text{half-full bins}<\frac{m}{120}\right] \ll \frac{1}{n!}$, and this is easy to show. Unfortunately, this method does not work when $\beta = \frac{1}{2}$. To see why, let $\pi$ be a permutation with $\frac{n}{2}$ fixed points. The probability that every edge in $G$ has fixed points of $\pi$ for its endpoints is roughly $2^{-2m} = 2^{-O(n)}$. Therefore we have $\Pr_G[\pi\text{~is an automorphism for $G$}] \geq 2^{-O(n)}$, and this is not enough for the application of union bound (since there are more than $(n/2)! = 2^{\Omega(n\log n)}$ such permutations). A possible fix to this problem is: we first show that with high probability $(1 - n^{-\omega(1)})$, for every $\pi$ with $\frac{n}{2}$ fixed points, there are many edges of $G$ with at least one end point not fixed by $\pi$; then, conditioned on this event, we show the probability that a fixed $\pi$ is not a $\frac{1}{480}$-automorphism is small enough for the union bound method. The actual proof is more involved, and it is also technically challenging to work with $c$ as large as $\Omega(n)$, and $\beta$ as small as $\frac{1}{n}$.

Finally, for our application to the \gisolong problem, we need to extend \pref{thm:robust-asymmetry} to hypergraphs. More details on robust asymmetry of random hypergraphs can be found in \pref{sec:robust-asymmetry}.

\subsection{Organization of the paper}

In \pref{sec:prelim}, we introduce the notations and the SOS/Lasserre hierarchy. In \pref{sec:reduction-3xor}, we describe a reduction from \threexor to \giso. The completeness and soundness lemmas for reduction are proved in \pref{sec:completeness-main} and \pref{sec:soundness-main} respectively. In \pref{sec:robust-asymmetry}, we prove robust asymmetry property for random graphs and random hypergraphs. 
\ifnum\full=1
In \pref{sec:feige}, we generalize our \robustgiso hardness result to be based on a larger class of hypotheses. 
\fi
\ifnum\full=0
The generalization of our \robustgiso hardness result to be based on a larger class of hypotheses is deferred to the full version of this paper due to space constraints.
\fi
We discuss some of the future directions in \pref{sec:conclusions}.

\ifnum\full=0
\vspace{-2ex}
\fi
\paragraph{Proofs of the main theorems.} \pref{thm:lasserre-gap} follows from \pref{thm:sch08}, \pref{lem:sos-completeness} and \pref{lem:soundness}, by choosing $c = 10^6$. \pref{thm:r3xor-hardness} follows from \pref{lem:completeness} and \pref{lem:soundness}, by choosing $c = \max\{10^6, \Delta\}$.

\section{Preliminaries}\label{sec:prelim}
We will be working with undirected graphs and hypergraphs, both of which will be denoted by $G = (V, E)$.  Here, an undirected edge $e \in E$ is a set of $2$ vertices $\{i, j\}$ for graphs and a set of $k$ vertices $\{i_1, i_2, \ldots, i_k\}$ for $k$-uniform hypergraphs. When $G$ is an directed graph, we use $(i, j)$ to denote a directed edge. We also use the notation $V(G)$ to denote the vertex set of $G$, and $E(G)$ to denote the edge set of $G$. 

For any two undirected graphs (or hypergraphs) $G = (V(G), E(G))$ and $H = (V(H), E(H))$ with the same number of vertices, and for any bijection $\pi: V(G) \to V(H)$, let 
\ifnum\full=1
\[
\fi
\ifnum\full=0
$
\fi
\GI(G, H; \pi) = \frac{|\{e \in E(G) : \pi(e) \in E(H)\}|}{\max\{|E(G), E(H)|\}},
\ifnum\full=1
\]
\fi
\ifnum\full=0
$
\fi
where $\pi(e)$ is the edge obtained by applying $\pi$ on each vertex incident to $e$. Let
\ifnum\full=1
\[
\fi
\ifnum\full=0
$
\fi
\GI(G, H) = \max_{\pi: V(G) \to V(H)} \GI(G, H; \pi) .
\ifnum\full=1
\]
\fi
\ifnum\full=0
$
\fi
We say that an edge $e \in E(G)$ is \emph{satisfied} by $\pi$ if $\pi(e) \in E(H)$. We call $\pi$ an \emph{$\alpha$-isomorphism} for $G$ and $H$ if $\GI(G, H; \pi) \geq \alpha$, and we say $G$ and $H$ are \emph{$\alpha$-isomorphic} if $\GI(G, H) \geq \alpha$.
\ifnum\full=1

\fi
For any permutation $\pi: V(G) \to V(G)$, let 
\ifnum\full=1
\[
\fi
\ifnum\full=0
$
\fi
\aut(G; \pi) = \GI(G, G; \pi).
\ifnum\full=1
\]
\fi
\ifnum\full=0
$
\fi
We say that $\pi$ is an \emph{$\alpha$-automorphism} for $G$ if $\aut(G; \pi) \geq \alpha$. 

Given a permutation $\pi$ over the set $V$, an element $i \in V$ is a \emph{fixed point} of $\pi$ if $\pi(i) = i$.

\begin{definition}
A graph (possibly hypergraph) $G$ is \emph{$(\beta, \gamma)$-asymmetric} if, for any permutation $\pi$ on the vertex set of $G$ that has at most $(1 - \beta)$ fraction of the vertices as fixed points, we have $\aut(G; \pi) < 1 - \gamma$. 
\end{definition}

We extend the $\calG_{n, m}$ random graph model to hypergraphs as follows. Let $\calG^{(k)}_{n, m}$ be the uniform distribution over all ${{n \choose k} \choose m}$ simple $k$-uniform hypergraphs with $n$ vertices and $m$ edges.

A \threexor instance $\calC$ is a collection of equations $C_1, C_2, \dots, C_m$ over the variable set $\calX$. Each equation $C_i$ is of the form $x_{j_1} + x_{j_2} + x_{j_3} = b$ where $x_{j_1}, x_{j_2}, x_{j_3}$ are the variables from $\calX$, $b \in \Z_2$. Given an assignment $\tau : \calX \to \Z_2$, let $\val(\calC; \tau)$ be the fraction of equations in $\calC$ satisfied by $\tau$. Let $\val(\calC) = \max_{\tau: \calX \to \Z_2} \val(\calC; \tau)$. 

\ifnum\full=1
\subsection{SOS/Lasserre hierarchy}
\fi

\ifnum\full=0
\vspace{-2ex}
\paragraph{SOS/Lasserre hierarchy.}
\fi

\ifnum\full=1

One way to formulate the SOS/Lasserre hierarchy is via the pseudo-expectation view. We briefly recall the formulation as follows. More discussion about this view can be found in \cite{BBH+12}. 

We consider the feasibility of a system over $n$ variables $(x_1, x_2, \dots, x_n) \in \R^n$ with the following constraints: $P_i(x) = 0$ for $i = 1, 2, \dots, m_{{P}}$ and $Q_i(x) \geq 0$ for $j = 1, 2, \dots, m_{{Q}}$, where all the $P_i, Q_i$ polynomials are of degree at most $d$. For $r \geq d$, the degree-$r$ SOS/Lasserre hierarchy finds a pseudo-expectation operator $\pE[\cdot]$ defined on the space of real polynomials of degree at most $r$ over intermediates $x_1, x_2, \dots, x_n$ such that 
\begin{itemize}
\item $\pE[1] = 1$;
\item $\pE[\alpha p + \beta q] = \alpha \pE[p] + \beta \pE[q]$ for all real numbers $\alpha, \beta$ and all polynomials $p$ and $q$ of degree at most $r$;
\item $\pE[p^2] \geq 0$ for all polynomials $p$ of degree at most $r/2$;
\item $\pE[P_i \cdot q] = 0$ for all $i = 1, 2, \dots, m_P$ and all polynomials $q$ such that $P_i \cdot q$ is of degree at most $r$;
\item $\pE[Q_i \cdot p^2] \geq 0$ for all $i = 1, 2, \dots, m_Q$ and all polynomials $p$ such that $Q_i \cdot p^2$ is of degree at most $r$.
\end{itemize}
We call any operator $\pE[\cdot]$ a normalized linear operator if it has the first two properties listed above.

\paragraph{SOS/Lasserre hierarchy for \threexor.} Let $\calC$ be a \threexor instance on variable set $\calX$. The degree-$r$ SOS/Lasserre hierarchy for the natural integer programming for (the satisfiability of) $\calC$ is to find a normalized linear pseudo-expectation operator $\pE[\cdot]$ defined on the space of polynomials of degree at most $r$ over the indeterminates $(A[x \mapsto a])_{x \in \calX, a \in \Z_2}$ associated to~$\calC$, such that
    \begin{enumerate}
        \item$\pE[(A[x \mapsto a]^2 - A[x \mapsto a]) \cdot q] = 0$ for all $x \in \calX$, $a \in \Z_2$, and polynomials~$q$;
        \item $\pE[(A[x \mapsto 0] + A[x \mapsto 1] - 1) \cdot q] = 0$ for all $x \in \calX$ and polynomials~$q$;
        \item$\pE[(\littlesum_{\alpha_C \text{\ satisfying $C$}} A[x_1 \mapsto \alpha_C(x_1)]A[x_2 \mapsto \alpha_C(x_2)]A[x_3 \mapsto \alpha_C(x_3)] - 1) \cdot q] = 0$ for each $C \in \calC$ involving variables $x_1, x_2, x_3$ and all polynomials~$q$;
        \item$\pE[p^2] \geq 0$ for all polynomials~$p$.
    \end{enumerate}
We say there is degree-$r$ SOS refutation for the satisfiability of $\calC$ if the pseudo-expectation operator with properties listed above does not exist.

\paragraph{SOS/Lasserre hierarchy for \giso.}  Let $G = (V(G), E(G))$ and $H = (V(H), E(H))$ be two graphs such that $|V(G)| = |V(E)|$, $|E(G)| = |E(H)|$. The degree-$r$ SOS/Lasserre hierarchy for the natural integer programming the isomorphism problem between $G$ and $H$ is to find a normalized linear pseudo-expectation operator $\pE[\cdot]$ on the space of real polynomials of degree at most~$r$ over the indeterminates $(\Pi[u \mapsto v])_{u \in V(G), v \in V(H)}$ such that:
    \begin{enumerate}[a.]
        \item $\pE[(\Pi[u \mapsto v]^2 - \Pi[u \mapsto v]) \cdot q] = 0$ for all $u \in V(G)$, $v \in V(H)$, and polynomials~$q$;
        \item $\pE[(\sum_{v \in V(H)} \Pi[u \mapsto v] - 1) \cdot q] = 0$ for all $u \in V(G)$ and polynomials~$q$; and similarly, $\pE[(\sum_{u \in V(G)} \Pi[u \mapsto v] - 1) \cdot q] = 0$ for all $v \in V(H)$ and polynomials~$q$;
        \item \label{item:SOS-GISO} $\pE[(\sum_{\{u,u'\} \in E(G)} \sum_{v, v' : \{v,v'\} \in E(H)} \Pi[u \mapsto v] \Pi[u' \mapsto v'] -  |E(G)|) \cdot p^2] \geq 0$ for all polynomials $p$;
        \item $\pE[p^2] \geq 0$ for all polynomials~$p$.
    \end{enumerate}
We say there is degree-$r$ SOS refutation for the isomorphism between $G$ and $H$ if the pseudo-expectation operator with properties listed above does not exist.

\begin{remark}
It is equivalent to replace (\ref{item:SOS-GISO}) by ``$\pE[(\sum_{v, v' : \{v,v'\} \in E(H)} \Pi[u \mapsto v] \Pi[u' \mapsto v'] - 1) \cdot q] = 0$ for all $(u, u') \in E(G)$ and all polynomials $q$''.
\end{remark}

\fi

\ifnum\full=0
Due to space constraints, all the proofs directly related to SOS/Lasserre hierarchy are deferred to the full version of this paper. Therefore, here we also omit the definition of notations on SOS/Lasserre hierarchy.
\fi

\ifnum\full=1
\subsection{Random \threexor}
\fi

\ifnum\full=0
\vspace{-2ex}
\paragraph{Random \threexor.}
\fi

A random \threexor instance with $n$ variables and $m$ equations is sampled by choosing $m$ unordered 3-tuples of variables from all possible ${n \choose 3}$ ones, and making each 3-tuple $(x_{j_1}, x_{j_2}, x_{j_3})$ into a \threexor constraint $x_{j_1} + x_{j_2} + x_{j_3} = b$ with an independent random $b \in \Z_2$. 
\begin{theorem}\cite{Sch08}\label{thm:sch08}
For every constant $c > 1$, there is exists $\eta > 0$ such that with probability $1 - o(1)$, the satisfiability of a random \threexor instance\footnote{The random \threexor distribution used in \cite{Sch08} is slightly different, but the theorem still holds for our distribution.} with $n$ variables and $cn$ equations cannot be refuted by degree-$(\eta n)$ SOS/Lasserre hierarchy.
\end{theorem}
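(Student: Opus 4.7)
The plan is to invoke Schoenebeck's SOS lower bound~\cite{Sch08} essentially as a black box, and argue that the mild distributional difference is immaterial. Schoenebeck's argument splits into two stages. First, there is a purely deterministic implication: any \threexor instance whose constraint--variable incidence hypergraph has sufficient ``boundary expansion'' on subsets of size at most $\eta n$ admits a valid degree-$\eta n$ pseudo-expectation operator fooling the natural SOS relaxation of satisfiability. Second, he verifies that his particular random \threexor model (roughly, choosing each of the $\binom{n}{3}$ possible triples independently with probability $\Theta(c/n^2)$) yields such an expander with probability $1-o(1)$. Only the second stage depends on the sampling model, so our task reduces to reproving the expansion guarantee for the ``uniformly random $m$-element subset of $\binom{n}{3}$'' model used here.

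The cleanest route is a coupling. Sample each of the $\binom{n}{3}$ possible triples independently with probability $m/\binom{n}{3}$; by Chernoff the total number of sampled triples concentrates tightly around~$m$, and conditioned on there being exactly $m$ (distinct) triples the resulting distribution is identical to ours. Hence any ``bad expansion'' event under our distribution occurs with probability at most $(1+o(1))$ times its probability under the independent-inclusion model, and the latter is $o(1)$ by~\cite{Sch08}. (The random right-hand sides $b \in \Z_2$ are independent of the triple selection and are sampled identically in both models, so they play no role in this reduction.) Alternatively, one can redo the first-moment calculation directly: for each $s \leq \eta n$, the expected number of $s$-subsets of constraints whose variable supports together span fewer than $(1+\delta)s$ variables is, after standard binomial estimates of the form $\binom{a}{b} \leq (ea/b)^b$, bounded by a geometrically decaying series in~$s$ which is $o(1)$ once $\delta$ and $\eta$ are chosen small enough in terms of~$c$.

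With the expansion in hand, applying Schoenebeck's deterministic lemma produces the required pseudo-expectation, certifying that no degree-$\eta n$ SOS refutation of satisfiability exists for the sampled instance. The only genuine bookkeeping step is matching the specific expansion constants required by~\cite{Sch08}'s deterministic lemma to those produced by our first-moment bound (or coupling); since all relevant parameters depend only on~$c$, there is no real obstacle here, merely unpleasant indexing. The main conceptual point is simply that the ``distribution-to-expansion'' step in Schoenebeck's paper is robust to the minor change from independent sampling to sampling without replacement.
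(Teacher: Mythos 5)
The paper does not actually prove this statement: it cites~\cite{Sch08} directly and relegates the distributional discrepancy to a footnote (``The random \threexor distribution used in \cite{Sch08} is slightly different, but the theorem still holds for our distribution'') without elaboration. So your proposal is not competing with a proof in the paper; it is supplying a justification for that footnote, and the overall strategy is the right one: isolate the expansion property that Schoenebeck's deterministic pseudo-expectation construction requires, and re-verify that it holds with probability $1-o(1)$ under the uniform-$m$-subset model.

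One step is phrased too loosely. You claim that any ``bad expansion'' event has probability under the uniform model at most $(1+o(1))$ times its probability under the independent-inclusion model. The coupling as stated only gives
\[
\Pr_{\mathrm{unif}}[\text{bad}] \;=\; \Pr_{\mathrm{ind}}[\text{bad}\mid\text{exactly }m]\;\leq\;\frac{\Pr_{\mathrm{ind}}[\text{bad}]}{\Pr_{\mathrm{ind}}[\text{exactly }m]}\;=\;\Theta(\sqrt{m})\cdot\Pr_{\mathrm{ind}}[\text{bad}],
\]
a polynomial blowup, not $1+o(1)$. This is harmless only if Schoenebeck's expansion-failure bound is already $o(1/\sqrt{n})$, which happens to be the case but should not be taken for granted. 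The cleaner fix is to observe that the bad event (``some set of at most $\eta n$ constraints has deficient boundary'') is monotone increasing in the constraint set; then by the standard coupling of nested uniform subsets one gets $\Pr_{\mathrm{ind}}[\text{bad}\mid\text{exactly }m]\leq\Pr_{\mathrm{ind}}[\text{bad}\mid\text{at least }m]\leq\Pr_{\mathrm{ind}}[\text{bad}]/\Pr_{\mathrm{ind}}[\text{at least }m]=O(1)\cdot\Pr_{\mathrm{ind}}[\text{bad}]$, which suffices. Your alternative route, redoing the first-moment expansion calculation directly under sampling without replacement, also works and is perhaps more self-contained, since the $\binom{a}{b}\leq(ea/b)^b$ bounds go through unchanged. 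Either way the argument is sound; the only correction is to replace the unsupported $(1+o(1))$ factor with the monotonicity argument or drop the coupling in favor of the direct computation.
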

\ifnum\full=0
\vspace{-3ex}
\fi

\section{Reduction from \threexor to \giso} \label{sec:reduction-3xor}

\ifnum\full=0
\vspace{-1ex}
\fi

We define a slight variant of the basic gadget from~\cite{CFI92}:
\begin{definition}\label{def:gadgetgraph}
    Let $C$ be a \threexor constraint involving variables $x_1$, $x_2$, $x_3$.  The associated gadget graph $G_C$ consists of: $6$~``variable vertices'' with names ``$x_i \mapsto a$'' for each $i \in [3]$, $a \in \Z_2$; and, $4$~``constraint vertices'' with names ``$x_1 \mapsto a_1, x_2 \mapsto a_2, x_3 \mapsto a_3$'' for each partial assignment which satisfies the constraint~$C$.  Regarding edges, each pair of variable vertices $x_i \mapsto 0$, $x_i \mapsto 1$ is connected by an edge; the four constraint vertices are  connected by a clique; and, each constraint vertex~$\alpha$ is connected to the three variable vertices it is consistent with.
\end{definition}
    
\rnote{insert lemma about the automorphisms of this graph}

Now we describe how an entire instance of \threexor is encoded by a graph:
\begin{definition}\label{def:3xorencoding}
    Let $\calC$ be a collection of \threexor constraints over variable set~$\calX$.  We define the associated graph $G_{\calC}$ as follows:  For each constraint $C \in \calC$, the graph $G_{\calC}$ contains a copy of the gadget graph~$G_C$.  However we  \emph{identify} all of the variable vertices $x \mapsto a$ across $x \in \calX$, $a \in Z_2$ as well as the variable edges $(x \mapsto 0, x \mapsto 1)$.  The constraint vertices associated to~$C$, on the other hand, are left as-is, and will be named $\alpha_C$. We denote the set of vertices $\{x \mapsto 0, x \mapsto 1\}$ by $V_x$ for every variable $x$, denote the set of vertices corresponding to $C$ by $V_C$ for every variable $C$.
\end{definition}
\begin{remark}
    If $\calC$ is a \threexor instance with $n$ vertices and $m$ constraints then the graph $G_{\calC}$ has $N = 4m+2n$ vertices and $M = 18m + n$ edges.
\end{remark}

Finally, we introduce the following notation:
\begin{notation}
    Let $C$ be a \threexor constraint involving variables $x_i, x_j, x_k$.  We write $\homog{C}$ for its homogeneous version, $x_i+x_j+x_k = 0$.  Given a collection of \threexor constraints $\calC$ we write $\homog{\calC} = \{\homog{C} : C \in \calC\}$.
    
\paragraph{The reduction.} Given a collection of \threexor constraints $\calC$, the corresponding \giso instance i $(G_\calC, G_{\homog{\calC}})$.
\end{notation}

\ifnum\full=0
\vspace{-3ex}
\fi

\section{Completeness} \label{sec:completeness-main}
\ifnum\full=0
\vspace{-1ex}
\fi

\begin{lemma}[Completeness]\label{lem:completeness}
If $\calC$ is a \threexor instance such that $\val(\calC) \geq 1 - \epsilon$, then $\GI(G_\calC, G_{\homog{\calC}}) \geq 1 - 2 \epsilon/3$. 
\end{lemma}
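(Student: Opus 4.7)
The plan is to build an explicit bijection $\pi: V(G_\calC) \to V(G_{\homog{\calC}})$ from an assignment $\tau: \calX \to \Z_2$ achieving $\val(\calC; \tau) \geq 1 - \eps$, and then count preserved edges gadget by gadget. Throughout I will map variable vertices by the ``shift-by-$\tau$'' rule $\pi(x \mapsto a) = (x \mapsto a \oplus \tau(x))$; this is a bijection on the shared $V_x$ block and automatically preserves the variable edge $\{x \mapsto 0, x \mapsto 1\}$ for every $x \in \calX$ (contributing all $n$ variable edges as ``for free'').

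For each constraint $C \in \calC$ on variables $x_1,x_2,x_3$ with right-hand side $b$, I map the four constraint vertices $V_C$ to the four constraint vertices of $\homog{C}$ in $G_{\homog{\calC}}$ in a way that depends on whether $\tau$ satisfies $C$. Observe that for a partial assignment $(a_1,a_2,a_3)$ with $a_1+a_2+a_3 = b$, the shift $(a_1\oplus\tau(x_1), a_2\oplus\tau(x_2), a_3\oplus\tau(x_3))$ has coordinate sum $b \oplus \tau(x_1)\oplus\tau(x_2)\oplus\tau(x_3)$, which equals $0$ exactly when $\tau$ satisfies $C$. Hence if $C$ is satisfied by $\tau$, setting $\pi(x_1\mapsto a_1, x_2\mapsto a_2, x_3\mapsto a_3) = (x_1 \mapsto a_1\oplus\tau(x_1), x_2 \mapsto a_2\oplus\tau(x_2), x_3 \mapsto a_3\oplus\tau(x_3))$ is a valid bijection $V_C \to V_{\homog{C}}$. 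All six clique edges among $V_C$ are preserved (constraint vertices go to constraint vertices), and each of the twelve constraint-to-variable edges is preserved because the $i$-th coordinate of the image matches the image of $x_i \mapsto a_i$. So satisfied gadgets preserve all $18$ internal edges.

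If $\tau$ does \emph{not} satisfy $C$, the naive shift lands outside $V_{\homog{C}}$, so I correct it by flipping one distinguished coordinate, say the first: $\pi(x_1\mapsto a_1, x_2\mapsto a_2, x_3\mapsto a_3) = (x_1 \mapsto a_1\oplus\tau(x_1)\oplus 1, x_2 \mapsto a_2\oplus\tau(x_2), x_3 \mapsto a_3\oplus\tau(x_3))$. The coordinate sum becomes $b \oplus \tau(x_1)\oplus\tau(x_2)\oplus\tau(x_3) \oplus 1 = 0$, so this is again a bijection $V_C \to V_{\homog{C}}$. The six intra-$V_C$ clique edges are still preserved. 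Among the twelve constraint-to-variable edges, the images of the second and third coordinates match the shift of $x_2 \mapsto a_2$ and $x_3 \mapsto a_3$, so $2$ of the $3$ edges incident to each constraint vertex survive; only the edge to $x_1 \mapsto a_1$ breaks, losing $4$ edges total in the gadget.

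Summing: every satisfied constraint contributes $18$ preserved internal edges, every unsatisfied constraint contributes $14$, and all $n$ variable edges are preserved. Hence the number of preserved edges is at least $18m + n - 4\eps m$, so
\[
\GI(G_\calC, G_{\homog{\calC}}; \pi) \geq \frac{18m + n - 4\eps m}{18m + n} \geq 1 - \frac{4\eps m}{18m} = 1 - \frac{2\eps}{9} \geq 1 - \frac{2\eps}{3},
\]
which yields the claim with slack. There is no real ``hard part'': the only thing to check carefully is that a single coordinate flip suffices to turn the shift into a valid bijection $V_C \to V_{\homog{C}}$ on unsatisfied gadgets, and that doing so costs exactly $4$ edges per such gadget.
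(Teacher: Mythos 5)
Your proof is correct and follows the same basic strategy as the paper's: map variable vertices by the ``shift-by-$\tau$'' rule, and map constraint vertices of satisfied gadgets by the corresponding coordinatewise shift. The one place you diverge is in the handling of unsatisfied constraints. The paper simply maps $V_{C_i}$ to $V_{\homog{C_i}}$ by an arbitrary bijection there, observing that at most all $12$ constraint--variable edges of the gadget can be lost, for a total of $12\eps m$ lost edges and a bound of $1 - 2\eps/3$. You instead note that the shift composed with a single-coordinate flip is still a bijection $V_C \to V_{\homog{C}}$, and it preserves $8$ of the $12$ constraint--variable edges, losing only $4$ per unsatisfied gadget; this gives the sharper $1 - 2\eps/9$. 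Your ``one flip'' check is correct: flipping the first coordinate shifts the coordinate sum by $1$, restoring it to $0$ precisely when $C$ was violated, and the resulting map stays affine so all clique edges and the second- and third-coordinate edges survive. The refinement buys a tighter constant at essentially no extra cost, but both proofs are equally elementary and the paper's looser ``at most $12$'' bound was already enough for the stated lemma.
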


\ifnum\full=1

\begin{proof}
Let $\tau$ be an assignment to the variables in $\calC$ such that $\val(\calI; \tau) \geq 1 - \epsilon$. Now we define a bijection $\pi$ from the vertices in $G_\calC$ to the ones in $G_{\homog{\calC}}$ as follows. 

For each variable vertex $x_j \mapsto b$, let $\pi(x_j \mapsto b) = x_j \mapsto b + \tau(x_j)$. For any equation vertex $\alpha_{C_i}$, if $C_i$ is not satisfied by $\tau$, map it to an arbitrary vertex in $V_{\homog{C_i}}$. If $C_i$ is satisfied by $\tau$, let us suppose that $C_i : x_{j_1} + x_{j_2} + x_{j_3} = b$, let $\alpha'$ be an assignment such that $\alpha'(x_{j_t}) = \alpha(x_{j_t}) + \tau(x_{j_t})$ for all $t \in \{1, 2, 3\}$. Observe that 
\[\alpha'(x_{j_1}) + \alpha'(x_{j_2}) + \alpha'(x_{j_3}) = (\alpha(x_{j_1}) + \alpha(x_{j_2}) + \alpha(x_{j_3})) + (\tau(x_{j_1}) + \tau(x_{j_2}) + \tau(x_{j_3})) = b + b = 0. \]
Therefore $\alpha'_{\homog{C_i}}$ is a vertex in $G_{\homog{\calC}}$. We let $\pi$ map $\alpha_{C_i}$ to $\alpha'_{\homog{C_i}}$. 

It is straightforward to check that all the edges between equation vertices and between variable vertices are satisfied. Now we consider an edge between a equation vertex and a variable vertex, namely between $\alpha_{C_i}$ and $x_j \mapsto b$ where $x_j$ is an variable in equation $C_i$ and $\alpha(x_j) = b$. We show that the edge is satisfied by $\pi$ whenever $C_i$ is satisfied by $\tau$. Let $\alpha'$ and $b'$ be such that $\pi(\alpha_{C_i}) = \alpha'_{\homog{C_i}}$, $\pi(x_j \mapsto b) = x_j \mapsto b'$. Observe that
\[
\alpha'(x_j) = \alpha(x_j) + \tau(x_j) = b + \tau(x_j) = b',
\] 
and this implies that there is an edge between  $\alpha'_{C_i}$ and $x_j \mapsto b'$.  

We see that the only edges in $G_{\calC}$ which might not be satisfied by $\pi$ are the ones between equation vertices and variable vertices where the corresponding equation vertex is not satisfied by $\tau$. For each equation not satisfied, there are at most $12$ such edges. Therefore there are at most $12\epsilon m$ edges not satisfied. We have 
\[
\GI(G_\calC, G_{\homog{\calC}}) \geq \GI(G_\calC, G_{\homog{\calC}}; \pi) \geq \frac{M - 12\epsilon m}{M} \geq 1 - \frac{2}{3} \epsilon .\qedhere
\]
\end{proof}

\subsection{SOS completeness}
\fi
\begin{lemma}[SOS completeness]\label{lem:sos-completeness}
    Let $\calC$ be a \threexor instance on variable set $\calX$ and suppose that every SOS refutation of~$\calC$ requires  degree exceeding~$r$.  Then every SOS refutation of the statement ``$G_{\calC}$ and $G_{\homog{\calC}}$ are isomorphic'' requires degree exceeding~$r/3$.
\end{lemma}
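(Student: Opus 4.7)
\textbf{Proof plan for Lemma~\ref{lem:sos-completeness}.} The plan is to prove the contrapositive: starting from a valid degree-$r$ pseudo-expectation $\pE_{\calC}$ for the \threexor SOS hierarchy on~$\calC$, I will construct a valid degree-$r/3$ pseudo-expectation $\pE_{\GI}$ for the SOS relaxation of ``$G_{\calC} \cong G_{\homog{\calC}}$.'' The construction is by a substitution (ring homomorphism) $\phi$ from polynomials in the $\Pi[u \mapsto v]$ indeterminates into polynomials in the $A[x \mapsto a]$ indeterminates, guided by the bijection used in the completeness proof of Lemma~\ref{lem:completeness}: given an assignment $\tau$, a variable vertex $x \mapsto b$ is sent to $x \mapsto b + \tau(x)$ and a satisfying partial assignment $\alpha$ of constraint $C$ is sent to $\alpha + \tau$ restricted to $\homog{C}$. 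Translated into the pseudo-expectation world, set
\[
\phi(\Pi[x \mapsto b,\ x' \mapsto b']) \;=\; \begin{cases} A[x \mapsto b+b'] & \text{if } x=x',\\ 0 & \text{otherwise},\end{cases}
\qquad
\phi(\Pi[\alpha_C,\ \beta_{\homog{C'}}]) \;=\; \begin{cases} \prod_{i=1}^{3} A[x_{j_i} \mapsto \alpha(x_{j_i})+\beta(x_{j_i})] & C=C',\\ 0 & C\neq C',\end{cases}
\]
and $\phi(\Pi[u \mapsto v])=0$ on every remaining (constraint-to-variable) pair. Extend $\phi$ multiplicatively and linearly; each $\Pi$ indeterminate becomes a monomial of degree at most $3$ in the $A$ variables, so a polynomial of degree $\leq r/3$ in the $\Pi$'s is mapped to a polynomial of degree $\leq r$ in the $A$'s. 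Then define $\pE_{\GI}[P] := \pE_{\calC}[\phi(P)]$.

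Next I will verify the axioms of the \giso SOS relaxation one by one. Normalization, linearity, and positivity ($\pE_{\GI}[p^2]=\pE_{\calC}[\phi(p)^2]\ge 0$, valid for $\deg p \leq r/6$) are immediate because $\phi$ is a ring homomorphism and $\phi(p)$ has degree at most $r/2$. Booleanness of each $\Pi[u\mapsto v]$ reduces via $\phi$ to idempotence of monomials in the $A[x\mapsto a]$'s, which follows from the \threexor booleanness axiom $\pE_{\calC}[(A[x\mapsto a]^2 - A[x\mapsto a])\cdot q]=0$. The row/column sum axiom at variable vertices reduces to $\pE_{\calC}[A[x\mapsto 0]+A[x\mapsto 1] - 1]=0$, which is the second \threexor axiom. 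At a constraint vertex $\alpha_C$, summing $\phi(\Pi[\alpha_C,\beta_{\homog{C}}])$ over $\beta \models \homog{C}$ and substituting $\gamma := \alpha + \beta$ turns the sum into $\sum_{\gamma \models C} \prod_{i} A[x_{j_i}\mapsto \gamma(x_{j_i})]$, which equals $1$ by the third \threexor axiom; this is the key identity and I expect it to be the main place where the structure of the gadget is used.

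The remaining edge axiom (iii) is the most delicate step. I will check it edge-type by edge-type, using the ``ordered pair'' formulation mentioned in the Remark. For a variable edge $(x\mapsto 0,\ x\mapsto 1)$, only the two variable edges on~$x$ in $G_{\homog{\calC}}$ contribute, and the sum simplifies to $A[x\mapsto 0]^2 + A[x\mapsto 1]^2 = A[x\mapsto 0] + A[x\mapsto 1]$, which equals $1$ modulo $\pE_{\calC}$. For a clique edge $(\alpha_C,\beta_C)$ inside the gadget of $C$, the nonzero contributions come from pairs $(\gamma_{\homog{C}},\delta_{\homog{C}})$ with $\delta=\alpha+\beta+\gamma$; the product of the two $\phi(\Pi)$ terms collapses (idempotence of each $A$) to $\prod_i A[x_{j_i}\mapsto \alpha(x_{j_i})+\gamma(x_{j_i})]$, and summing over $\gamma \models \homog{C}$ together with the substitution $\gamma' := \alpha+\gamma$ reduces to $\sum_{\gamma' \models C}\prod_i A[x_{j_i}\mapsto \gamma'(x_{j_i})]=1$. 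For a constraint-to-variable edge $(\alpha_C,\ x_j\mapsto \alpha(x_j))$, the duplicated $A[x_j\mapsto \cdot]$ factor is killed by idempotence and the same change of variable yields $1$ again. Hence all edge axioms hold, and $\pE_{\GI}$ is a valid degree-$r/3$ pseudo-expectation, contradicting the assumed refutation. The main obstacle is purely bookkeeping: making sure the change of variables $\gamma \mapsto \alpha+\gamma$ is carried out cleanly so that every invocation of axiom~(3) for \threexor lines up on the set of satisfying assignments of the original (non-homogeneous) constraint.
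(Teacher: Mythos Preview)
Your proposal is correct and is essentially the same argument as the paper's: define each $\Pi$-indeterminate as a degree-at-most-$3$ monomial in the $A$'s (your $b+b'$ is the paper's $a-b$, identical over $\Z_2$), pull back $\pE_{\GI}$ through this substitution, and verify axioms one by one. The only noticeable difference is in the clique-edge check: the paper factors the double sum as $\bigl(\sum_{\beta}\cdots\bigr)\bigl(\sum_{\beta'}\cdots\bigr)$ and invokes axiom~(iii) twice, whereas you argue that the cross terms with $\delta\neq\alpha+\beta+\gamma$ vanish (this uses not only idempotence but also $A[x\mapsto 0]A[x\mapsto 1]=0$, i.e.\ axioms~(i) and~(ii) together) and then invoke axiom~(iii) once---a minor variation that yields the same conclusion.
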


\ifnum\full=1
\begin{proof}
    Since $\calC$ cannot be refuted in degree~$r$, there is a pseudo-expectation operator $\pE_{\calC}[\cdot]$ defined on the space of real polynomials of degree at most~$r$ over the indeterminates $(A[x \mapsto a])_{x \in \calX, a \in \Z_2}$ associated to~$\calC$.  This $\pE_{\calC}[\cdot]$ is normalized, linear, and satisfies:
    \begin{enumerate}[i.]
        \item \label{pEC:zero-one} $\pE_{\calC}[(A[x \mapsto a]^2 - A[x \mapsto a]) \cdot q] = 0$ for all $x \in \calX$, $a \in \Z_2$, and polynomials~$q$;
        \item \label{pEC:assignment} $\pE_{\calC}[(A[x \mapsto 0] + A[x \mapsto 1] - 1) \cdot q] = 0$ for all $x \in \calX$ and polynomials~$q$;
        \item \label{pEC:constraints} $\pE_{\calC}[(\littlesum_{\alpha_C \text{\ satisfying $C$}} A[x_1 \mapsto \alpha_C(x_1)]A[x_2 \mapsto \alpha_C(x_2)]A[x_3 \mapsto \alpha_C(x_3)] - 1) \cdot q] = 0$ for each $C \in \calC$ involving variables $x_1, x_2, x_3$ and all polynomials~$q$;
        \item \label{pEC:SOS} $\pE_{\calC}[p^2] \geq 0$ for all polynomials~$p$.
    \end{enumerate}

    Our task is to define a normalized linear pseudo-expectation operator $\pE_{\calG}[\cdot]$ on the space of real polynomials of degree at most~$r/3$ over the indeterminates $(\Pi[u \mapsto v])_{u \in V(G_{\calC}), v \in V(G_{\homog{\calC}})}$ such that:
    \begin{enumerate}[I.]
        \item \label{pEG:zero-one} $\pE_{\calG}[(\Pi[u \mapsto v]^2 - \Pi[u \mapsto v]) \cdot q] = 0$ for all $u \in V(G_{\calC})$, $v \in V(G_{\homog{\calC}})$, and polynomials~$q$;
        \item \label{pEG:injection} $\pE_{\calG}[(\sum_{v \in V(G_{\homog{\calC}})} \Pi[u \mapsto v] - 1) \cdot q] = 0$ for all $u \in V(G_{\calC})$ and polynomials~$q$; and similarly, $\pE_{\calG}[(\sum_{u \in V(G_{{\calC}})} \Pi[u \mapsto v] - 1) \cdot q] = 0$ for all $v \in V(G_{\homog{\calC}})$ and polynomials~$q$;
        \item \label{pEG:isomorphism} $\pE_{\calG}[(\sum_{\{u,u'\} \in E(G_{\calC})} \sum_{v, v' : \{v,v'\} \in E(G_{\homog{\calC}})} \Pi[u \mapsto v] \Pi[u' \mapsto v'] -  M) \cdot p^2] \geq 0$ for all polynomials $p$;
        \item \label{pEG:SOS} $\pE_{\calG}[p^2] \geq 0$ for all polynomials~$p$.
    \end{enumerate}
    Here $M$ denotes the number of edges in $G_{{\calC}}$ (and also in $G_{\homog{\calC}}$).\\

    The idea is to formally define each indeterminate $\Pi[u \mapsto v]$ as a certain degree-$3$ multilinear polynomial of the indeterminates $A[x \mapsto a]$.  Then $\pE_{\calG}[\cdot]$ is automatically defined in terms of~$\pE_{\calC}[\cdot]$ for all polynomials of degree at most~$r/3$.  The natural definition is as follows:
    \begin{enumerate}
        \item Let $x \in \calX$ and $a \in \Z_2$.  We define $\Pi[(x \mapsto a) \mapsto (x \mapsto b)] = A[x \mapsto (a-b)]$.
        \item Let $C \in \calC$, let $\alpha_C = (x_1 \mapsto a_1, x_2 \mapsto a_2, x_3 \mapsto a_3)$ be constraint vertex in $G_{\calC}$ corresponding to~$C$, and let $\beta_{\homog{C}} = (x_1 \mapsto b_1, x_2 \mapsto b_2, x_3 \mapsto b_3)$ be a constraint vertex in $G_{\homog{\calC}}$ corresponding to $\homog{C}$. We define ${\Pi[\alpha_C \mapsto \beta_{\homog{C}}]}$ to be the following degree-$3$ monomial: 
        \[{A[x_1 \mapsto (a_1 - b_1)] A[x_2 \mapsto (a_2 - b_2)] A[x_3 \mapsto (a_3 - b_3)]}. \]
        \item All other indeterminates $\Pi[u \mapsto v]$ are formally defined to be~$0$.
    \end{enumerate}

It is clear that $\pE_{\calG}[\cdot]$ is normalized and linear by the same property of $\pE_{\calC}[\cdot]$. It remains to show that the induced pseudo-expectation operator $\pE_{\calG}[\cdot]$ satisfies \eqref{pEG:zero-one}--\eqref{pEG:SOS} using the fact that $\pE_{\calC}[\cdot]$ satisfies \eqref{pEC:zero-one}--\eqref{pEC:SOS}.  Most of these are easy; for example, the implication \eqref{pEC:SOS}~$\Rightarrow$~\eqref{pEG:SOS} is immediate.  Almost as easy is that \eqref{pEC:zero-one}~$\Rightarrow$~\eqref{pEG:zero-one} and that \eqref{pEC:zero-one},~\eqref{pEC:assignment}~$\Rightarrow$~\eqref{pEG:injection}. We illustrate some of these implications, leaving the rest to the reader. For example, let's verify~\eqref{pEG:zero-one} for indeterminates of type $\Pi[\alpha_C \mapsto \beta_{\homog{C}}]$. For brevity we'll write $\Pi[\alpha_C \mapsto \beta_{\homog{C}}]$ as $A_1A_2A_3$.  Now for any polynomial~$q$ over the $\Pi$'s,
    \begin{align*}
        &\phantom{{}={}} \pE_{\calG}[(\Pi[\alpha_C \mapsto \beta_{\homog{C}}]^2 - \Pi[\alpha_C \mapsto \beta_{\homog{C}}]) \cdot q] \\
        &= \pE_{\calG}[(A_1^2A_2^2A_3^2 - A_1A_2A_3)\cdot q'] \tag{for some polynomial $q'$ over the $A$'s} \\
        &= \pE_{\calG}[(A_1^2 - A_1)A_2^2A_3^2\cdot q'] + \pE_{\calG}[A_1(A_2^2-A_2)A_3^2\cdot q'] + \pE_{\calG}[A_1A_2(A_3^2-A_3)\cdot q']\\
        &= 0 \tag{by~\eqref{pEC:zero-one}.}
    \end{align*}
    And let's verify~\eqref{pEG:injection} when $u$ is a variable vertex $x \mapsto a$:
    \begin{align*}
        &\phantom{{}={}} \pE_{\calC}\left[\left(\littlesum_{v \in V(G_{\homog{\calC}})} \Pi[(x \mapsto a) \mapsto v] - 1\right) \cdot q\right] \\
        &= \pE_{\calC}[(A[x \mapsto a-0] + A[x \mapsto a-1] - 1) \cdot q'] \tag{all other $\Pi$'s are $0$} \\
        &= 0 \tag{by \eqref{pEC:assignment} }.
    \end{align*}

    The main effort is to establish~\eqref{pEG:isomorphism}.  In fact we will show
    \begin{equation}    \label{eqn:edge-to-edge}
        \pE_{\calC}\left[\left(\littlesum_{v, v' : \{v,v'\} \in E(G_{\homog{\calC}})} \Pi[u \mapsto v] \Pi[u' \mapsto v'] -  1\right) \cdot p^2\right] = 0
    \end{equation}
    for all edges $\{u,u'\} \in E(G_{\calC})$ and all~$p$, whence~\eqref{pEG:isomorphism} follows by summing.  We will omit the (easy) verification of this for the edges $(x \mapsto 0, x \mapsto 1)$.  Instead we will first verify that~\eqref{eqn:edge-to-edge} holds for a typical clique edge associated to constraint~$C$, say $(\alpha_C, \alpha'_C)$, on variables $x_1, x_2, x_3$. Only the indeterminates of corresponding constraints, say $\Pi[\alpha_C \mapsto \beta_{\homog{C}}]$, are nonzero.  Writing $A_i[\alpha - \beta] = A[x_i \mapsto \alpha_C(x_i) - \beta_{\homog{C}}(x_i)]$ for brevity (and similarly with primes), the quantity in~\eqref{eqn:edge-to-edge} is
    \begin{align}
        &\phantom{{}={}} \pE_{\calC}\Bigl[\Bigl(\sum_{\substack{\beta_{\homog{C}}, \beta'_{\homog{C}} \\ \text{satisfying $\homog{C}$}}} A_1[\alpha - \beta]A_2[\alpha - \beta]A_3[\alpha - \beta]A_1[\alpha' - \beta']A_2[\alpha' - \beta']A_3[\alpha' - \beta'] -  1\Bigr) \cdot {p'}^2\Bigr] \nonumber\\
        &= \pE_{\calC}\Bigl[\Bigl(\bigl(\littlesum_{\beta_{\homog{C}}} A_1[\alpha - \beta]A_2[\alpha - \beta]A_3[\alpha - \beta]\bigr)\bigl(\littlesum_{\beta'_{\homog{C}}}A_1[\alpha' - \beta']A_2[\alpha' - \beta']A_3[\alpha' - \beta']\bigr) -  1\Bigr) \cdot {p'}^2\Bigr]. \label{eqn:i'm-0}
    \end{align}
    Now for fixed $\alpha_C$, as $\beta_{\homog{C}}$ ranges over all satisfying assignments to $\homog{C}$, the assignment $\alpha_C - \beta_{\homog{C}}$ ranges over all satisfying assignments to~$C$.  The analogous statement holds also for~$\alpha'_C$.  It's now straightforward to see that the vanishing of~\eqref{eqn:i'm-0} follows from~\eqref{pEC:constraints}.\rnote{ (subtract and add~$1$ from $\littlesum_{\beta_{\homog{C}}} A_1[\alpha - \beta]A_2[\alpha - \beta]A_3[\alpha - \beta]$ and expand; repeat with $\alpha'$ and $\beta'$).}  
    
    Our final task is to verify~\eqref{eqn:edge-to-edge} also for edges between variable vertices and constraint vertices.  Fix a typical such edge, say  one connecting $x_1 \mapsto a_1$ to $\alpha_C$.  (We'll use the same notation as before for~$\alpha_C$; in particular, note that we must have $\alpha_C(x_1) = a_1$.) Now in this case, the quantity in~\eqref{eqn:edge-to-edge} is
    \begin{align}
        &\phantom{{}={}} \pE_{\calC}\Bigl[\Bigl(\sum_{\substack{b \in \Z_2, \\ \beta_{\homog{C}}\text{ satisfying $\homog{C}$}}} A[x_1 \mapsto (a_1-b)] A_1[\alpha - \beta]A_2[\alpha - \beta]A_3[\alpha - \beta] -  1\Bigr) \cdot {p'}^2\Bigr] \nonumber\\
        &= \pE_{\calC}\Bigl[\Bigl(\bigl(\littlesum_{c \in \Z_2} A[x_1 \mapsto c]\bigr)\bigl(\littlesum_{\beta_{\homog{C}}} A_1[\alpha - \beta]A_2[\alpha - \beta]A_3[\alpha - \beta]\bigr) -  1\Bigr) \cdot {p'}^2\Bigr]. \label{eqn:i'm-also-0}
    \end{align}
    Again, the fact that~\eqref{eqn:i'm-also-0} vanishes now easily follows from~\eqref{pEC:assignment},~\eqref{pEC:constraints}.
\end{proof}

\fi

\ifnum\full=0
The proofs are deferred to the full version of this paper due to space constraints. 
\fi

\ifnum\full=0
\vspace{-2ex}
\fi

\section{Soundness} \label{sec:soundness-main}

\ifnum\full=1
In this section, we prove the following soundness lemma.
\fi

\begin{lemma}[Soundness]\label{lem:soundness}
Let $\calC = \{C_1, C_2, \dots, C_m\}$ be a random \threexor instance with $n$ variables and $m = cn$ equations where $c \geq 10^10$. With probability $1 - o(1)$, we have
\ifnum\full=1
\[
\fi
\ifnum\full=0
$
\fi
\GI(G_C, G_{\homog{C}}) < 1 - \frac{1}{95 c^2} . 
\ifnum\full=1
\]
\fi
\ifnum\full=0
$
\fi
\end{lemma}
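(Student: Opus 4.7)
The plan is to convert any bijection $\pi \colon V(G_\calC) \to V(G_{\homog{\calC}})$ preserving at least $(1-\delta)M$ edges (where $M = 18m + n$) into an assignment $\tau \colon \calX \to \Z_2$ satisfying most constraints in $\calC$, and then contradict the standard bound that a random \threexor instance has $\val(\calC) \leq \frac{1}{2} + O(1/\sqrt{c})$ with high probability. Concretely, I would show $\val(\calC;\tau) \geq 1 - O(c\delta)$, from which $\delta = \Omega(1/c)$ follows, easily exceeding $1/(95 c^2)$ for $c \geq 10^{10}$.

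The first (and main) step is to \emph{normalize} $\pi$ to respect the block structure of the Cai--F\"urer--Immerman construction. A variable vertex has degree $1 + d_x$ (where $d_x$ is the number of constraints in $\calC$ involving $x$, concentrated around $3c$), while each constraint vertex has degree exactly $6$. Using this degree profile together with the local clique structure around constraint vertices, I would argue by a charge-counting argument that violations of ``gadget-faithfulness''---mapping a variable vertex into a constraint-vertex block, or splitting a single $V_{C_i}$ across multiple target gadgets---each cost $\Omega(1)$ preserved edges. Hence all but an $O(\delta)$-fraction of constraints $C_i$ satisfy $\pi(V_{C_i}) = V_{\homog{C_{\rho(i)}}}$ for a partial constraint permutation $\rho$, and similarly $\pi(V_x) = V_{\sigma(x)}$ for a partial variable permutation $\sigma$. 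A local analysis of a single gadget then forces $\sigma$ to map the hyperedge of each gadget-faithful $C_i$ onto the hyperedge of $C_{\rho(i)}$; equivalently, $\sigma$ is a $(1 - O(\delta))$-automorphism of the random $3$-uniform hypergraph $\calH$ underlying $\calC$.

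The second step invokes the hypergraph analogue of \pref{thm:robust-asymmetry} proved in \pref{sec:robust-asymmetry}: with high probability $\calH$ is robustly asymmetric, so $\sigma$ must fix all but an $O(\delta)$-fraction of $\calX$; combined with gadget-faithfulness, $\rho$ is also approximately the identity. On each ``core'' constraint $C_i$ (gadget-faithful, $\rho(i) = i$, all three of its variables fixed by $\sigma$), the map $\pi$ acts on each $V_x$ as a translation $(x \mapsto a) \mapsto (x \mapsto a + \tau(x))$ for some bit $\tau(x) \in \Z_2$, collectively defining the assignment $\tau$. Running the gadget calculation of \pref{lem:completeness} in reverse: if $\tau$ fails to satisfy a core $C_i$, then the would-be image of any constraint vertex of $V_{C_i}$ lies on the ``wrong'' affine hyperplane of $\F_2^3$, and a quick matching argument shows at least $4$ of the $18$ gadget edges must be violated.

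Summing all sources of edge loss---non-faithful gadgets, constraints touching a non-fixed variable of $\sigma$, constraints with $\rho(i) \neq i$, and unsatisfied core constraints---gives $\val(\calC; \tau) \geq 1 - O(c\delta)$, the factor $c$ arising because each of the $O(\delta n)$ non-fixed variables or constraints can disturb up to $O(c)$ constraints. The final step is routine: Chernoff plus a union bound over the $2^n$ assignments yields $\val(\calC) \leq \frac{1}{2} + O(1/\sqrt{c})$ with probability $1-o(1)$, and combining gives the desired lower bound on $\delta$. \textbf{The main obstacle} is Step 1: the CFI gadget is not particularly ``rigid'', so converting the quantitative hypothesis ``$\pi$ preserves $\geq (1-\delta) M$ edges'' into ``$\pi$ is gadget-faithful on $(1 - O(\delta))$ of its gadgets and induces near-identity permutations $\sigma,\rho$'' requires careful case analysis, and the hypergraph robust-asymmetry theorem must be invoked in exactly the parameter regime ($\beta = \Theta(\delta)$) where its quantitative guarantees survive the chain of reductions.
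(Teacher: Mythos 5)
Your proposal follows essentially the same route as the paper's proof: decompose $\pi$ into gadget-faithful pieces, extract a variable permutation $\sigma$ which must be a near-automorphism of the underlying random $3$-uniform hypergraph, invoke the hypergraph version of the robust-asymmetry theorem (\pref{thm:randomhypergraphwhp}) to force $\sigma$ to be a near-identity, read off an assignment $\tau$ from the variable-vertex translations, and contradict the high-probability bound $\val(\calC) < .51$.

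One quantitative point where your stated intermediate claim is off, though the endpoint survives: you write that after invoking robust asymmetry $\sigma$ ``must fix all but an $O(\delta)$-fraction of $\calX$'' and that the theorem is applied in the regime $\beta = \Theta(\delta)$. This is not quite right. The automorphism defect of $\sigma$ is controlled by \pref{claim:soundness-almost-automorphism}, which gives $\aut(H;\sigma) \geq 1 - 100\epsilon - O(\delta)$; since $\epsilon$ is the degree-boundedness parameter fixed at $\Theta(1/c)$ by the random hypergraph (it cannot be pushed below this), the $\gamma$ one must feed into the asymmetry theorem is $\Theta(1/c)$ regardless of how small $\delta$ is, forcing $\beta = 240\gamma = \Theta(1/c) \gg \delta$ at the target $\delta = 1/(95c^2)$. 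Consequently your claimed bound $\val(\calC;\tau) \geq 1 - O(c\delta)$ is not actually provable for general $\delta$; what one gets (as in \pref{claim:soundness-threelin-val}) is $\val(\calC;\tau) \geq 1 - O(1/c) - O(\delta)$, where the $O(1/c)$ term comes from the $\epsilon$ and $\beta$ losses and is \emph{independent} of $\delta$. This still suffices: at $\delta = 1/(95c^2)$ and $c \geq 10^{10}$ the bound is $\geq .8$, contradicting $\val(\calC) < .51$. So your architecture is correct; you would just need to re-derive the budget with $\beta$ pinned to $\Theta(1/c)$ rather than $\Theta(\delta)$ when you carry out the ``careful case analysis'' you flag as the main obstacle.
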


Before proving \pref{lem:soundness}, we first introduce the following definition.

\begin{definition}
A graph (possibly hypergraph) $G$ is \emph{$(\epsilon, D)$-degree bounded} if the average degree of every set of $\epsilon$ fraction of vertices is at most $D$.
\end{definition}

\begin{claim}\label{claim:degree-bounded}
Suppose $c \geq 3$. A random 3-uniform hypergraph $H$ drawn from $\calG^{(3)}_{n, m}$, where $m = cn$, is $(1/c, 100c)$-degree bounded with probability $1 - o(1)$.
\end{claim}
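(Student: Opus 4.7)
The plan is to argue that with high probability, for every vertex subset $S\subseteq V(H)$ with $|S|=n/c$, the total degree $\sum_{v\in S}\deg_H(v)$ is at most $100n$ (equivalently, the average degree on $S$ is at most $100c$). Bounding
\[
\sum_{v\in S}\deg_H(v)\;=\;\sum_{e\in E(H)}|e\cap S|\;\le\;3\,Z_S,\qquad Z_S:=\bigl|\{e\in E(H):e\cap S\ne\emptyset\}\bigr|,
\]
it suffices to show $Z_S\le 33n$ for every such $S$ (since $3\cdot 33<100$). Since each edge of $H$ is a uniformly chosen $3$-subset of $[n]$ (sampled without replacement from the $\binom{n}{3}$ possibilities), a single edge intersects $S$ with probability $1-\binom{n-|S|}{3}/\binom{n}{3}\le 3|S|/n=3/c$; hence $\mathbb{E}[Z_S]\le 3m/c=3n$.

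Next I would apply a multiplicative Chernoff bound to $Z_S$. The indicators $\mathbf{1}[e_i\cap S\ne\emptyset]$ arise from sampling without replacement and are negatively associated, so the standard upper-tail Chernoff bound applies; alternatively, one transfers to the $\calG^{(3)}(n,p)$ model with $p=m/\binom{n}{3}$ at a polynomial cost in probability. With $\mu\le 3n$ and target $33n=11\mu$, Chernoff yields $\Pr[Z_S\ge 33n]\le \exp(-\alpha n)$ for an absolute constant $\alpha>40$. A union bound over the $\binom{n}{n/c}\le (ec)^{n/c}$ choices of $S$ costs only a factor of $\exp\!\bigl((n/c)\log(ec)\bigr)$; for $c\ge 3$ the exponent $(1/c)\log(ec)$ is at most a small constant (roughly $0.72$ at $c=3$ and decreasing in $c$), which is comfortably dominated by $\alpha n$. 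Thus with probability $1-\exp(-\Omega(n))=1-o(1)$ no bad $S$ exists.

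If the definition is to be read as ``every set of size \emph{at least} $\epsilon n$'', the same argument works at every target size $s\ge n/c$, since the Chernoff estimate only improves when $|S|$ grows (the mean scales linearly in $|S|$ while the threshold $100c$ on the average stays fixed), and the extra union bound over $O(n)$ sizes is absorbed by the exponential tail. The main thing to be careful about is constants: we need the tail constant from Chernoff to dominate the entropy $(n/c)\log(ec)$ of the union bound at the smallest $c$ in range, and we need to cleanly justify Chernoff in the without-replacement model. Both are standard but deserve a verified line rather than being glossed over.
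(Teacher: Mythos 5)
The paper does not actually supply a proof of this claim --- it is stated without argument (as is the analogous Claim~6.11 for $k$-uniform hypergraphs), evidently treated as a routine tail-bound-plus-union-bound fact. Your proposal fills in exactly the kind of argument the authors are implicitly relying on, and it is correct. The reduction $\sum_{v\in S}\deg(v)=\sum_e|e\cap S|\le 3Z_S$ is sound, the bound $\E[Z_S]\le 3|S|m/n$ is right, the negative-association justification for Chernoff in the without-replacement model matches what the paper itself uses elsewhere (Claim~5.8, citing \cite{JP83}), and the constants check out: $\Pr[Z_S\ge 33n]\le e^{-\mu}(e\mu/a)^a\le \exp(30n-33n\ln 11)\approx\exp(-49n)$ against a union-bound entropy of at most $(n/c)(1+\ln c)\le 0.7n$ for $c\ge 3$. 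One small imprecision: writing ``target $33n=11\mu$'' treats $\mu$ as if it equals $3n$, when you have only $\mu\le 3n$; this is harmless because $e^{-\mu}(e\mu/a)^a$ is increasing in $\mu$ on $\mu<a$, so the bound at $\mu=3n$ dominates, but a clean write-up should say so explicitly. Your closing paragraph about extending from sets of size exactly $n/c$ to all larger sets is also correct (and, for the record, is even unnecessary under the usual reading of ``average degree,'' since $\max_{|S|=s}\frac{1}{s}\sum_{v\in S}\deg v$ is automatically non-increasing in $s$); either way, the argument you give is valid.
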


\ifnum\full=1
\begin{claim}\label{claim:bounded-degree-set}
Given an $(\epsilon, D)$-degree bounded graph $G$ with $n$ vertices, every set of $\beta n$ vertices has at most $(\epsilon+\beta) Dn$ edges incident to them.
\end{claim}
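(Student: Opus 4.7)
}
The plan is to reduce the statement to the definition of $(\epsilon,D)$-degree boundedness by a simple partition argument, bounding the number of incident edges by the total degree of the set. Let $S\subseteq V(G)$ with $|S|=\beta n$, and write $d(v)$ for the degree of $v$. The key observation is that every edge incident to $S$ contributes at least $1$ to $\sum_{v\in S} d(v)$, so it suffices to prove $\sum_{v\in S} d(v) \leq (\epsilon+\beta)Dn$.

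First I would handle the easy regime $\beta \le \epsilon$: extend $S$ to an arbitrary superset $S'$ of size exactly $\lceil \epsilon n\rceil$ (using $S'=V(G)$ in the trivial case). The definition of $(\epsilon,D)$-degree boundedness applied to $S'$ gives $\sum_{v\in S'} d(v)\le \epsilon D n$, and monotonicity of the sum gives $\sum_{v\in S} d(v) \le \epsilon D n\le (\epsilon+\beta)Dn$.

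For the general case $\beta>\epsilon$, partition $S$ arbitrarily into $k=\lceil \beta/\epsilon\rceil$ disjoint blocks $S_1,\dots,S_k$, each of size at most $\lceil \epsilon n\rceil$. For each block I pad it (if necessary) to a set $S_i'\supseteq S_i$ of size exactly $\lceil \epsilon n\rceil$, apply the $(\epsilon,D)$-degree bounded hypothesis to $S_i'$ to obtain $\sum_{v\in S_i} d(v)\le \sum_{v\in S_i'} d(v)\le \epsilon D n$, and sum over $i$:
\[
\sum_{v\in S} d(v) \;\le\; k\cdot \epsilon D n \;\le\; \left(\frac{\beta}{\epsilon}+1\right)\epsilon D n \;=\; (\beta+\epsilon)Dn.
\]
Since the number of edges with at least one endpoint in $S$ is at most $\sum_{v\in S} d(v)$, this yields the claimed bound.

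I do not expect a serious obstacle: the only mildly delicate point is the rounding in $\lceil \beta/\epsilon\rceil$, which is exactly what creates the additive $\epsilon$ slack in the bound $(\epsilon+\beta)Dn$ rather than the naive $\beta D n$. The argument works identically for hypergraphs, since the inequality ``edges incident to $S$ $\le\sum_{v\in S} d(v)$'' continues to hold when $d(v)$ denotes the number of hyperedges containing $v$.
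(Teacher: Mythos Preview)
Your proof is correct. The paper actually states this claim without proof, treating it as immediate from the definition; your partition-and-pad argument is exactly the natural way to spell it out, and the observation that the ceiling in $\lceil \beta/\epsilon\rceil$ is what produces the additive $\epsilon$ slack is precisely the point.
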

\fi

\pref{lem:soundness} is directed implied by the following two lemmas.
\begin{lemma}\label{lem:soundness-decoding}
Let $H = ([n], E = \{e_i\})$ be a $3$-uniform hypergraph with $n$ vertices and $m = cn$ hyperedges. the constraint graph of a \threexor instance with $n$ variables and $m = cn$ equations. Suppose $H$ is $(\epsilon, 100c)$-degree bounded and $(\beta, \gamma)$-asymmetric, where $\gamma \geq 200\epsilon$. Let $\calC = \{C_1, C_2, \dots, C_m\}$ be an arbitrary \threexor with $n$ variables and $m$ constraints based on $H$. (In other words, each hyperedge $e_i$ of $H$ connects the indices of the $3$ variables used by $C_i$.)   If we set
\ifnum\full=1
\[
\fi
\ifnum\full=0
$
\fi
\delta := \delta(c, \epsilon, \beta, \gamma) = \min\left\{\frac{1}{200}, \frac{\gamma}{48}, \frac{\epsilon}{95c}\right\},
\ifnum\full=1
\]
\fi
\ifnum\full=0
$
\fi
when $\GI(G_C, G_{\homog{C}})  \geq 1 - \delta$, we have $\val(\calC) \geq .9 - 100(\epsilon+ \beta)$. 
\end{lemma}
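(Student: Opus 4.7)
The plan is to extract from a $(1-\delta)$-isomorphism $\pi \co V(G_\calC) \to V(G_{\homog{\calC}})$ both a candidate permutation $\sigma$ of the variable set $\calX$ and a candidate assignment $\tau \co \calX \to \Z_2$; then invoke the $(\beta,\gamma)$-asymmetry of $H$ to force $\sigma$ to be essentially the identity, so that $\tau$ itself satisfies most constraints of $\calC$. Throughout, the total edge deficit of $\pi$ is at most $\delta M = O(\delta c n)$, and this is the budget against which all error terms are charged using the $(\epsilon,100c)$-degree-boundedness of $H$ (via a claim of the type that every $\beta n$-vertex subset spans at most $(\epsilon+\beta)\cdot 100cn$ edges).

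First I would separate vertices by type using degree. Each constraint vertex has degree exactly $6$ in both $G_\calC$ and $G_{\homog{\calC}}$, whereas a variable vertex $x_i\mapsto b$ has degree $1+2d_H(i)$, where $d_H(i)$ is the hyperedge-degree of $i$ in $H$. Any variable vertex $\pi$-mapped onto a constraint vertex (or conversely) destroys $\Omega(d_H(i))$ edges, so degree-boundedness forces all but an $O(\delta)$-fraction of variable vertices, and likewise of constraint vertices, to be mapped within their own type. Using that at most $\delta M$ of the $n$ variable edges $\{x_i\mapsto 0,\,x_i\mapsto 1\}$ are lost, most pairs $V_{x_i}$ of $G_\calC$ are mapped by $\pi$ onto some pair $V_{x_j}$ of $G_{\homog{\calC}}$; extending the resulting partial injection to a permutation $\sigma$ of $\calX$, I define $\tau(i)\in\Z_2$ for each such good $i$ via the equality $\pi(x_i\mapsto b)=x_{\sigma(i)}\mapsto (b+\tau(i))$.

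Second I would analyze the constraint gadgets. Each $V_{C_j}$ carries a rigid structure: a $4$-clique on satisfying-assignment vertices, together with three specific pendant edges to three distinct variable pairs. For most $j$, all $18$ gadget-incident edges are preserved by $\pi$, and the rigidity of this local structure together with the $V_x$-incidence pattern forces $\pi(V_{C_j})$ to coincide with some gadget $V_{C'}$ of $G_{\homog{\calC}}$, where $C'$ is a homogeneous constraint on precisely the variables $\sigma(i_1),\sigma(i_2),\sigma(i_3)$ if $C_j$ acts on $i_1,i_2,i_3$. In particular $\{\sigma(i_1),\sigma(i_2),\sigma(i_3)\}$ is a hyperedge of $H$, so after charging lost gadgets to the edge budget (for which the condition $\delta\le\gamma/48$ is tailored), $\sigma$ is a $(1-\gamma)$-automorphism of $H$. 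The $(\beta,\gamma)$-asymmetry hypothesis then yields that $\sigma$ has at most $\beta n$ non-fixed points.

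Finally, for every good constraint $C_j$ whose three variables are all fixed by $\sigma$, examining the cross-edges preserved by $\pi$ inside $V_{C_j}$ shows that the restriction of $\tau$ to $i_1,i_2,i_3$ must satisfy $\tau(i_1)+\tau(i_2)+\tau(i_3)=b_j$ (the constant term of $C_j$), i.e.\ $\tau$ satisfies $C_j$. The constraints that $\tau$ could fail to satisfy are therefore contained in the union of (i) the $O(\delta m)$ gadgets lost in the previous steps, and (ii) the constraints incident to a non-fixed-or-``bad'' variable, of which there are at most $100c(\epsilon+\beta)n=100(\epsilon+\beta)m$ by degree-boundedness. Summing the error fractions and using $\delta\le\min\{1/200,\gamma/48,\epsilon/(95c)\}$ yields $\val(\calC;\tau)\ge 0.9-100(\epsilon+\beta)$. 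The principal obstacle I expect is the second step: carefully amortizing the $\delta$-fraction of edge errors across individual gadgets, which share vertices through the variable pairs but not through constraint vertices, so as to convert a collection of local structural preservations into the global statement that $\sigma$ is an approximate hypergraph automorphism to which $(\beta,\gamma)$-asymmetry applies.
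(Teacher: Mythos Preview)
Your overall strategy---extract a permutation $\sigma$ on variables and an assignment $\tau$ from the near-isomorphism $\pi$, invoke $(\beta,\gamma)$-asymmetry to force $\sigma$ near the identity, then argue $\tau$ satisfies most constraints---is exactly the paper's approach, and your second and third steps match the paper's claims closely.

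The weak point is your first step. You propose separating constraint vertices from variable vertices by degree, asserting that a variable vertex $\pi$-mapped onto a constraint vertex ``destroys $\Omega(d_H(i))$ edges.'' But a variable vertex with $d_H(i)\le 2$ has degree $1+2d_H(i)\le 5<6$, so mapping it to a degree-$6$ constraint vertex need not lose any edge on that side. The lemma's hypotheses give only an \emph{upper} bound on average degree (the $(\epsilon,100c)$-degree-boundedness), so nothing rules out many low-degree variables. One could try to recover the loss by tracking the reverse direction (a constraint vertex of degree~$6$ mapped onto a low-degree variable vertex loses edges there), but making this two-sided accounting precise is messier than necessary.

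The paper bypasses degree entirely. It first observes that the \emph{only} $4$-cliques in $G_{\homog{\calC}}$ are the sets $V_{\homog{C_{i'}}}$; hence whenever $\pi(V_{C_i})$ fails to be one of these sets, at least one of the six clique edges in $V_{C_i}$ is unsatisfied. This immediately gives $|\{i:\pi(V_{C_i})=V_{\homog{C_{i'}}}\text{ for some }i'\}|\ge (1-19\delta)m$. A counting argument then shows that at most $76\delta m$ variable vertices of $G_\calC$ can map to constraint vertices, and combining this with the $n$ variable-pair edges yields that all but $95c\delta n$ of the sets $V_{x_j}$ map to some $V_{x_{j'}}$. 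From that point your plan and the paper's coincide. I would swap in the $4$-clique argument for your degree-based separation; everything downstream in your proposal then goes through.
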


\begin{lemma}\label{lem:random-threelin-soundness}
If $\calC$ is a random \threexor instance with $n$ variables and $m \geq 10000n$ equations, then with probability $1 - o(1)$, we have $\val(\calC) < .51$. 
\end{lemma}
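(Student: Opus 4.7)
\textbf{Proof plan for \pref{lem:random-threelin-soundness}.}

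The plan is a standard first-moment union bound that exploits the independent random right-hand sides. Fix any assignment $\tau \in \Z_2^{n}$. For each equation $C_i : x_{j_1} + x_{j_2} + x_{j_3} = b_i$ in the random instance $\calC$, the left-hand side $\tau(x_{j_1}) + \tau(x_{j_2}) + \tau(x_{j_3})$ is some deterministic value in $\Z_2$ once the variable triple is fixed, while $b_i$ is drawn uniformly and independently from $\Z_2$. Hence $C_i$ is satisfied by $\tau$ with probability exactly $1/2$, and the indicator random variables $\mathbf{1}[C_i \text{ satisfied by }\tau]$ are mutually independent (even after conditioning on the choice of variable triples, because the $b_i$'s are independent of everything else and of each other). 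Thus $m \cdot \val(\calC;\tau)$ is a sum of $m$ i.i.d.\ $\mathrm{Bernoulli}(1/2)$ random variables.

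Next I would apply Hoeffding's inequality to this sum: for each fixed $\tau$,
\[
    \Pr\bigl[\val(\calC;\tau) \geq 0.51\bigr] \;\leq\; \exp\bigl(-2m \cdot (0.01)^2\bigr) \;=\; \exp(-m/5000).
\]
Now take a union bound over all $2^n$ assignments:
\[
    \Pr\bigl[\val(\calC) \geq 0.51\bigr] \;\leq\; 2^n \cdot \exp(-m/5000) \;=\; \exp\bigl(n \ln 2 - m/5000\bigr).
\]
Plugging in $m \geq 10000 n$ gives $\exp\bigl(n(\ln 2 - 2)\bigr) \leq \exp(-1.3 n) = o(1)$, which is the desired conclusion.

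There is essentially no obstacle here: the only mildly subtle point is to recognize that the randomness of the right-hand sides $b_i$ alone (independently of how the variable triples are sampled, with or without repetition from $\binom{[n]}{3}$) already delivers full independence of the satisfaction indicators for a fixed $\tau$. Everything else is a textbook ``random CSP is far from satisfiable at high enough density'' computation, and the constant $10000$ in the hypothesis is loose enough that the crude Hoeffding bound suffices; one could of course use a Chernoff/KL-divergence bound to push the threshold down closer to the satisfiability threshold, but that is unnecessary for the stated claim.
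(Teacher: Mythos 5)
Your proposal is correct and follows essentially the same route as the paper's proof: fix an assignment, observe that each constraint is satisfied independently with probability $1/2$ because the right-hand sides $b_i$ are uniform and independent, apply a Chernoff/Hoeffding tail bound to get $\exp(-\Theta(m))$, and union bound over the $2^n$ assignments. The only cosmetic difference is that the paper cites the multiplicative Chernoff bound (yielding $\exp(-.0001m)$) while you use Hoeffding's additive form (yielding $\exp(-m/5000)$); both comfortably clear the $2^n$ union bound when $m \geq 10000n$.
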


\begin{proof}[Proof of \pref{lem:soundness}]
Set $\epsilon = \frac{1}{c}$, $\gamma = \frac{200}{c}$, $\beta = \frac{48000}{c}$.  Combining \pref{claim:degree-bounded}, \pref{lem:random-threelin-soundness}, and \pref{thm:randomhypergraphwhp}, we know that with probability $(1-o(1))$, all of the following hold:
\begin{enumerate}
\item $H$ is $(\eps, 100c)$-degree bounded,\label{item:degreeboundorig}
\item $H$ is $(\beta, \gamma)$-asymmetric, and \label{item:asymmetricorig}
\item $\val(\calC) < .51$.\label{item:valueorig}
\end{enumerate}
Given that these hold, assume for sake of contradiction that $\GI(G_\calC, G_{\homog{\calC}}) \geq 1 - \frac{1}{95c^2}$.  Then because $G$ satisfies \pref{item:degreeboundorig} and \pref{item:asymmetricorig}, \pref{lem:soundness-decoding} implies that
\begin{equation*}
\val(\calC) \geq .9 - 100\left(\frac{1}{c} + \frac{48000}{c}\right) \geq .8,
\end{equation*}
where the last step follows because $c \geq 10^{10}$.  However, this contradicts \pref{item:valueorig}.  Therefore, $\GI(G_\calC, G_{\homog{\calC}}) < 1 - \frac{1}{95c^2}$ with probability $1-o(1)$.
\end{proof}

\ifnum\full=1
The proof of \pref{lem:random-threelin-soundness} is standard. 
\begin{proof}[Proof of \pref{lem:random-threelin-soundness}]
Fix an assignment to the $n$ variables. The probability that the assignment satisfies at least $.51m$ equations of a random \threexor instance is at most $\exp(-.02^2 \cdot .5m/2) = \exp(-.0001 m)$ by the Chernoff bound. Since there are only $2^n$ assignments, the probability that no assignment satisfies more than $.5m$ equations is at least $1 - \exp(-.0001 m) \cdot 2^n = 1 - o(1)$ when $m \geq 10000n$.
\end{proof}
\fi

The rest of the section is devoted to the proof of \pref{lem:soundness-decoding}.

\begin{proof}[Proof of \pref{lem:soundness-decoding}]
Let $\pi$ be a bijection mapping the vertices in $G_\calC$ to the vertices in $G_{\homog{\calC}}$ such that $\GI(G_\calC, G_{\homog{\calC}}; \pi) \geq 1 - \delta$. We first prove that for most $i$'s, $\pi$ maps the set $V_{C_i}$ to $V_{\homog{C_{i'}}}$ for some $i'$, and for most $j$'s, $\pi$ maps most $V_{x_j}$ to $V_{x_j'}$ for some $j'$. Formally, let $A$ be the set of $i \in [m]$ such that $\pi(V_{C_i}) = V_{\homog{C_{i'}}}$ for some $i'$, and let $B$ be the set of $j \in [n]$ such that $\pi(V_{x_j}) = V_{x_{j'}}$ for some $j'$. We show that 

\begin{claim}\label{claim:soundness-correspondence}
$|A| \geq (1 - 19\delta)m$, $|B| \geq (1 - 95c\delta) n$. 
\end{claim}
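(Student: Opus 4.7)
The plan is to exploit the rigidity of the Cai--F\"urer--Immerman gadget to identify the constraint-vertex quadruples $V_{C_i}$ and the variable-vertex pairs $V_{x_j}$ from purely structural information about $\pi$. The key structural observation is that the $K_4$ subgraphs in $G_\calC$ (and in $G_{\homog{\calC}}$) are exactly the constraint-vertex sets $V_{C_i}$ (resp.\ $V_{\homog{C_{i'}}}$). I would verify this by a short case analysis on how many of the four vertices of a putative $K_4$ are constraint vertices, using that (i) constraint vertices belonging to distinct constraints are non-adjacent, (ii) each variable is set to $0$ (resp.\ to $1$) by exactly two of the four satisfying assignments of any \threexor constraint, which rules out $K_4$'s mixing one variable vertex with three constraint vertices or two variable vertices with two constraint vertices from a single gadget, and (iii) variable edges only occur within a single pair $\{x_j \mapsto 0, x_j \mapsto 1\}$, ruling out $K_4$'s containing three or more variable vertices.

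Next I would note that the total number of edges of $G_\calC$ violated by $\pi$ is at most $\delta M = \delta(18m+n) \leq 19\delta m$, using $n = m/c$ with $c \geq 1$. For the first bound: if $i \notin A$, then $\pi(V_{C_i})$ is not a $K_4$ in $G_{\homog{\calC}}$, so at least one of the six clique edges inside $V_{C_i}$ is violated. Since these six-edge cliques are vertex-disjoint across different~$i$'s, the violated-edge budget directly controls $m - |A|$, yielding $|A| \geq (1 - 19\delta)m$.

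For the second bound I would use $|A| \geq (1-19\delta)m$ to argue that $\pi$ sends at least $4(1-19\delta)m$ vertices of $G_\calC$ (namely $\bigcup_{i \in A} V_{C_i}$) into the constraint-vertex set of $G_{\homog{\calC}}$. Since the constraint-vertex sets on both sides have size $4m$ and $\pi$ is a bijection, at most $76\delta m$ variable vertices of $G_\calC$ can be mapped to constraint vertices of $G_{\homog{\calC}}$. For every $j \notin B$, either (a) some vertex of $V_{x_j}$ is mapped to a constraint vertex, which is charged against the $76\delta m$ bound, or (b) both vertices of $V_{x_j}$ are mapped to variable vertices belonging to two \emph{distinct} pairs $V_{x_{j'}}$; in case (b) the variable edge $\{x_j \mapsto 0, x_j \mapsto 1\}$ is sent to a non-edge of $G_{\homog{\calC}}$ and is therefore violated, charged against the $\leq 19\delta m$ total-violation bound. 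Summing, $n - |B| \leq 76\delta m + 19\delta m = 95\delta m = 95 c \delta n$. The only non-routine step is the $K_4$ characterization; everything else is bookkeeping, and notably this claim does not use the degree-boundedness or asymmetry hypotheses (which will enter only at later stages of the soundness argument).
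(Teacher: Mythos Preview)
Your proposal is correct and follows essentially the same approach as the paper's own proof: the $K_4$-characterization of the constraint-vertex quadruples to bound $|A|$, then the bijection-counting to bound the number of variable vertices sent to constraint vertices, and finally the case split (a)/(b) for indices $j\notin B$. The paper phrases the second half via an intermediate set $B'=\{j:\text{both vertices of }V_{x_j}\text{ go to variable vertices}\}$ with $|B'|\geq (1-76c\delta)n$ and $|B'\setminus B|\leq \delta M$, which is exactly your cases (a) and (b) repackaged; your arithmetic and the paper's both land on $n-|B|\leq 95c\delta n$.
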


Now we are able to define a permutation $\sigma$ on the variables in $\calC$ (as well as $\homog{\calC}$ since the set of variables is shared). We let $\sigma$ to be an arbitrary permutation so that for each $j \in B$, we have $\sigma(j) = j'$ where $\pi(V_{x_j}) = V_{x_{j'}}$. Now we show that $\sigma$ is an almost automorphism for the constraint graph (i.e. the hypergraph $H = ([n], E)$). 

\begin{claim}\label{claim:soundness-almost-automorphism}
$\aut(H; \sigma) \geq (1 - 100 \epsilon - 24\delta) m$.
\end{claim}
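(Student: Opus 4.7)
The plan is to bound the number of hyperedges $e_i \in E(H)$ for which $\sigma(e_i) \notin E(H)$, partitioning the bad indices $i$ into three types. Type~I indices are those with $i \notin A$; by \pref{claim:soundness-correspondence} there are at most $19\delta m$ of them. Type~II indices are those $i \in A$ for which some endpoint of $e_i$ lies outside $B$: since $\delta \leq \epsilon/(95c)$ forces $|[n] \setminus B| \leq 95c\delta n \leq \epsilon n$, the $(\epsilon, 100c)$-degree-bounded property of $H$ bounds the sum of degrees in $[n] \setminus B$, and hence the number of hyperedges incident to $[n] \setminus B$, by $100c \cdot \epsilon n = 100\epsilon m$.

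Type~III, the main step, consists of $i \in A$ with all three endpoints of $e_i$ in $B$ yet $\sigma(e_i) \notin E(H)$. I will extract from each such $i$ four distinct edges of $G_\calC$ that $\pi$ fails to satisfy. Let $i'$ satisfy $\pi(V_{C_i}) = V_{\homog{C_{i'}}}$, write $e_i = \{j_1, j_2, j_3\}$, and recall that $\pi(V_{x_{j_k}}) = V_{x_{\sigma(j_k)}}$ by the definition of $\sigma$. Since $\{\sigma(j_1), \sigma(j_2), \sigma(j_3)\}$ is not a hyperedge, in particular it is not $e_{i'}$, so some $\sigma(j_k)$ is not a variable of $C_{i'}$. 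By \pref{def:gadgetgraph} there are precisely four edges of $G_\calC$ joining $V_{C_i}$ to $V_{x_{j_k}}$ (one per constraint vertex in the gadget); $\pi$ carries all four into $V_{\homog{C_{i'}}} \times V_{x_{\sigma(j_k)}}$, which contains no edges of $G_{\homog{\calC}}$ because $x_{\sigma(j_k)}$ does not appear in $\homog{C_{i'}}$. Hence all four of these edges are unsatisfied by $\pi$, and since the vertex sets $V_{C_i}$ are disjoint across $i$, the quadruples of broken edges produced for distinct Type-III indices are pairwise disjoint.

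Combining the three bounds then closes the argument. Since $\pi$ achieves $\GI(G_\calC, G_{\homog{\calC}}; \pi) \geq 1 - \delta$, the total number of unsatisfied edges of $G_\calC$ is at most $\delta M = \delta(18m + n) \leq 19\delta m$, so Type~III contributes at most $19\delta m / 4 < 5\delta m$ indices. Therefore the number of hyperedges of $H$ not preserved by $\sigma$ is at most $19\delta m + 100\epsilon m + 5\delta m = 100\epsilon m + 24\delta m$, giving $\aut(H; \sigma) \geq 1 - 100\epsilon - 24\delta$ as desired. The main obstacle is the Type~III analysis: showing that even a single misaligned variable under $\sigma$ necessarily breaks four concrete gadget edges, and then verifying that those edges are uniquely attributable to the index $i$, so that the four-to-one charging against the $\delta M$ unsatisfied edges of $G_\calC$ is valid without double-counting.
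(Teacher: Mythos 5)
Your proof is correct and follows essentially the same route as the paper's: first restrict to hyperedges lying entirely in $B$ with index in $A$ (losing $100\epsilon m + 19\delta m$ via the degree bound and \pref{claim:soundness-correspondence}), then charge each remaining violated hyperedge to the four $V_{C_i}$–$V_{x_{j_k}}$ edges of $G_\calC$ that $\pi$ must break, giving at most $\delta M/4 < 5\delta m$ further losses. The only difference is cosmetic organization (a three-type partition of bad indices versus the paper's nested sets $E'' \subseteq E' \subseteq E$); you also make the disjointness of the charged quadruples explicit, which the paper leaves implicit but which is indeed what justifies dividing by four.
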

By our setting of $\delta$, we have $24\delta < \gamma/2$. Since we also assume that $100 \epsilon  < \gamma/2$, we have $\aut(G; \sigma) \geq (1 - \gamma) m$, and therefore we know that $\sigma$ has at least $(1 - \beta) n$ fixed points.

Now we are ready to define an assignment $\tau : \{x_j\} \to \Z_2$ which certifies that $\val(\calI) \geq .9$. For each $j$ which is not a fixed point of $\sigma$, define $\tau(x_j)$ arbitrarily. For each $j$ being a fixed point of $\sigma$, we know that $\pi(V_{x_j}) = V_{x_j}$. We let $\tau(x_j) = b$ where $ \pi(x_j \mapsto 0) = x_j \mapsto b$. We conclude the proof by showing the following claim.
\begin{claim}\label{claim:soundness-threelin-val}
$\val(\calC; \tau) \geq .9 - 100(\epsilon+ \beta)$.
\end{claim}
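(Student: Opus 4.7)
The plan is to bound the number of constraints $C_i$ that $\tau$ fails to satisfy by isolating a list of structural conditions that together force $\tau$ to satisfy $C_i$, then controlling the number of constraints failing any one of them. Call $C_i$ on variables $x_{j_1}, x_{j_2}, x_{j_3}$ \emph{good} if: (a) $C_i \in A$, so $\pi(V_{C_i}) = V_{\homog{C_{i'}}}$ for some index $i'$; (b) each $j_k$ lies in $B$ and is a fixed point of $\sigma$, so $\pi(V_{x_{j_k}}) = V_{x_{j_k}}$; and (c) each of the $12$ edges between $V_{C_i}$ and $\bigcup_k V_{x_{j_k}}$ is preserved by $\pi$.

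Next I would show that every good constraint is satisfied by $\tau$. Since $j_1, j_2, j_3$ are all fixed by $\sigma$, the variable set of $\homog{C_{i'}}$ must equal $\{j_1, j_2, j_3\}$, and so $\homog{C_{i'}} = \homog{C_i}$. The restriction of $\pi$ to each pair $V_{x_{j_k}}$ is either the identity or the swap, and by the definition of $\tau$, in either case $\pi(x_{j_k} \mapsto a) = x_{j_k} \mapsto (a + \tau(x_{j_k}))$. Writing $\pi(\alpha_{C_i}) = \beta_{\homog{C_i}}$, preservation of the three edges from $\alpha_{C_i}$ into the variable part forces $\beta(x_{j_k}) = \alpha(x_{j_k}) + \tau(x_{j_k})$ for each $k$. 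Summing mod $2$ and using $\sum_k \beta(x_{j_k}) = 0$ (since $\beta$ satisfies $\homog{C_i}$) and $\sum_k \alpha(x_{j_k}) = b_i$ (since $\alpha$ satisfies $C_i$) yields $\sum_k \tau(x_{j_k}) = b_i$, i.e., $\tau$ satisfies $C_i$.

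Finally I would bound the non-good constraints by category. Failure of (a) costs at most $19 \delta m$ by \pref{claim:soundness-correspondence}. Failure of (b) due to a non-fixed variable costs at most $(\epsilon + \beta)\cdot 100c \cdot n = 100(\epsilon + \beta)m$, by \pref{claim:bounded-degree-set} applied to the at most $\beta n$ non-fixed vertices; failure of (b) due to a variable outside $B$ adds a smaller contribution since $|[n] \setminus B| \leq 95 c \delta n \leq \epsilon n$ by the choice $\delta \leq \epsilon/(95c)$. Failure of (c) while (a) and (b) hold contributes at most $\delta M$ constraints, as each such bad constraint accounts for a distinct unpreserved edge incident to its own constraint vertex. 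Combining and using $\delta \leq 1/200$, the total bad count is at most $0.1m + 100(\epsilon + \beta)m$, which yields the claimed $\val(\calC;\tau) \geq 0.9 - 100(\epsilon+\beta)$.

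The main obstacle is the bookkeeping needed to fit the three ``machinery'' error terms (from $A^c$, from variables outside $B$, and from unpreserved constraint-to-variable edges) into the $0.1$ slack without inflating the $100(\epsilon+\beta)$ coefficient. This is precisely why $\delta$ is defined as the minimum of $1/200$, $\gamma/48$, and $\epsilon/(95c)$: the $\gamma/48$ piece is consumed in \pref{claim:soundness-almost-automorphism}, while the other two bounds are what make the $\delta$-induced losses absorbable into the constant $0.1$. The satisfaction argument itself is a short mod-$2$ computation once the good conditions are in force.
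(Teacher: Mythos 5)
Your overall strategy matches the paper's: isolate a set of ``good'' constraints, show they are satisfied by $\tau$ via the gadget structure, and bound the number of constraints that fail one of the structural conditions. Your satisfaction argument (direct edge-preservation forcing $\beta(x_{j_k}) = \alpha(x_{j_k}) + \tau(x_{j_k})$ and summing mod $2$) is a clean, slightly more explicit version of the paper's ``parity of neighbors'' argument in Claim~\ref{claim:soundness-gadget}; both are fine.

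However, the bookkeeping as you have written it does not close. You charge failures of condition (a) at most $19\delta m$ and failures of (c) at most $\delta M$ \emph{separately}. Since $\delta M \leq 19\delta m$ and $\delta \leq 1/200$, the naive sum is $\leq 38\delta m \leq 0.19m$, which already exceeds the $0.1m$ of slack you need. The way the paper avoids this is crucial: Claim~\ref{claim:soundness-threelin-val} defines $E''$ by the \emph{single} condition ``$e_i \in E'$ and all edges incident to $V_{C_i}$ are satisfied''. This condition simultaneously encodes your (a) (an unsatisfied clique edge inside $V_{C_i}$) and your (c) (an unsatisfied edge from $V_{C_i}$ to a variable pair). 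Since the cliques $V_{C_i}$ are disjoint and each variable--constraint edge has a unique $V_{C_i}$ endpoint, each $e_i \in E' \setminus E''$ charges a \emph{distinct} unsatisfied edge, giving $|E' \setminus E''| \leq \delta M \leq 19\delta m \leq 0.095m$, a single bound rather than a doubled one. You need to observe this disjointness explicitly, or merge (a) and (c) into one condition as the paper does, to make the arithmetic fit into the $0.1$.

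A second, smaller issue is your ``(b) due to a variable outside $B$'' term. You call it a ``smaller contribution'' but never quantify it, and it is not automatically dominated by the $100(\epsilon+\beta)m$ term in all parameter regimes. The cleanest fix is the one used (implicitly here, explicitly in the analogous step of \pref{sec:feige}): define $\sigma$ so that $\sigma(j) \neq j$ for every $j \notin B$, which makes the case vacuous since ``$j$ is a fixed point of $\sigma$'' then implies ``$j \in B$''. If you instead want to bound the term directly, you should apply \pref{claim:bounded-degree-set} to the at most $\epsilon n$ vertices outside $B$ and show the resulting $\leq 200\epsilon m$ constraints fit into the remaining slack, which is possible but requires a bit more care than the current phrasing suggests.
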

\ifnum\full=0
\vspace{-5ex}
\fi
\end{proof}

\ifnum\full=0
The proofs of the claims are deferred to the full version of this paper due to space constraints. 
\fi

\ifnum\full=1

\subsection{Proof of the claims}

\begin{proof}[Proof of \pref{claim:soundness-correspondence}]
Observe that the only $4$-cliques in $G_{\homog{\calC}}$ are $V_{\homog{C_{i'}}}$ ($i' \in [m]$). Therefore, if $\pi(V_{C_i}) \neq V_{\homog{C_{i'}}}$ for every $i'$, we know that at least one of the edges in the clique $V_{C_i}$ is not satisfied. Therefore we have $m - |A| \leq \delta M$ (recall that $M = 18m + n$ is the number of edges in $G_{\calC}$), i.e. $|A| \geq m - \delta M \geq (1 - 19\delta) m$. 

We now know that at least $(1 - 19\delta)m \cdot 4$ equation vertices in $G_{\homog{\calC}}$ are mapped from equation vertices in $G_\calC$. Therefore there are at most $4m - (1 - 19\delta)m \cdot 4 = 76\delta m$ equation vertices in $G_{\homog{\calC}}$ being mapped from variable vertices in $G_\calC$. In other words, $\pi$ maps at least $2n - 76\delta m = (2 - 76c\delta)n$ variable vertices to variable vertices. Let $B'$ be the set of $j$'s such that both vertices in $V_{x_j}$ is mapped to a variable vertex. We have $B' \supseteq B$ and $|B'| \geq (2 - 76c\delta)n - n = (1 - 76c\delta) n$. For each $j \in B' \setminus B$, we know that the edge in $V_{x_j}$ is not satisfied. Therefore $|B' \setminus B| \leq \delta M$. Therefore, $|B| = |B'| - |B' \setminus B| \geq (1 - 76c \delta) n - \delta M \geq (1 - 95c\delta) n$. 
\end{proof}

\begin{proof}[Proof of \pref{claim:soundness-almost-automorphism}]
Let $E'$ be the set of hyperedges in $E$ whose vertices are all in $B$. Since $G$ is $(\epsilon, 100c)$-degree bounded, and by our setting of parameters $95c\delta < \epsilon$, by \pref{claim:bounded-degree-set}, we know that $|E'| \geq m - 100c \epsilon n = (1 - 100 \epsilon ) m$. Now let us consider the hyperedges in $E'' = E' \cap A$. (Also observe that $|E''| \geq (1 - 100 \epsilon - 19\delta) m$.)  We claim that most of the hyperedges in $E''$ are satisfied by $\sigma$. For every hyperedge $e_i = \{j_1, j_2, j_3\} \in E''$ that is not satisfied, we know that $\{\sigma(j_1), \sigma(j_2), \sigma(j_3)\} \not\in T$. Since $i \in E'' \subseteq A$, let $i'$ be the equation index such that $\pi(V_{C_i}) = V_{\homog{C_{i'}}}$. Since $\{\sigma(j_1), \sigma(j_2), \sigma(j_3)\} \not\in E$, we have $e_{i'} \neq \{\sigma(j_1), \sigma(j_2), \sigma(j_3)\}$. Let us assume w.l.o.g. that $\sigma(j_1) \not\in e_{i'}$. Then there is no edge between $V_{\homog{C_{i'}}}$ and $V_{x_{\sigma(j_1)}}$ in $G_{\homog{\calC}}$. Therefore the $4$ edges between $V_{C_i}$ and $V_{x_{j_1}}$ in $G_\calC$ are not satisfied. 

We have proved that whenever there is an hyperedge in $E''$ not satisfied by $\sigma$, there are at least $4$ edges in $G_\calC$ not satisfied by $\pi$. Since $\pi$ satisfies $(1 - \delta)M$ edges, there are at most $\delta M / 4$ hyperedges in $E''$ not satisfied by $\sigma$. Therefore, we have 
\[
\aut(H; \sigma) \geq |E''| - \delta M / 4 \geq (1 - 100 \epsilon - 19\delta - 5\delta)m = (1 - 100\epsilon - 24\delta) m .\qedhere
\]
\end{proof}

\begin{proof}[Proof of \pref{claim:soundness-threelin-val}]
Let $E'$ be the set of hyperedges in $E$ whose vertices are all fixed points of $\sigma$. Since $\sigma$ has at least $(1 - \beta) n$ fixed points, and $H$ is $(\epsilon, 100c)$-degree bounded, by \pref{claim:bounded-degree-set}, we know that $|E'| \geq m - (\epsilon+ \beta) \cdot 100cn = (1 - 100 (\epsilon+ \beta))m$. Now consider any $e_i \in E'$, if all the edges incident to $V_{C_i}$ are satisfied, let $E''$ contain $i$. Since there are at most $\delta M \leq 19\delta m$ edges not satisfied, we know that $|E''| \geq |E'| - 19\delta m \geq (1 - 100 (\epsilon+ \beta) - 19\delta)m$. We claim that for all $e_i \in E''$, the equation $C_i$ is satisfied. Therefore we have $\val(\calI; \tau) \geq 1 - 100 (\epsilon+ \beta) - 19\delta \geq .9 - 100 (\epsilon+ \beta)$, since $\delta < 1/200$.

Now we show that $C_i$ is satisfied by $\tau$ when $e_i \in E''$. Using similar argument in the proof of \pref{claim:soundness-almost-automorphism}, one can show that when $e_i = \{j_1, j_2, j_3\} \in E''$, we have $\pi(V_{C_i}) = V_{\homog{C_i}}$. Also we have $\pi(V_{x_{j_t}}) = V_{x_{j_t}}$ for all $t \in \{1, 2, 3\}$, by the definition of $E''$. Let $H$ be the induced subgraph $G_\calC[V_{C_i} \cup (\cup_{t \in \{1, 2, 3\}} V{x_{j_t}})]$, let $\homog{J}$ be the induced subgraph $G_{\homog{\calC}}[{V_{\homog{C_i}}} \cup (\cup_{t \in \{1, 2, 3\}} V_{x_{j_t}})]$. We use the following claim to conclude the proof.

\begin{claim}\label{claim:soundness-gadget}
If $\pi$ (after projected on the suitable vertices) is an isomorphism between $J$ and $\homog{J}$, $C_i$ is satisfied by $\tau$.
\end{claim}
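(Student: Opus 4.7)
The plan is to show that an isomorphism $\pi$ from $J$ to $\homog{J}$ which preserves each variable pair $V_{x_{j_t}}$ setwise acts on the variable vertices by coordinate-wise XOR with some triple $(b_1, b_2, b_3) \in \Z_2^3$ determined by $\tau$, and that this triple is forced by the constraint-vertex structure to satisfy $b_1 + b_2 + b_3 = b$ (where $b$ is the right-hand side of $C_i$); this will be exactly the statement that $\tau$ satisfies $C_i$.

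First I would observe that since $\pi(V_{x_{j_t}}) = V_{x_{j_t}}$ for each $t \in \{1,2,3\}$, the restriction of $\pi$ to each two-vertex pair is either the identity or the unique swap, recorded by a bit $b_t \in \Z_2$ with $\pi(x_{j_t} \mapsto a) = x_{j_t} \mapsto (a + b_t)$ for all $a \in \Z_2$; by the definition of $\tau$ given in the proof of \pref{lem:soundness-decoding}, we have $\tau(x_{j_t}) = b_t$. Next, because the only $4$-cliques in $J$ and in $\homog{J}$ are the constraint-vertex cliques $V_{C_i}$ and $V_{\homog{C_i}}$ respectively, $\pi$ must map $V_{C_i}$ bijectively onto $V_{\homog{C_i}}$.

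The key step is then the following: for any constraint vertex $\alpha_{C_i} \in V_{C_i}$ corresponding to a satisfying assignment $(a_1, a_2, a_3)$ of $C_i$, its three neighbors among the variable vertices are exactly $x_{j_1} \mapsto a_1,\ x_{j_2} \mapsto a_2,\ x_{j_3} \mapsto a_3$. Since $\pi$ is an isomorphism, $\pi(\alpha_{C_i})$ must be the unique vertex in $V_{\homog{C_i}}$ adjacent to $x_{j_t} \mapsto (a_t + b_t)$ for each $t$; uniqueness holds because any two distinct satisfying assignments of $\homog{C_i}$ (or of $C_i$) lie in a common coset of the parity code and hence differ in exactly two coordinates, so they have distinct triples of variable-vertex neighbors. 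Therefore $\pi(\alpha_{C_i})$ corresponds to the tuple $(a_1 + b_1, a_2 + b_2, a_3 + b_3)$, which must itself satisfy $\homog{C_i}$. Using $a_1 + a_2 + a_3 = b$, the equation $(a_1 + b_1) + (a_2 + b_2) + (a_3 + b_3) \equiv 0 \pmod 2$ reduces to $b_1 + b_2 + b_3 = b$, i.e.\ $\tau$ satisfies $C_i$. No serious obstacle is anticipated; the argument is purely combinatorial bookkeeping about the gadget, and the only subtle point, the uniqueness of the neighbor-preimage, is immediate from the minimum distance of the parity code.
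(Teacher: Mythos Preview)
Your proof is correct. It differs from the paper's argument, though both are short. The paper introduces an auxiliary parity invariant (\pref{claim:soundness-gadget-helper}): for any constraint vertex $\alpha_C$ and any choice of bits $b_1,b_2,b_3$, the parity of the number of neighbors of $\alpha_C$ in the set $\{x_{j_t}\mapsto b_t : t=1,2,3\}$ equals $b+b_1+b_2+b_3$. Applying this once to $\alpha_{C_i}$ with $A=\{x_{j_t}\mapsto 0\}$ and once to $\pi(\alpha_{C_i})$ with $\pi(A)=\{x_{j_t}\mapsto \tau(x_{j_t})\}$, and using that isomorphism preserves neighbor counts, immediately gives $\tau(x_{j_1})+\tau(x_{j_2})+\tau(x_{j_3})=b$. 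Your approach instead pins down the image $\pi(\alpha_{C_i})$ explicitly as the constraint vertex labeled $(a_1+b_1,a_2+b_2,a_3+b_3)$ and then reads off the equation from membership in $V_{\homog{C_i}}$. Your argument is more constructive (it actually identifies $\pi$ on $V_{C_i}$); the paper's is a one-line parity trick that sidesteps having to name the image at all.

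Two small simplifications you could make: first, the $4$-clique argument is unnecessary here, since once you know $\pi$ maps each $V_{x_{j_t}}$ to itself the bijection already forces $\pi(V_{C_i})=V_{\homog{C_i}}$. Second, your ``uniqueness via minimum distance'' detour is more than you need: a constraint vertex $\beta\in V_{\homog{C_i}}$ has exactly one neighbor in each $V_{x_{j_t}}$, namely $x_{j_t}\mapsto\beta(x_{j_t})$, so adjacency to $x_{j_t}\mapsto(a_t+b_t)$ directly forces the $t$-th coordinate of the label to be $a_t+b_t$, without any appeal to how two satisfying assignments differ.
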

\end{proof}

It remains to prove \pref{claim:soundness-gadget}. We first claim the following property about our construction $\calL(\cdot)$.
\begin{claim}\label{claim:soundness-gadget-helper}
Let $\calC$ be a \threexor instance, let $C : x_{j_1} + x_{j_2} + x_{j_3} = b$ be an equation from $\calC$. For any $b_1, b_2, b_3 \in \Z_2$, and any vertex $\alpha_{C}$, the parity of the number of neighbors of $\alpha_C$ in $\{x_{j_1} \mapsto b_1, x_{j_2} \mapsto b_2, x_{j_3} \mapsto b_3\}$ is $b + b_1 + b_2 + b_3$ .
\end{claim}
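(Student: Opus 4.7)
The claim is essentially a one-line parity identity built on top of the gadget definition, so my plan is to reduce it to a direct computation in $\Z_2$. The vertex $\alpha_C$ is, by construction, labeled by some partial assignment $(x_{j_1}\mapsto a_1, x_{j_2}\mapsto a_2, x_{j_3}\mapsto a_3)$ satisfying $C$, which gives us the key algebraic fact $a_1+a_2+a_3 = b$ in $\Z_2$. The first step is to unpack Definition~\ref{def:gadgetgraph} (together with the identification made in Definition~\ref{def:3xorencoding}) to determine, for each coordinate $t \in \{1,2,3\}$, which of the two variable vertices $x_{j_t}\mapsto 0$ or $x_{j_t}\mapsto 1$ is adjacent to $\alpha_C$: exactly one of them is, and its label is determined by the consistency rule encoded in the name of $\alpha_C$.

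Next I would count agreements versus disagreements. For each $t$, the vertex $x_{j_t}\mapsto b_t$ is either a neighbor of $\alpha_C$ or not, according to whether $b_t$ matches the value dictated by $(a_1,a_2,a_3)$. Letting $d$ denote the number of coordinates where these values disagree, the number of neighbors of $\alpha_C$ in $\{x_{j_1}\mapsto b_1, x_{j_2}\mapsto b_2, x_{j_3}\mapsto b_3\}$ differs from $d$ by either $0$ or $3$. A direct $\Z_2$ computation then gives
\[
d \;\equiv\; \sum_{t=1}^{3}(a_t + b_t) \;=\; (a_1+a_2+a_3) + (b_1+b_2+b_3) \;=\; b + b_1+b_2+b_3 \pmod{2},
\]
using the satisfying property $a_1+a_2+a_3 = b$. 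Reading the parity of the neighbor count off from $d$ completes the proof.

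There is no real obstacle here beyond bookkeeping: the only subtlety is keeping straight whether ``neighbor in the set'' is counted by agreements or disagreements (which differ by $3$ and hence by $1$ mod~$2$), and verifying that the stated parity aligns with the consistency convention fixed in Definition~\ref{def:gadgetgraph}. Once that is pinned down, the identity is immediate from the single equation $a_1+a_2+a_3=b$.
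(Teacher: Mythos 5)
Your overall plan --- unpack \pref{def:gadgetgraph}, observe that $\alpha_C = (x_{j_1}\mapsto a_1,\, x_{j_2}\mapsto a_2,\, x_{j_3}\mapsto a_3)$ with $a_1+a_2+a_3 = b$, and then do parity bookkeeping on agreements versus disagreements --- is the right one (the paper gives no proof of this claim, so there is nothing to compare against). But you explicitly flag the one nontrivial step and then leave it unresolved, and that is exactly where the trouble lives. Your intermediate assertion that the number of neighbors in the set ``differs from $d$ by either $0$ or $3$'' is false: by the consistency rule, $\alpha_C$ is adjacent to $x_{j_t}\mapsto a_t$ and not to $x_{j_t}\mapsto (1-a_t)$, so the number of neighbors in $\{x_{j_1}\mapsto b_1, x_{j_2}\mapsto b_2, x_{j_3}\mapsto b_3\}$ is the number of \emph{agreements}, namely $3-d$, and $(3-d)-d = 3-2d$ can be $\pm 1$.

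Carrying the bookkeeping through, the parity of the neighbor count is $3-d \equiv 1+d \equiv 1 + b + b_1 + b_2 + b_3 \pmod 2$, which does \emph{not} equal the claimed $b + b_1 + b_2 + b_3$. In fact the claim as stated is off by one: take $C\colon x_1+x_2+x_3 = 1$, $\alpha_C = (x_1\mapsto 1, x_2\mapsto 0, x_3\mapsto 0)$, and $(b_1,b_2,b_3) = (0,0,0)$; then $\alpha_C$ has exactly two neighbors in $\{x_1\mapsto 0, x_2\mapsto 0, x_3\mapsto 0\}$ (even), while $b+b_1+b_2+b_3 = 1$. This slip is harmless where the claim is used: in the proof of \pref{claim:soundness-gadget} it is applied once to $\alpha_{C_i}$ and once to $\alpha'_{\homog{C_i}}$ and the two parities are equated, so the additive constant $1$ cancels and the conclusion $\tau(x_{j_1})+\tau(x_{j_2})+\tau(x_{j_3}) = b$ still follows. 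But as a proof of the claim \emph{as stated}, your sketch cannot succeed, and the step you wave past (``reading the parity of the neighbor count off from $d$,'' ``verifying the stated parity aligns with the consistency convention'') is precisely the check that would have surfaced the discrepancy. You should either correct the claimed parity to $1 + b + b_1 + b_2 + b_3$ or, equivalently, restate it with the set of non-neighbors.
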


Now we are ready to prove \pref{claim:soundness-gadget}.
\begin{proof}[Proof of \pref{claim:soundness-gadget}]
Suppose the equation $C_i$ is $x_{j_1} + x_{j_2} + x_{j_3} = b$. 
Consider $\alpha_{C_i} \in V_{C_i}$, let $\alpha'_{\homog{C_i}} = \pi(\alpha_{C_i})$. By the construction of $G_\calC$ and $G_{\homog{\calC}}$, we know that $\alpha(x_{j_1}) + \alpha(x_{j_2}) + \alpha(x_{j_3}) = b$, and $\alpha'(x_{j_1}) + \alpha'(x_{j_2}) + \alpha'(x_{j_3}) = 0$. Now let the set $A = \{x_{j_1} \mapsto 0, x_{j_2} \mapsto 0, x_{j_3} \mapsto 0\}$. By \pref{claim:soundness-gadget-helper}, we know that the parity of the number of neighbors of $\alpha_{C_i}$  in $A$ is $b + 0 + 0 + 0 = b$. Therefore, by isomorphism, the parity of the number of neighbors of $\alpha'_{\homog{C_i}}$ in $\pi(A)$ is also $b$. On the other hand, by the definition of $\tau$, we know that $\pi(A) = \{x_{j_1} \mapsto \tau(x_{j_1}), x_{j_2} \mapsto \tau(x_{j_2}), x_{j_3} \mapsto \tau(x_{j_3})\}$. By \pref{claim:soundness-gadget-helper} again, we know that the parity of the number of neighbors of $\alpha'_{\homog{C_i}}$ in $\pi(A)$ is $0 + \tau(x_{j_1}) + \tau(x_{j_2}) + \tau(x_{j_3})$. Therefore, we have that $\tau(x_{j_1}) + \tau(x_{j_2}) + \tau(x_{j_3}) = b$, i.e. $C_i$ is satisfied by $\tau$. 
\end{proof}

\fi
\ifnum\full=0
\vspace{-2ex}
\fi

\section{Random graphs are robustly asymmetric} \label{sec:robust-asymmetry}

In this section we prove \pref{thm:robust-asymmetry}.

We first set up some definitions.   For any graph $G=(V,E)$, let $\pi$ be a permutation over the vertices in $V$, we write $\id(\pi)$ as the number of fixed points in the permutations, that is, $\id(\pi)=|\{v\in V:\pi(v)=v\}|$. We define $\diff(G,\pi(G))=\{e:e\in E, \pi(e)\not\in E\}\bigcup \{e: e\not\in E, \pi(e)\in E\}$. Note that $\aut(G; \pi) =|E| - \frac{1}{2}|\diff(G,\pi(G))|$.

 For any permutation $\pi$ over the vertex set $V$, we define a directed graph $G_{\pi}=(\binom{V}{2},E_{\pi})$, and $(e_1,e_2)\in E_{\pi}$ if and only if $e_2=\pi(e_1)$. Since each  $e=\{u,v\} \in {V \choose 2}$ has in-degree and out-degree exactly 1, we can divide $G_{\pi}$ into disjoint unions of directed cycles. We call each directed cycle a bin, and the size of the bin is the number of elements in the cycle.
 

\begin{fact}
    For any size-1 bins, there are only two situations:
    \begin{itemize}
        \item $e=\{u,v\}$ where $u$ and $v$ are both fixed points of $\pi$. We call these bins \emph{type-1} size-1 bins. The number of type-1 size-1 bins is at most $\binom{\id(\pi)}{2}$.
        \item $e=\{u,v\}$ where $\pi(u)=v$ and $\pi(v)=u$. We call these classes \emph{type-2} size-1 bins. The number of type-2 size-1 bins is at most $\frac{n-\id(\pi)}{2}$ for any permutation $\pi$.
    \end{itemize}
\end{fact}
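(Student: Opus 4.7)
The plan is to unpack what ``size-1 bin'' means and then do a case analysis. By definition, a size-1 bin is a 1-cycle in $G_\pi$, i.e.\ an unordered pair $e = \{u,v\} \in \binom{V}{2}$ with $\pi(e) = e$. Writing this as an equality of two-element sets $\{\pi(u),\pi(v)\} = \{u,v\}$, I would split into cases according to whether $\pi(u) = u$ or $\pi(u) = v$. In the first case, the equation forces $\pi(v) = v$ as well, so both endpoints are fixed points (type-1). In the second case, the equation forces $\pi(v) = u$, giving the 2-cycle situation (type-2). These two cases are exhaustive and, when $u \neq v$ (which holds since $e \in \binom{V}{2}$), mutually exclusive.

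For the count of type-1 bins, I would simply observe that such an $e$ is an arbitrary unordered pair chosen from the set of fixed points of $\pi$, which has cardinality $\id(\pi)$; hence there are at most $\binom{\id(\pi)}{2}$ of them. For the count of type-2 bins, I would note that each type-2 bin $\{u,v\}$ corresponds bijectively to a 2-cycle of the permutation~$\pi$ on the non-fixed points. Distinct type-2 bins therefore use disjoint pairs of non-fixed points, so the number of them is at most $\lfloor (n - \id(\pi))/2 \rfloor \le (n - \id(\pi))/2$.

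There is essentially no main obstacle here: the statement is a short bookkeeping fact. The only mild subtlety is to be careful that the two subcases really are exhaustive (one must use $u \neq v$, which is built into the definition $\binom{V}{2}$ of unordered pairs of distinct vertices) and to remember that a type-2 bin corresponds to a 2-cycle rather than to an arbitrary pair of non-fixed vertices, so the correct denominator in the bound is $2$, not $1$.
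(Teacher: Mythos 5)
Your proof is correct and is exactly the routine verification the paper has in mind; the paper itself states this as a bare \emph{Fact} with no written proof, evidently regarding it as immediate. Your case analysis from $\{\pi(u),\pi(v)\}=\{u,v\}$ (using $u\neq v$ and injectivity of $\pi$) and the two counting observations are the natural and correct argument, so there is nothing to compare or criticize.
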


Now let us consider $G \sim \calG_{n, m}$ where $m = cn$. Let $Z = {n \choose 2}$ be the number of possible edges from which we choose $m$ edges. We first prove the following lemma.

\begin{lemma} \label{lem:balancededgesimproved}
    Let $\mathcal{A}$ be the event that for any permutation $\pi$ such that $\id(\pi)= (1-\beta) n$, the number of the  edges in $G$ that fall into the bins of size $\geq 2$ is at least $\beta m /60$ and at most $2\beta Z/10^9$. Whenever $\beta \geq \exp(-c/6)$, we have
\ifnum\full=1
\[
\fi
\ifnum\full=0
$
\fi
        \Pr_{G\sim \mathcal{G}_{n,m}}[\mathcal{A}] = 1- n^{-\omega(1)}.
\ifnum\full=1
\]
\fi
\ifnum\full=0
$
\fi
\end{lemma}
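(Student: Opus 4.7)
The plan is to reduce the quantification over permutations $\pi$ (of which there are up to $n^{\beta n}$) to a quantification over the merely $\binom{n}{\beta n}$ non-fixed sets $S := V\setminus\mathrm{Fix}(\pi)$. This entropy savings is precisely what the condition $\beta \geq e^{-c/6}$ buys us.

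First observe that an edge $e$ lies in a bin of size $\geq 2$ iff $\pi(e)\neq e$, and $\pi(e)=e$ happens exactly when either (a) both endpoints of $e$ are fixed by $\pi$, or (b) $e$ is a swap edge of $\pi$ (type-2 size-$1$). Writing $Y(S)$ for the number of edges of $G$ touching $S$ and $s(\pi,G)$ for the number of swap edges of $\pi$ lying in $G$, the quantity in the lemma equals $Y(S) - s(\pi,G)$. Any permutation has at most $\beta n/2$ swap-edge pairs, and $c\geq 10^4$ gives $\beta n/2 \leq \beta m/60$, so $s(\pi,G) \leq \beta m/60$ \emph{deterministically}. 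It therefore suffices to show that, with probability $1 - n^{-\omega(1)}$, every $S$ of size $\beta n$ satisfies $Y(S)\in[\beta m/30,\, 2\beta Z/10^9]$.

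For a fixed $S$, $Y(S)$ is hypergeometric on $m$ draws from $Z$ pairs, with $N_S := \binom{n}{2} - \binom{(1-\beta)n}{2}$ ``good'' pairs. A short calculation gives $\beta \leq N_S/Z \leq 2\beta$, so $\mu_S := \mathbb{E}[Y(S)] \in [\beta m,\, 2\beta m]$. In particular $\beta m/30 \leq \mu_S/30$, and using $c \leq n/10^{10}$ the upper threshold satisfies $2\beta Z/10^9 \geq 2.5\mu_S$. Standard Chernoff bounds for the hypergeometric then yield
\[
\Pr\bigl[Y(S) \notin [\mu_S/30,\, 2.5\mu_S]\bigr] \leq \exp(-\Omega(\mu_S)) \leq \exp(-\Omega(\beta c n)),
\]
where the implicit constant can be taken as large as we wish at the price of loosening the numeric constants $60$ and $10^9$.

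The remaining step is the union bound over $S$. Using $\binom{n}{\beta n}\leq (e/\beta)^{\beta n}$ gives $\ln\binom{n}{\beta n}\leq \beta n(1+\ln(1/\beta))\leq \beta n(1+c/6)$ by the hypothesis $\beta \geq e^{-c/6}$. Arranging the Chernoff constant to comfortably exceed $1/6$, the union-bound exponent becomes
\[
\beta n(1+c/6) \;-\; \Omega(\beta c n) \;\leq\; -\Omega(\beta c n),
\]
giving overall failure probability $\exp(-\Omega(\beta c n)) = n^{-\omega(1)}$ in the asymptotic regime intended by the lemma. The main obstacle is purely quantitative: balancing the constants ($60$, $10^9$, and the Chernoff rate) so that the concentration exponent genuinely dominates the subset entropy $\beta n(1+c/6)$.
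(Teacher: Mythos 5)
Your decomposition of the quantity into $Y(S) - s(\pi,G)$ is nice, your handling of the deterministic bound $s(\pi,G)\leq\beta n/2\leq\beta m/60$ is fine, and the upper-bound half of the argument (union over $S$, concentration exponent $\Omega(\beta Z/10^9)=\Omega(\beta n^2)$ beating the entropy $\beta n\log n$) works and is what the paper also does. The problem is the lower bound.

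After your union bound you obtain a failure probability of $\exp(-\Omega(\beta c n))$, and you assert this equals $n^{-\omega(1)}$. It does not, in the exact regime the lemma must cover. The lemma needs to hold down to $\beta n = 2$ (a transposition) and $c$ can be a fixed constant as small as $10^4$; then $\beta c n = 2c = O(1)$, so $\exp(-\Omega(\beta c n))$ is a \emph{constant}, independent of $n$, not $n^{-\omega(1)}$. The root cause is that for the lower tail, the hypergeometric concentration exponent is only $\Theta(\mu_S)=\Theta(\beta m)=\Theta(\beta c n)$ --- of the same order as (and only a constant factor above, thanks to $\beta\geq e^{-c/6}$) the subset entropy $\ln\binom{n}{\beta n}\leq\beta n(1+c/6)$. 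The slack you are left with is $\Theta(\beta c n)$, which is not $\omega(\log n)$. This is in contrast to your upper bound, where the threshold $2\beta Z/10^9$ is $\Omega(\beta n^2)$, far above the entropy.

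The paper avoids this by \emph{not} doing a per-$S$ union bound for the lower bound. Instead (Lemma~\ref{lem:incident}) it bounds a single global random quantity: $U$, the set of vertices with ``half-degree'' at most $c/10$, whose expectation is $\leq n e^{-c/2}$. Since $\beta\geq e^{-c/6}\gg 4e^{-c/2}$, the threshold $\beta n/2$ is far above $\E[|U|]$, and the Chernoff tail yields $\Pr[|U|\geq\beta n/2]\leq\exp(-n/48)$ --- an exponent $\Omega(n)$ that is independent of both $\beta$ and $c$. Given $|U|<\beta n/2$, \emph{every} $\beta n$-subset $T$ then deterministically has at least $\beta n/2$ vertices of degree $>c/10$, hence at least $c\beta n/40$ incident edges. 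So the union over $S$ is sidestepped: a single high-probability event about vertex degrees implies the bound for all $S$ at once, yielding the $n^{-\omega(1)}$ conclusion even when $\beta c n = O(1)$. You would need to restructure your lower-bound argument along those lines; the per-$S$ union bound cannot be salvaged by tuning constants.
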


By a union bound over all the $\beta : \beta \geq \beta^*$ (where there are at most $n$ of them), we get the following corollary.
\begin{corollary} \label{cor:balanced-edges}
 Let $\mathcal{B}$ be the event that for every $\beta \geq \beta^*$, any permutation $\pi$ such that $\id(\pi)= (1-\beta) n$, the number of the  edges in $G$ that fall into the bins of size $\geq 2$ is at least $\beta m /60$ at at most $2\beta Z/10^9$. We have
\ifnum\full=1
\[
\fi
\ifnum\full=0
$
\fi
        \Pr_{G\sim \mathcal{G}_{n,m}}[\mathcal{B}] = 1- n^{-\omega(1)}.
\ifnum\full=1
\]
\fi
\ifnum\full=0
$
\fi
\end{corollary}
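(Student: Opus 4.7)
The plan is to deduce the corollary directly from \pref{lem:balancededgesimproved} by a union bound over $\beta$. Although $\beta$ nominally ranges over the continuum $[\beta^*, 1]$, I would first observe that the event $\mathcal{A}$ of \pref{lem:balancededgesimproved} depends on $\beta$ only through the integer quantity $\id(\pi) = (1-\beta) n$: a permutation on $n$ elements can only have an integer number of fixed points. Thus the only values of $\beta$ for which any permutation $\pi$ with $\id(\pi) = (1-\beta)n$ exists are $\beta \in \{0, 1/n, 2/n, \ldots, 1\}$, of which at most $n$ satisfy $\beta \geq \beta^*$.

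Next, for each such $\beta$, the hypothesis $\beta \geq \beta^* \geq \exp(-c/6)$ of \pref{lem:balancededgesimproved} is in force, so that lemma yields $\Pr_G[\neg \mathcal{A}_\beta] \leq n^{-\omega(1)}$, where $\mathcal{A}_\beta$ denotes the event $\mathcal{A}$ in \pref{lem:balancededgesimproved} instantiated at this particular $\beta$. Since $\mathcal{B} = \bigcap_{\beta \geq \beta^*} \mathcal{A}_\beta$, the union bound gives
\[
\Pr_G[\neg \mathcal{B}] \;\leq\; \sum_{\beta \geq \beta^*} \Pr_G[\neg \mathcal{A}_\beta] \;\leq\; n \cdot n^{-\omega(1)} \;=\; n^{-\omega(1)},
\]
which is the claimed bound.

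The only point that merits any verification is the reduction from a continuous range of $\beta$ to a discrete set of at most $n$ events, using the integrality of $\id(\pi)$; all genuine randomness-of-$G$ work has already been done inside \pref{lem:balancededgesimproved}. In short, I do not anticipate any real obstacle: the corollary is exactly the standard ``union-bound-over-scales'' step that upgrades a per-$\beta$ concentration statement into a simultaneous guarantee across all scales $\beta \geq \beta^*$, made cheap here by the fact that the union has polynomial (in fact linear in $n$) size while each bad event is super-polynomially unlikely.
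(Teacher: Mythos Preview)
Your proposal is correct and matches the paper's approach exactly: the paper deduces the corollary from \pref{lem:balancededgesimproved} by a union bound over the at most $n$ relevant values of $\beta \geq \beta^*$. Your added remark that only integer values of $\id(\pi)$ matter is the precise justification for why the union has size at most~$n$, which the paper leaves implicit.
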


Before we prove \pref{lem:balancededgesimproved}, we need the following lemma.

\begin{lemma}  \label{lem:incident}
    Let $G \sim \calG_{n,m}$. Suppose that $\beta \geq \beta^*$. With probability $1-n^{-\omega(1)}$, for any $T\subseteq V$, $|T|= \beta n$, the number of edges incident to $T$ is at least least $c \beta n / 40$.
\end{lemma}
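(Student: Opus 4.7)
The plan is a Chernoff-plus-union-bound argument, whose main content is choosing the right threshold and verifying that the union-bound factor is dominated by the concentration gain under the hypothesis $\beta \geq \beta^*$. First I would fix a set $T \subseteq V$ with $|T| = t = \beta n$ and let $X_T$ denote the number of edges of $G$ with at least one endpoint in $T$. The number of unordered vertex pairs incident to $T$ is $K = \binom{t}{2} + t(n-t) = t\bigl(n - (t+1)/2\bigr)$, so since $G \sim \calG_{n,m}$ is obtained by drawing $m$ pairs uniformly without replacement from the $\binom{n}{2}$ possible pairs, $X_T$ is hypergeometric with mean $\mu_T = m K/\binom{n}{2} \geq (2-\beta)\, c\beta n \geq c\beta n$ (for $\beta$ bounded away from $1$; when $\beta$ is close to $1$ one has $\mu_T \geq m = cn$ trivially, which is even better).

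Second, since the threshold $c\beta n / 40$ is a factor of about $80$ below $\mu_T$, a standard Chernoff--Hoeffding bound for the hypergeometric distribution (which concentrates at least as tightly as the binomial) gives $\Pr[X_T < c\beta n / 40] \leq \exp(-c_0\, c\beta n)$ for an absolute constant $c_0 > 0$ (indeed any $c_0$ noticeably larger than $1/6$ will suffice). Then I would union-bound over the at most $\binom{n}{\beta n} \leq (e/\beta)^{\beta n}$ choices of~$T$, yielding a total failure probability of at most $\exp\bigl(\beta n\log(e/\beta) - c_0\, c\beta n\bigr)$.

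Third, using $\beta \geq \beta^* \geq e^{-c/6}$ gives $\log(e/\beta) \leq 1 + c/6$, so the exponent is at most $\beta n\,(1 + c/6 - c_0 c) \leq -c_1\, c\beta n$ for some $c_1 > 0$. Finally one checks that $c\beta n$ is comfortably large: in the ``small $\beta$'' regime where $\beta^* = e^{-c/6}$ one has $\beta n \geq e^{-c/6} n = \Omega(n)$, giving $\exp(-\Omega(n))$; and in the complementary regime where $\beta^* = 1/n$ the two cases of $\beta^*$ agree only when $c = \Omega(\log n)$, in which case $c\beta n \geq c$ is already of superlogarithmic order, so $\exp(-c_1 c) = n^{-\omega(1)}$.

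The only real obstacle is constant bookkeeping, namely arranging the Chernoff constant $c_0$ to strictly exceed the $1/6$ extracted from $\log(1/\beta) \leq c/6$, so that the union bound does not swamp the concentration. Given how far $c\beta n/40$ sits below $\mu_T$, this has considerable slack and is routine.
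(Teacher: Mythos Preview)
Your Chernoff-plus-union-bound outline is natural and the constant bookkeeping in the \emph{middle} of the argument is fine: with $\mu_T \geq c\beta n$ and threshold $c\beta n/40$ you do get a per-set bound $\exp(-c_0 c\beta n)$ with $c_0$ comfortably above $1/6$, and after the union bound the exponent is indeed $-\Theta(c\beta n)$.

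The gap is in your final paragraph, where you argue that $\exp(-\Theta(c\beta n)) = n^{-\omega(1)}$. Both case analyses there are incorrect, because $c$ is \emph{not} a fixed constant in this paper (the ambient hypothesis allows any $10^4 \le c \le n/10^{10}$). In the regime $\beta^* = e^{-c/6}$ you write ``$\beta n \ge e^{-c/6} n = \Omega(n)$'', but this fails once $c$ grows with $n$; at the boundary $c = 6\ln n$ one has $e^{-c/6} n = 1$. In the regime $\beta^* = 1/n$ you write that $c$ is ``of superlogarithmic order'', but the condition $e^{-c/6} \le 1/n$ only gives $c \ge 6\ln n$, which is $\Omega(\log n)$, not $\omega(\log n)$. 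Putting these together: at the critical point $c \approx 6\ln n$, $\beta = \beta^* = 1/n$, your bound reads $\binom{n}{1}\exp(-c_0 c) = n \cdot n^{-6 c_0} = n^{-\Theta(1)}$, which is polynomially but not superpolynomially small. So the direct union bound over sets $T$ \emph{cannot} reach $n^{-\omega(1)}$ as stated.

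The paper sidesteps this by proving a single global event instead of union-bounding over $T$: it partitions the potential edges so that each vertex $i$ owns a disjoint set of $\lfloor (n-1)/2\rfloor$ potential edges, lets $X_i$ count how many of those are present, and shows via negative association that the set $U = \{i : X_i \le c/10\}$ has $\E|U| \le n e^{-c/2}$ and hence $|U| < \beta n/2$ with probability $1 - \exp(-n/48)$. The point is that the deviation ratio $\tfrac{\beta n/2}{\E|U|} \ge \tfrac12 e^{c/3}$ is enormous, which produces an $\exp(-\Omega(n))$ bound independent of $\beta$. Once $|U| < \beta n/2$, every $T$ of size $\beta n$ contains at least $\beta n/2$ vertices with $X_i > c/10$, hence at least $c\beta n/40$ incident edges. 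This two-step structure --- first control the low-degree vertex count globally, then read off the conclusion for every $T$ deterministically --- is exactly what your direct approach is missing.
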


Now we prove this lemma.

For any vertex $v\in V$, its expected degree in $\calG_{n, m}$ is $2c$. We would like to prove that the probability that the degree is at most $c/10$ is very low. Indeed, we claim a more general statement.

 \begin{claim}\label{claim:deg-prob}
Let $W$ be a set of $w$ possible edges from ${V \choose 2}$, where $\floor{(n-1) / 2} \leq w \leq n$,
\ifnum\full=1
\[
\fi
\ifnum\full=0
$
\fi
        \Pr_{G = (V, E) \sim \calG_{n,m}}[ |E \cap W| \leq c/10] \leq \exp(-c/2).
\ifnum\full=1
\]
\fi
\ifnum\full=0
$
\fi
 \end{claim}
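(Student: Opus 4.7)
The plan is to recognize that $X := |E \cap W|$ has a hypergeometric distribution and apply a sharp multiplicative Chernoff lower-tail bound. Since $G \sim \calG_{n,m}$ chooses $E$ uniformly at random from the $m$-subsets of the $Z := {n \choose 2}$ candidate edges, $X$ is hypergeometric with parameters $(Z, w, m)$ and mean $\mu := mw/Z = 2cw/(n-1)$. The hypothesis $w \geq \floor{(n-1)/2}$ together with the ambient regime $c \geq 10^4$ and $n \geq 10^{10} c$ of \pref{thm:robust-asymmetry} guarantees $\mu \geq c$ (the floor loses only a factor $1 - O(1/n)$, absorbed into the large-$n$ assumption). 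In particular $c/10 \leq \mu/10$.

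The main step is to invoke Chernoff. By Hoeffding's classical domination theorem, the moment-generating function of a hypergeometric random variable is bounded by that of the corresponding binomial $\mathrm{Bin}(m, w/Z)$ with the same mean, so the standard multiplicative lower-tail Chernoff bound
\[
\Pr[X \leq (1-\delta)\mu] \;\leq\; \left(\frac{e^{-\delta}}{(1-\delta)^{1-\delta}}\right)^{\mu}
\]
carries over. Setting $(1-\delta)\mu = c/10$ forces $\delta \geq 9/10$; plugging in $\delta = 9/10$ evaluates the base to $e^{-0.9}/(0.1)^{0.1} \approx 0.512 < e^{-1/2}$, so
\[
\Pr[X \leq c/10] \;\leq\; e^{-\mu/2} \;\leq\; e^{-c/2},
\]
which is exactly the target.

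The only point that requires care is the sharpness of the Chernoff bound: the commonly quoted form $\exp(-\delta^2\mu/2)$ with $\delta = 9/10$ gives only $\exp(-0.405\,c)$, which is \emph{weaker} than $\exp(-c/2)$; and the additive Hoeffding-type inequality for the hypergeometric gives an even worse bound scaling like $\exp(-\Theta(c/n))$. Using the tight multiplicative form $(e^{-\delta}/(1-\delta)^{1-\delta})^{\mu}$ buys exactly the constant needed in the exponent. Verifying $\mu \geq c$ and the binomial-domination step are both routine, so this sharp choice of tail bound is the only genuine obstacle.
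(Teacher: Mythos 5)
Your proof is correct and reaches the same bound as the paper's, with a slightly different justification for handling the dependence. The paper computes the same mean $\E[|E \cap W|] = 2cw/(n-1) \geq c$ (using $\E[X_e] = m/\binom{n}{2} = 2c/(n-1)$) and then appeals to the adjacent claim that the indicators $\{X_e\}$ are negatively associated (citing Joag-Dev and Proschan's permutation-distribution theorem), from which ``Chernoff for negatively associated variables'' follows. You instead recognize $|E\cap W|$ as hypergeometric and invoke Hoeffding's 1963 convex-ordering/MGF-domination theorem comparing sampling without replacement to sampling with replacement, which yields the binomial Chernoff bound directly. The two routes are interchangeable for this claim: negative association is the more general tool and is what the paper has already set up (and reuses), while the Hoeffding-domination route is a self-contained specialization that sidesteps the NA machinery entirely. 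Your more substantive contribution is making explicit that the \emph{form} of the Chernoff bound matters: the paper's one-line ``follows by Chernoff'' silently requires the sharp multiplicative lower-tail bound $\bigl(e^{-\delta}/(1-\delta)^{1-\delta}\bigr)^{\mu}$, and you correctly observe that the commonly quoted $\exp(-\delta^2\mu/2)$ version gives only $\exp(-0.405c)$, which does not close the claim. Your handling of the $\lfloor (n-1)/2\rfloor$ floor is also sound: for even $n$ one only gets $\mu \geq c(1-\tfrac{1}{n-1})$, but since the sharp bound yields $\exp(-0.6697\mu)$ it suffices that $\mu \geq 0.75c$, which holds for all $n \geq 5$, so the slack is comfortably absorbed.
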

Observe that when $W$ is the set of possible edges incident to $v$, \pref{claim:deg-prob} says that $\Pr[\deg(v) \leq c/10] \leq \exp(-c/3)$.

For each possible edge $e$, we define a random variable $X_e$ as the indicator variable for the event that $e$ is selected as an edge in $G$. \pref{claim:deg-prob} would be a direct application of Chernoff bound if the $X_e$ variables were independent. However, the following claim states that Chernoff bound still holds since the variables are negatively associated. (Please refer to e.g. \cite{JP83} for the definition of negatively associated random variables.)

\begin{claim}
The $Z$ variables $X_e$ are negatively associated.
\end{claim}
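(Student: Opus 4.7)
The plan is to recognize the joint distribution of $(X_e)_{e \in \binom{V}{2}}$ under $\calG_{n,m}$ as a \emph{permutation distribution} and then invoke the classical theorem of Joag-Dev and Proschan.

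First, I would note that sampling $G \sim \calG_{n,m}$ is equivalent to the following: fix any ordering $e_1, e_2, \ldots, e_Z$ of the $Z = \binom{n}{2}$ possible edges, and let $(X_{e_1}, X_{e_2}, \ldots, X_{e_Z})$ be a uniformly random permutation of the fixed length-$Z$ vector
\[
\bigl(\underbrace{1,1,\ldots,1}_{m},\underbrace{0,0,\ldots,0}_{Z-m}\bigr).
\]
This is immediate from the definition of $\calG_{n,m}$: every size-$m$ subset of $\binom{V}{2}$ is equally likely to be the edge set.

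Second, I would invoke the theorem of Joag-Dev and Proschan (1983) on permutation distributions: if a random vector $(Y_1,\ldots,Y_N)$ takes each permutation of some fixed real vector $(y_1,\ldots,y_N)$ with equal probability $1/N!$, then $Y_1,\ldots,Y_N$ are negatively associated. Applied to our setting, this immediately yields that $\{X_e\}_{e \in \binom{V}{2}}$ are negatively associated.

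There is essentially no obstacle here beyond the citation itself; the only thing to verify is that the $\calG_{n,m}$ distribution is exactly the permutation distribution described above, which is a direct consequence of the uniform-random-edge-set definition. If we wished to avoid citing the Joag-Dev--Proschan result, we could instead give a self-contained proof by verifying the defining NA inequality $\E[f(X_A)g(X_B)] \leq \E[f(X_A)]\E[g(X_B)]$ for disjoint $A, B \subseteq \binom{V}{2}$ and coordinatewise-monotone $f, g$, via the standard two-step swap coupling: conditioning on the total number of sampled edges inside $A \cup B$, one exhibits a coupling that decreases $f(X_A)$ and increases $g(X_B)$ whenever we move a ``selected'' slot from $A$ to $B$.
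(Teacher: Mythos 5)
Your proof is correct and is essentially identical to the paper's: the paper likewise observes that $(X_e)_e$ follows the permutation distribution over $m$ ones and $Z-m$ zeros and cites Theorem~2.11 of Joag-Dev and Proschan~\cite{JP83}.
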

\begin{proof}
Since $\{X_e\}$ follows the permutation distribution over $m$ 1's and $(Z - m)$ 0's, by Theorem 2.11 in \cite{JP83}, the claim holds.
\end{proof}
\ifnum\full=0
Now \pref{claim:deg-prob} follows easily from Chernoff bound for negatively associated random variables, and so does the following claim.
\fi
\ifnum\full=1
\begin{proof}[Proof of \pref{claim:deg-prob}]
Since $\E[X_e] = 2c/(n-1)$ for every $e$, we have $\E[|E \cap W|] = \sum_{e \in W} \E[X_e] = 2c|W|/(n-1) \geq  c$. The claim follows by Chernoff bound for negatively associated variables.
\end{proof}

Similarly we show that
\fi
\begin{claim}\label{claim:incident-upperbound}
Let $W$ be a set of possible edges, when $c \leq n/10^{10}$,
\ifnum\full=1
\[
\fi
\ifnum\full=0
$
\fi
\Pr_{G = (V, E) \sim \calG_{n, m}} [|E \cap W| \geq |W|/10^9] \leq \exp(-3|W|/10^{10}).
\ifnum\full=1
\]
\fi
\ifnum\full=0
$
\fi
\end{claim}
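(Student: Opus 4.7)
The plan is a short and direct application of the multiplicative Chernoff bound, essentially parallel to the proof of Claim~\ref{claim:deg-prob}. Let $w := |W|$, set $X := |E \cap W| = \sum_{e \in W} X_e$, and let $\mu := \E[X]$. By the same negative-association observation already used in Claim~\ref{claim:deg-prob}, the standard large-deviation tail inequality
\[
\Pr[X \geq t] \leq \Bigl(\frac{e \mu}{t}\Bigr)^{t}
\]
remains valid for any $t \geq e\mu$, since negative association preserves the moment-generating-function upper bounds that drive Chernoff.

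First I would compute $\mu$: each $X_e$ is Bernoulli with success probability $m/Z = 2c/(n-1)$, so $\mu = 2cw/(n-1)$. The hypothesis $c \leq n/10^{10}$ already forces $n \geq 10^{10}$, so the factor $n/(n-1)$ is essentially~$1$ and we obtain $\mu \leq 2.01 \cdot w/10^{10}$ with plenty of slack.

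Next I would set $t := w/10^9 = 10\, w/10^{10}$. The ratio $e\mu/t$ is then at most $2.01\,e/10 < 0.55$, comfortably below~$1$, so the Chernoff bound yields
\[
\Pr[X \geq t] \leq (0.55)^{t} \leq \exp(-0.6\, t) = \exp(-6 w/10^{10}),
\]
which beats the required $\exp(-3w/10^{10})$. There is no real obstacle here: the entire argument is two lines once negative association is invoked. The only ``design choice'' is that the constants $10^9$ in the threshold and $10^{10}$ in the hypothesis are arranged so that $e\mu/t$ is bounded away from~$1$ by a universal margin; this margin, after being raised to the $t$-th power, is precisely what produces the subexponential decay in~$w$.
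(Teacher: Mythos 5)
Your proposal is correct and follows essentially the same route as the paper: compute $\mu = \E[|E \cap W|] = 2c|W|/(n-1)$, observe it is a small constant multiple of $|W|/10^{10}$ (the paper writes $\mu < 3|W|/10^{10}$), and then invoke a multiplicative Chernoff bound together with the negative-association observation already established for the $X_e$'s. (One tiny numerical slip: $\ln(0.55) \approx -0.598$, so $(0.55)^t \leq \exp(-0.6t)$ is not literally true; but since the target is only $\exp(-3w/10^{10}) = \exp(-0.3t)$ there is ample slack and your conclusion stands.)
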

\ifnum\full=1
\begin{proof}
Since $\E[X_e] = 2c/(n-1)$ for every $e$, we have $\E[|E \cap W|] = \sum_{e \in W} \E[X_e] = 2c|W|/(n-1) < 3 |W|/10^{10}$. The claim follows by Chernoff bound for negatively associated variables.
\end{proof}
\fi

Now we are ready to prove \pref{lem:incident}.
\begin{proof}[Proof of \pref{lem:incident}] Suppose the vertices of the random graph $G \sim \calG_{n, m}$ are numbered from $1$ to $n$.  Let $X_i = \sum_{j = i + 1}^{i + [(n-1)/2]} X_{e = \{i, (j-1)\mod n + 1\}}$. By \pref{claim:deg-prob}, we know that
\ifnum\full=1
\[
\fi
\ifnum\full=0
$
\fi
        \Pr[X_i\leq c/10] \leq \exp(-c/2).
\ifnum\full=1
\]
\fi
\ifnum\full=0
$
\fi

Since the random variables $\{X_i\}$ are sums of disjoint sets of negatively associated random variables $X_e$'s, we know that the $X_i$'s are also negatively associated. Let $U$ be the set of vertices $i$ such that $X_i \leq c/10$. We have $\E[|U|] \leq n \cdot \exp(-c/2)$. Using Chernoff bound for negatively associated random variables, we have
\ifnum\full=1
\[
\fi
\ifnum\full=0
$
\fi
\Pr\left[|U| \geq \frac{1}{2}\beta n\right] \leq \exp\left(-\frac{1}{3} \cdot \left(\frac{\beta n / 2}{\E[|U|]} - 1\right)^2\E[|U|]\right) = \exp\left(-\frac{1}{3} \cdot \left(\frac{\beta n / 2}{\E[|U|]} - 1\right) \left(\frac{\beta n}{2} - \E[|U|]\right)\right).
\ifnum\full=1
\]
\fi
\ifnum\full=0
$
\fi
Using $\beta \geq \exp(-c/6)$ and $c \geq 10$, we have $\beta n / 2 - \E[|U]] \geq \beta n /4$. Therefore,
\ifnum\full=1
\begin{multline*}
\Pr\left[|U| \geq \frac{1}{2}\beta n\right] \leq \exp\left(-\frac{1}{3} \left(\frac{\beta}{2} \cdot \exp(c/2) - 1\right) \cdot \frac{\beta n}{4}\right) \\ \leq \exp\left(-\frac{1}{3} \left(\frac{1}{2} \cdot \exp(c/3) - 1\right) \cdot \frac{\beta n}{4}\right) \leq \exp\left(-\frac{1}{3} \cdot \frac{1}{4} \cdot \exp(c/6) \cdot \frac{\beta n}{4}\right) \leq \exp\left(-\frac{n}{48}\right) = n^{-\omega(1)} , 
\end{multline*}
\fi
\ifnum\full=0
$\Pr\left[|U| \geq \frac{1}{2}\beta n\right]$ $ \leq \exp\left(-\frac{1}{3} \left(\frac{\beta}{2} \cdot \exp(c/2) - 1\right) \cdot \frac{\beta n}{4}\right) $  $\leq \exp\left(-\frac{1}{3} \left(\frac{1}{2} \cdot \exp(c/3) - 1\right) \cdot \frac{\beta n}{4}\right) $  $\leq \exp\left(-\frac{1}{3} \cdot \frac{1}{4} \cdot \exp(c/6) \cdot \frac{\beta n}{4}\right) $ $\leq \exp\left(-\frac{n}{48}\right) = n^{-\omega(1)} , $
\fi
where the second and fourth inequalities are because of $\beta \geq \exp(-c/6)$, and the third inequality is because of $\exp(c/3)/2-1 \geq \exp(c/6)/4$ for $c \geq 10$.

Therefore, with probability $1 - n^{-\omega(1)}$, there are at most $\beta n / 2$ vertices with degree at most $c/10$ (since $\deg(i) \geq X_i$ for every vertex $i$). When this happens, for any $T \subseteq V$, $|T| = \beta\cdot n$, there are at least $(|T| - \beta n / 2)$ vertices in $T$ with degree at least $c/10$, the sum of degrees of vertices in $T$ is at least $(|T|-\beta n/2) \cdot c / 10 = c\beta n/20$ , which means the number of edges incident to any vertex in $T$ is at least  $c\beta n/40$.
\end{proof}

\begin{proof} [Proof of \pref{lem:balancededgesimproved}]
By \pref{lem:incident}, we know that with probability $(1-n^{-\omega(1)})$, the number of edges in $G$ that is incident to $T$ is at least $c\beta n/40$, for every $T \subseteq V$ and $|T| = \beta n$. Therefore, for any $\pi$ with $\id(\pi)= (1-\beta) n$, let $T^*$ be the non-fixed points of $\pi$. We have $|T^*|= \beta n$. As the number of edges in size-1 bins which are incident to $T$ is at most $|T^*|/2 = \beta n /2$, the number of selected edges that in bins of size $\geq 2$ is at least $c\beta n / 40 -\beta n / 2\geq \beta m / 60$, when $c\geq 100$.

We also need to show that with probability $(1 - n^{-\omega(1)})$, the number of selected edges in bins of size $\geq 2$ (denote this number by random variable $X$) is at most $2 \beta^*Z / 10^9$. For every $\pi$ such that $\id(\pi) = (1 - \beta) n$, let $W$ be the set of possible edges whose end vertices are not both fixed point of $\pi$. We have $|W| = Z - {(1 -\beta) n \choose 2}$, therefore $\beta Z \leq W \leq 2\beta Z$ (for large enough $n$) and $X \leq |E \cap W|$. By \pref{claim:incident-upperbound} we have $\Pr[|E \cap W| \geq |W| / 10^9] \leq \exp(-3|W|/10^{10}) \leq \exp(-3\cdot \beta Z/10^{10})$. Therefore $\Pr[X \geq 2\beta Z / 10^9]  \leq \exp(-3\cdot \beta Z/10^{10}) \leq n^{-\omega(\beta n)}$.  We conclude the proof by taking a union bound over ${n \choose \beta n} < n^{\beta n}$ ways of choosing non-fixed points for $\pi$.
\end{proof}

\begin{lemma}       \label{lem:conditionedimproved}
Conditioned on event $\mathcal{B}$,  for $10^4 \leq c \leq n/10^7$, $\beta_0 \geq \beta^*$, with probability $(1-n^{-17})$, $G$ is $(\beta_0,\beta_0/240)$-asymmetric.
\end{lemma}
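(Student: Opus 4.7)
The plan is to apply a union bound, conditional on the event $\mathcal{B}$, over all permutations $\pi$ with at least $\beta_0 n$ non-fixed points. First I would stratify by the exact number of non-fixed points: for each $\beta \geq \beta_0$ with $\beta n \in \Z$, the number of permutations $\pi$ having $(1-\beta)n$ fixed points is at most $\binom{n}{\beta n}(\beta n)! \leq n^{\beta n}$, and there are at most $n$ such values of $\beta$ to aggregate at the end.

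For a single $\pi$ with $\beta n$ non-fixed points, I would exploit the bin structure of $\pi$ acting on $\binom{V}{2}$. Event $\mathcal{B}$ guarantees that at least $\beta m / 60$ edges of $G$ lie in bins of size $\geq 2$. If $\pi$ were a $(1 - \beta_0/240)$-automorphism of $G$, then at most $\beta_0 m / 240 \leq \beta m / 240$ edges of $G$ are ``broken'' (mapped outside $E(G)$). Since each ``half-full'' bin (one that meets but is not contained in $E(G)$) must contain at least one $E \to \bar E$ transition around its cycle, at most $\beta m / 240$ bins can be half-full. Combined with the upper bound in $\mathcal{B}$ that the total moving-edge count is at most $2\beta Z / 10^9$ -- which, via bucketing by bin size, prevents a few enormous half-full bins from absorbing most of the moving edges -- I conclude that a constant fraction of the $\beta m/60$ moving $G$-edges must lie in \emph{full} bins (bins entirely contained in $E(G)$).

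The remaining task is to bound the probability of such a full-bin configuration for a fixed $\pi$. For a specific designation of bins to be full of total edge count $F = \Omega(\beta m) = \Omega(\beta c n)$, the probability that a uniform random $G$ contains all those edges is at most $\binom{Z-F}{m-F}/\binom{Z}{m} \leq (m/Z)^F = (O(c/n))^F$. Enumerating the choice of which bins are full (at most singly-exponentially many in $F$), the total probability for a fixed $\pi$ is at most $\exp(-\Omega(\beta c n \log(n/c)))$. For $c \geq 10^4$ this dominates the $n^{\beta n}$ permutation count by a factor of $n^{\Omega(\beta n c)}$, so after union-bounding over $\pi$ and over the at most $n$ values of $\beta$, the failure probability fits easily within the $n^{-17}$ budget.

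The main obstacle I anticipate is the careful bookkeeping around bins of wildly varying sizes, especially in the regime $\beta$ close to $\beta^* = e^{-c/6}$ where $\beta n$ may be very small. Large bins only help -- a full bin of size $k$ costs $(m/Z)^k$ in probability, so bucketing bins by size and handling each bucket separately yields a bound that is, if anything, sharper for configurations with large bins. The delicate small-$\beta$ regime is handled by the sharp dependence on $c$ in the exponent: the per-permutation probability gains a factor of $c$ in the exponent for every non-fixed vertex, which is precisely what is needed to beat the $n^{\beta n}$ count when $\beta n$ is small and $c$ is large.
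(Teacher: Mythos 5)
Your high-level framing is on the right track: union bound over permutations with a fixed number of non-fixed points, use event $\mathcal{B}$ to get a guaranteed supply of ``moving'' edges (edges in bins of size $\geq 2$), and deduce that a near-automorphism would force an unlikely configuration of bins. However, the step ``a constant fraction of the $\beta m/60$ moving $G$-edges must lie in full bins'' has a genuine gap, and your diagnosis of the bins-of-varying-size issue is exactly backwards.

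The problem is that you never bound the number of moving edges that can sit inside a \emph{single} half-full bin. A bin is a directed cycle of $\pi$ acting on $\binom{V}{2}$, and such cycles can have size as large as $\Theta(n)$. A half-full bin of size $k$ can contain up to $k-1$ edges of $G$ while contributing only \emph{one} broken edge to $\diff(G,\pi(G))$, since the number of ``$E \to \bar E$'' transitions around the cycle is what counts. So ``at most $\beta m/240$ half-full bins'' gives no control whatsoever on the number of moving edges absorbed by half-full bins: one enormous half-full bin can swallow essentially all of them, leaving nothing in full bins, and your probability calculation never gets off the ground. Your remark that ``large bins only help'' applies to \emph{full} bins, where the argument is not under stress; the danger is large \emph{half-full} bins, and the upper bound on moving edges in $\mathcal{B}$ does nothing to prevent them. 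This is precisely the obstacle the paper's proof handles with a bin-refinement step: each bin of size $\geq 4$ is chopped into pieces of size $2$ and (at most one piece of) size $3$, so that a half-full refined bin contains at most $2$ edges of $G$ while still contributing at least $1$ to $\diff(G,\pi(G))$. With that refinement in place, ``few half-full bins'' really does imply ``few moving edges in half-full bins'', which is what your argument needs. Without it, the implication simply fails.

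Once you insert the refinement, the rest of your plan can be made to work, and it is, in spirit, equivalent to the paper's: the paper conditions on the number $t$ of edges falling in the (refined, size-$2$-or-$3$) bins and directly counts selections of $t$ edges out of $L$ possibilities that produce few half-full bins, whereas you propose the complementary tack of bounding the probability of producing many full bins via $\binom{Z-F}{m-F}/\binom{Z}{m}$. The two are essentially the same combinatorial estimate. Two smaller points to tighten if you push this through: (i) the enumeration of which bins are full is not ``singly-exponential in $F$'' with a constant base — choosing roughly $F/2$ bins from $\Theta(\beta n^2)$ gives a factor $\left(\Theta(n/c)\right)^{F/2}$ — but this still cancels favorably against $(m/Z)^{F} = \left(\Theta(c/n)\right)^{F}$ because of the square; and (ii) you should state the conditioning on $\mathcal{B}$ explicitly (e.g.\ via $\Pr[\cdot \mid \mathcal{B}] \leq \Pr[\cdot]/\Pr[\mathcal{B}]$ with $\Pr[\mathcal{B}]=1-o(1)$, or, as the paper does, by conditioning on the exact moving-edge count $t$ and using the uniformity of the $\binom{L}{t}$ selections).
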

\begin{proof}

For any permutation $\pi$, we define a set of more fine-grained bins. We start with the bins we defined before, and split the bins of size $\geq 4$ into bins of size $2$ and at most one bin of size $3$ as follows. Suppose the original bin contains $\{e_1,e_2,\ldots,e_l\}$, where $\pi(e_{i})=e_{i+1}$ and $\pi(e_l)=e_1$, $l\geq 4$, the we have new bins which contains $\{e_1,e_2\},\ldots ,\{e_{l-1},e_{l}\}$ if $l$ is even, and $\{e_1,e_2\}, \ldots, \{e_{l-4},e_{l-3}\},$ $\{e_{l-2},e_{l-1},e_{l}\}$ if $l$ is odd.

For each bin of size $2$ and size $3$, if all the edges are in $G$, we call it a full bin; if none of them are in $G$, we call it an empty bin; otherwise, we call it a half-full bin. Fix a permutation $\pi$, let $s_{\pi}$ be the number of half-full bins. For each half-full bin, it contributes at least one to $\diff(G,\pi(G))$, therefore $s_{\pi}\leq |\diff(G,\pi(G))| = 2 (m - \aut(G,\pi))$. We have
\ifnum\full=1
    \begin{multline*}
     \Pr[\text{$G$ is $(\beta_0, \beta_0/240)$-asymmetric}|\calB] =   \Pr[\forall \pi: \id(\pi) \leq (1-\beta_0)n, \aut(G,\pi)\leq (1-\beta_0/240) m | \calB] \\ \geq \Pr[\forall \pi: \id(\pi) \leq (1-\beta_0)n, s_{\pi} \geq \beta_0  m /120 |\calB].
     \end{multline*}
\fi

\ifnum\full=0
$     \Pr[\text{$G$ is $(\beta_0, \beta_0/240)$-asymmetric}|\calB] =   \Pr[\forall \pi: \id(\pi) \leq (1-\beta_0)n, \aut(G,\pi)\leq (1-\beta_0/240) m | \calB] \\ \geq \Pr[\forall \pi: \id(\pi) \leq (1-\beta_0)n, s_{\pi} \geq \beta_0  m /120 |\calB].$
\fi

Now we turn to show that
\ifnum\full=1
\[ 
\fi
\ifnum\full=0
$
\fi
\Pr[\forall \pi: \id(\pi) \leq (1-\beta_0)n, s_{\pi} \geq \beta_0  m /120 |\calB] \geq 1 - n^{-17}. 
\ifnum\full=1
\] 
\fi
\ifnum\full=0
$
\fi
To show this, we only have to prove for every  $\beta$ where $\beta \geq \beta_0$,
\begin{gather}
 \Pr[\forall \pi: \id(\pi) = (1-\beta)n, s_{\pi} \geq \beta  m /120 |\calB] \geq 1 - n^{-18}, \label{eqn:conditionedimproved-1}
\end{gather}
and take a union bound over (at most $n$ possible) $\beta$'s.

Fix $\beta : \beta \geq \beta_0$ and fix a permutation $\pi$ such that $\id(\pi) = (1 - \beta)n$. Let $\calC_t$ be the event that $\calB$ happens and there are $t$ edges in $G$ fall into the bins of size $2$ and $3$. Since $\calB$ is a disjoint union of $\calC_t$ for all $t : \beta m /60 \leq t \leq 2\beta Z/10^9$, we have
\ifnum\full=1
\[
\fi
\ifnum\full=0
$
\fi
\Pr[s_{\pi} \leq \beta  m /120 |\calB] = \sum_{t = \beta m/60}^{m} \Pr[s_{\pi} \leq \beta  m /120 |\calC_t] \cdot \Pr[\calC_t | \calB].
\ifnum\full=1
\]
\fi
\ifnum\full=0
$
\fi
We will prove that
\begin{gather}
\Pr[s_{\pi} \leq \beta  m /120 |\calC_t] = \left( {n \choose \beta n} (\beta n)!\right)^{-1} \cdot n^{-18},
\label{eqn:conditionedimproved-2}
\end{gather}
and by taking a union bound over all ${n \choose \beta n} (\beta n)!$ possible $\pi$'s, we prove \eqref{eqn:conditionedimproved-1}.

Let $\gamma = \beta / 120$. Let $L$ be the number of possible edges in bins of size $2$ and $3$. We have $L= Z - {(1 - \beta) n \choose 2} - \frac{\beta n}{2} \geq \beta n^2/4$ (for large enough $n$). Together with $t \leq 2\beta Z /10^9$, we have $t \leq 4L/10^9 \leq L/10^8$. Let $B$ be the number of bins of size $2$ and $3$. Fix $t$ such that  $\beta m / 60 \leq t \leq L/10^8$. Conditioned on $\calC_t$, the $\binom{L}{t}$ ways to select these $t$ edges are uniformly distributed. Now we compute the number of ways such that there are at most $2\gamma m$ half-full bins. Suppose that there are $i$ half-full bins (for $i \leq 2\gamma m \leq t/2$). There are ${B \choose i}$ ways to choose these bins. There are at most ${B \choose (t-i)/2}$ ways to choose the full bins (since $t/2 < L/2 \cdot 10^8 < B/2$). For each half-full bin, there are at most $6$ ways to choose the edges in the bin. Therefore,
\begin{gather}
\Pr[s_\pi = i |\calC_t] \leq  6^i {B \choose i} {B \choose \frac{t-i}{2}} {L \choose t}^{-1} \leq 6^i \cdot \frac{L^{\frac{t+i}{2}}}{i! \left(\frac{t-i}{2}\right)!} \left(\frac{t}{L}\right)^t = 6^i L^{-\frac{t-i}{2}} \frac{t^t}{i! \left(\frac{t-i}{2}\right)!}. \label{eqn:conditionedimproved-5}
\end{gather}
Since $i! \geq (i/e)^i$, we have
\ifnum\full=1
\begin{gather}
\eqref{eqn:conditionedimproved-5} \leq 6^i  L^{-\frac{t-i}{2}} e^{\frac{t+i}{2}} \frac{t^t}{i^i \left(\frac{t-i}{2}\right)^{\frac{t-i}{2}}} \leq  6^i  L^{-\frac{t-i}{2}} (2e)^{\frac{t+i}{2}} \frac{t^t}{i^i (t-i)^{\frac{t-i}{2}}} .\label{eqn:conditionedimproved-6}
\end{gather}
\fi
\ifnum\full=0
\vspace{-6ex}
\begin{gather}
\qquad\qquad\qquad\qquad\qquad\eqref{eqn:conditionedimproved-5} \leq 6^i  L^{-\frac{t-i}{2}} e^{\frac{t+i}{2}} \frac{t^t}{i^i \left(\frac{t-i}{2}\right)^{\frac{t-i}{2}}} \leq  6^i  L^{-\frac{t-i}{2}} (2e)^{\frac{t+i}{2}} \frac{t^t}{i^i (t-i)^{\frac{t-i}{2}}} .\label{eqn:conditionedimproved-6}
\end{gather}
\fi
Using $i^i (t-i)^{\frac{t-i}{2}} \geq \left(\frac{t}{4}\right)^{\frac{t+i}{2}}$, we have
\ifnum\full=1
\begin{gather}
\eqref{eqn:conditionedimproved-6} \leq 6^i  (8e)^{\frac{t+i}{2}}  \left(\frac{t}{L}\right)^{\frac{t-i}{2}} \leq (48e)^{\frac{3t}{4}}\left(\frac{t}{L}\right)^{\frac{t}{4}} \leq \left(\frac{10^7t}{L}\right)^{\frac{t}{4}}.
\label{eqn:conditionedimproved-7}
\end{gather}
\fi
\ifnum\full=0
\vspace{-6ex}
\begin{gather}
\qquad\qquad\qquad\qquad\qquad\qquad\qquad\eqref{eqn:conditionedimproved-6} \leq 6^i  (8e)^{\frac{t+i}{2}}  \left(\frac{t}{L}\right)^{\frac{t-i}{2}} \leq (48e)^{\frac{3t}{4}}\left(\frac{t}{L}\right)^{\frac{t}{4}} \leq \left(\frac{10^7t}{L}\right)^{\frac{t}{4}}.
\label{eqn:conditionedimproved-7}
\end{gather}
\fi
Since $\beta m / 60 \leq t \leq L/10^8$, and $(10^7t/L)^{t/4}$ is monotonically decreasing when $t < L/(10^7e)$, we have
\ifnum\full=1
\[
\fi
\ifnum\full=0
$
\fi
\eqref{eqn:conditionedimproved-7} \leq \left(\frac{10^7 \beta m}{60L}\right)^{\frac{\beta m}{240}} \leq \left(\frac{10^7\beta c n}{60 \beta n^2 / 4}\right)^{\frac{\beta cn}{240}} \leq \left(\frac{10^6c}{n}\right)^{\frac{\beta cn}{240}} \leq \left(\frac{10^{10}}{n}\right)^{40 \beta n } \leq \left(\frac{1}{n}\right)^{30\beta n},
\ifnum\full=1
\]
\fi
\ifnum\full=0
$
\fi
where the second last inequality is because $10^4 \leq c \leq n/10^7$ and $(10^6c/n)^{\beta cn/240}$ is monotonically decreasing in this range, and the last inequality is for large enough $n$. Observing that ${n \choose \beta n} (\beta n)! \leq n^{2\beta n}$ and $\beta \geq \beta^*\geq 1/n$, we proved that
\ifnum\full=1
\[
\fi
\ifnum\full=0
$
\fi
\Pr[s_\pi = i |\calC_t] \leq    \left( {n \choose \beta n} (\beta n)!\right)^{-1} \cdot n^{-20} .
\ifnum\full=1
\]
\fi
\ifnum\full=0
$
\fi
By taking a union bound over all (at most $n^2$ many) $i$'s such that $i \leq 2\gamma m$, we prove \eqref{eqn:conditionedimproved-2}.
\end{proof}

\pref{thm:robust-asymmetry} is proved by combining \pref{cor:balanced-edges}, \pref{lem:conditionedimproved}, and taking a union bound over all possible $\beta = \beta_0$ (where there are at most $n$ of them).

\ifnum\full=1
\subsection{Generalization to hypergraphs} \label{sec:robust-asymmetry-hypergraph}

In this subsection, we generalize \pref{thm:robust-asymmetry} to random $k$-uniform hypergraphs for any constant $k \geq 3$.
\fi

\ifnum\full=0
\paragraph{Generalization to hypergraphs.} We generalize \pref{thm:robust-asymmetry} to random $k$-uniform hypergraphs for any constant $k \geq 3$.  The proof is deferred to the full version due to space constraints. 
\fi


\begin{theorem} \label{thm:randomhypergraphwhp}
   For any constant $k$, there exists constant $\kappa_k$, such that for $m=cn$ where $\kappa_k \leq c \leq \binom{n}{k}/\kappa_k^3$, $n$ large enough, if we set $\beta^* = \max\{\exp(-c/6),1/n\}$, with probability $(1 - n^{-15})$, for all $\beta : \beta^* \leq \beta \leq 1$, a random graph $H$ from the distribution $\calG^{(k)}_{n,m}$ is $(\beta,\beta/240)$-asymmetric. For $k=3$, $\kappa_3 = 10^{4}$ suffices.
\end{theorem}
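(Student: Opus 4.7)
The proof closely follows the graph case, \pref{thm:robust-asymmetry}, with the universe $\binom{V}{2}$ replaced by $\binom{V}{k}$; I will only highlight the $k$-dependent adjustments. Bins are defined as the orbits of $\pi$ acting on $\binom{V}{k}$, and every bin of size $\geq 4$ is split into bins of size $2$ (plus at most one residual bin of size $3$) as before. Size-$1$ bins are invariant $k$-subsets $e$ with $\pi(e) = e$; they split into \emph{type-$1$} bins, in which all $k$ vertices of $e$ are fixed points of $\pi$ (at most $\binom{(1-\beta)n}{k}$ such bins, none incident to the non-fixed set $T$), and \emph{type-$2$} bins, in which the non-fixed vertices of $e$ form a union of $\pi$-cycles of total length between $2$ and $k$. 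Since $\pi$ has at most $\beta n/2$ cycles of length $\geq 2$ and $k$ is constant, the dominant $s=2$ term in the enumeration gives a total of at most $O_k(\beta n^{k-1})$ type-$2$ bins.

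Next I would generalize \pref{lem:incident}. For each vertex $v$, $\E[d(v)] = km/n = kc$, the indicators $X_e$ are negatively associated by the permutation-distribution argument of \pref{claim:deg-prob}, so $\Pr[d(v) \leq kc/10] \leq \exp(-\Theta_k(c))$. The same partition-and-union-bound scheme then yields, with probability $1 - n^{-\omega(1)}$, that every $T \subseteq V$ with $|T| = \beta n \geq \beta^* n$ is incident to at least $c\beta n/\Theta_k(1)$ hyperedges of $H$. Combining this lower bound with the fact that the expected number of \emph{selected} type-$2$ bin edges is at most $O_k(\beta n^{k-1}) \cdot m/Z_k = O_k(c\beta) \ll c\beta n$ (concentrated via Chernoff for negatively associated variables), together with the upper-tail bound of \pref{claim:incident-upperbound}, yields the hypergraph analogs of \pref{lem:balancededgesimproved} and \pref{cor:balanced-edges}: uniformly over $\beta \in [\beta^*, 1]$ and $\pi$ with $\id(\pi) = (1-\beta)n$, the selected hyperedges lying in size-$\geq 2$ bins number between $\beta m / \Theta_k(1)$ and $2\beta Z_k/10^9$, where $Z_k = \binom{n}{k}$.

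Finally I would reproduce the conditional counting of \pref{lem:conditionedimproved}. Writing $L$ for the number of possible $k$-edges in size-$\geq 2$ bins, the type-$1$/type-$2$ counts give $L \geq k \beta Z_k / \Theta_k(1)$ for $\beta \geq \beta^*$. The same calculation as in the graph case bounds $\Pr[s_\pi \leq \beta m/240 \mid \calC_t]$ by a quantity of the form $(\Theta_k(1) \cdot t/L)^{t/4}$; choosing $\kappa_k = \kappa_k(k)$ large enough, the hypothesis $\kappa_k \leq c \leq Z_k/\kappa_k^3$ forces this to be at most $n^{-\Theta_k(\beta n)}$, which dominates the $n^{2\beta n}$ union bound over permutations $\pi$ with $\id(\pi) = (1-\beta)n$. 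A further union bound over the at most $n$ values of $\beta \in [\beta^*, 1]$ gives the claimed probability $1 - n^{-15}$. The main obstacle is the type-$2$ bin control in step one: unlike the graph case where type-$2$ bins are merely the transpositions of $\pi$ and are bounded deterministically by $\beta n/2$, for $k \geq 3$ they depend on the entire cycle profile of $\pi$ and must be handled probabilistically via their expected selection count in $H$, which is where most of the $k$-dependent slack in $\kappa_k$ is spent.
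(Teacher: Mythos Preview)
Your proposal is correct and follows essentially the same approach as the paper: the paper (which only writes out the $k=3$ case explicitly) likewise reduces to the graph argument, isolates the non--type-$1$ size-$1$ bins as the new obstacle, bounds their number by $O(\beta n^{k-1})$, and controls the number of \emph{selected} hyperedges in such bins probabilistically via a Chernoff bound for negatively associated indicators followed by a union bound over the at most $n^{2\beta n}$ permutations with $\id(\pi)=(1-\beta)n$. Your unified ``type-$2$'' class is exactly the paper's type-$2$ $\cup$ type-$3$ bins for $k=3$, and your final remark identifying this step as the main new ingredient is precisely the point of the paper's \pref{lem:hypertype2}.
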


\ifnum\full=1
The proof of \pref{thm:randomhypergraphwhp} mostly follows the lines of the proof of \pref{thm:robust-asymmetry}. But we need some small modifications. For simplicity, we only prove the theorem  for $k=3$. For higher $k$, we encourage the readers to check by themselves. 

Now we work with $k = 3$. For any permutation $\pi$ over the vertex set $V$, we define a directed graph $G^{(3)}_{\pi}=(\binom{V}{3},E_{\pi})$, and $(e_1,e_2)\in E_{\pi}$ if and only if $e_2=\pi(e_1)$. Since each  $e \in {V \choose 3}$ has in-degree and out-degree exactly 1, we can divide $G^{(3)}_{\pi}$ into disjoint unions of directed cycles. Similarly as in the ordinary graph case, we call each directed cycle a bin, and the size of the bin is the number of elements in the cycle. Let $Z_3=\binom{n}{3}$.

\begin{fact}
  For any size-1 bins, there are only three situations:
    \begin{itemize}
        \item $e=\{u,v,w\}$ where $u$, $v$  and $w$ are all fixed points of $\pi$. We call these bins \emph{type-1} size-1 bins.The number of type-1 size-1 bins is at most $\binom{\id(\pi)}{3}$.
        \item $e=\{u,v,w\}$ where one of them is a fixed point of $\pi$ and the other two map to each other under $\pi$. We call these classes type-2 size-1 bins. The number of type-2 size-1 bins is at most $\id(\pi)\cdot \frac{n-\id(\pi)}{2}=O(n^2)$ for any permutation $\pi$.
        \item $e=\{u,v,w\}$ where $\pi(u)=v$ and $\pi(v)=w$ and $\pi(w)=u$. We call these classes type-3 size-1 bins. The number of type-3 size-1 bins is at most $\frac{n-\id(\pi)}{3}$ for any permutation $\pi$.
    \end{itemize}
\end{fact}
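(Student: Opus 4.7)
The plan is to observe that a size-1 bin is, by definition, an edge $e=\{u,v,w\}\in\binom{V}{3}$ with $\pi(e)=e$ (as a set). Hence $\pi$ restricted to the $3$-element set $e$ is a permutation of $e$. So I would first classify the possible permutations of a $3$-element set: the identity, a transposition (three choices of which element is fixed), and a $3$-cycle (two choices of cyclic orientation). These three possibilities correspond exactly to the three listed types, establishing that the enumeration is exhaustive.

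For the counts, I would argue directly from the cycle decomposition of~$\pi$ on the whole vertex set $V$. For type-1, every size-1 bin of this kind is a $3$-subset of the set of fixed points of~$\pi$, and conversely any such $3$-subset is clearly fixed pointwise by~$\pi$ and so gives a size-1 bin; this gives exactly $\binom{\id(\pi)}{3}$ such bins. For type-2, a bin $\{u,v,w\}$ of this kind is determined by choosing the fixed point (at most $\id(\pi)$ options) and a transposition $\{v,w\}$ of~$\pi$; since the transpositions of~$\pi$ are disjoint $2$-cycles sitting inside the $n-\id(\pi)$ non-fixed points, there are at most $(n-\id(\pi))/2$ such transpositions, yielding at most $\id(\pi)\cdot(n-\id(\pi))/2$ bins. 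For type-3, each bin is exactly the support of a $3$-cycle of~$\pi$, and the $3$-cycles of~$\pi$ are disjoint and contained in the $n-\id(\pi)$ non-fixed points, so there are at most $(n-\id(\pi))/3$ of them.

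There is no real obstacle here; the main thing to be careful about is not double-counting. For type-2, the count $\id(\pi)\cdot(n-\id(\pi))/2$ is already an upper bound (rather than an equality) because (i) not every non-fixed vertex need belong to a $2$-cycle of~$\pi$ (it might lie in a longer cycle), and (ii) the bound loosely uses $\id(\pi)$ rather than the actual number of fixed points available as partners; similarly for type-3. This is why each bullet says ``at most.'' I would present the enumeration step in a single sentence, then write one short paragraph per type implementing the counting argument above.
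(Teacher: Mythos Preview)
Your proposal is correct and is exactly the natural argument; the paper in fact states this as a \emph{Fact} without proof, so there is no alternative approach to compare against. One minor quibble: your reason~(ii) for why the type-2 count is only an upper bound is a bit muddled (the factor $\id(\pi)$ \emph{is} the exact number of fixed points), but reason~(i) already suffices and the overall argument stands.
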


The following lemma is an analogue of \pref{lem:balancededgesimproved}.

\begin{lemma} \label{lem:balancedhyperedgesimproved}
    For any fixed $\beta$ where $\beta\geq \beta^*$, let $\mathcal{D}'$ be the event that for any permutation $\pi$ such that $\id(\pi)= (1-\beta) n$, the number of the  hyperedges in $H$ that fall into the bins of size $\geq 2$ is at least $\beta m /60$ and at most $2\beta Z_3/10^9$. We have
    \[
        \Pr_{H\sim \mathcal{H}^{(3)}_{n,m}}[\mathcal{D}'] = 1- n^{-\omega(1)}.
    \]
\end{lemma}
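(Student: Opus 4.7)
The plan is to follow the outline of the proof of \pref{lem:balancededgesimproved}, adapting each step to the 3-uniform hypergraph setting. We fix a permutation $\pi$ with $\id(\pi) = (1-\beta)n$, establish each of the two bounds on the number of hyperedges in bins of size $\geq 2$ with sufficiently small per-$\pi$ failure probability, and union-bound over the at most $n^{(1+o(1))\beta n}$ such $\pi$. Negative association of the hyperedge indicators $\{X_e\}$ via the permutation distribution (Theorem~2.11 of \cite{JP83}) remains the main concentration tool.

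For the upper bound, let $W_\pi \subseteq \binom{V}{3}$ denote the possible hyperedges in bins of size $\geq 2$. Since $|W_\pi| \leq Z_3 - \binom{(1-\beta)n}{3} \leq 3\beta Z_3$ (using $1-(1-\beta)^3 \leq 3\beta$), we have $\E[|E(H) \cap W_\pi|] \leq 3c\beta n$. The upper bound on $c$ in the theorem ensures this expectation is much smaller than $2\beta Z_3/10^9 = \Theta(\beta n^3)$, and a Chernoff upper-tail bound for negatively associated variables (analogous to \pref{claim:incident-upperbound}) gives $\Pr[|E(H) \cap W_\pi| \geq 2\beta Z_3/10^9] \leq \exp(-\Omega(\beta n^3))$, which handily absorbs the union bound over $\pi$.

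The lower bound requires two preparatory high-probability properties of $H$. First, a hypergraph analogue of \pref{lem:incident}: every $T \subseteq V$ of size $\beta n$ has $\Omega(c\beta n)$ incident hyperedges. This mirrors the graph proof, partitioning possible hyperedges among vertices (e.g., by $\min\{u,v,w\}$), obtaining per-vertex sums $X_v$ of disjoint (hence negatively associated) indicators with $\E[X_v] = \Theta(c)$ for the first half of the vertices, applying Chernoff to bound the set $U$ of low-$X_v$ vertices via $\E[|U|] \leq n\exp(-\Omega(c))$, using $\beta \geq \exp(-c/6)$ to reduce this to $\beta n/4$, and finally a second Chernoff on $|U|$ to deduce $|U| \leq \beta n/2$ with failure probability $\exp(-\Omega(n))$. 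Second, a ``bounded-codegree'' property: every pair $\{a,b\}$ satisfies $|\{v : \{v,a,b\} \in E(H)\}| \leq D$ for an absolute constant $D$; the first-moment union bound $\binom{n}{2}\binom{n-2}{D+1}(m/Z_3)^{D+1}$ is $n^{-\omega(1)}$ once $D$ is chosen large enough relative to the range of $c$ allowed by the theorem. Given these two properties, the lower bound becomes deterministic in $\pi$: type-3 bin edges of $\pi$ number at most $\beta n/3$; type-2 bin edges equal $\sum_{\{a,b\} \in M_2(\pi)} |\{v \in F(\pi) : \{v,a,b\} \in E(H)\}| \leq D \cdot N_2(\pi) \leq D\beta n/2$ by the codegree property, where $F(\pi)$ and $M_2(\pi)$ are the fixed-point set and 2-cycle matching of $\pi$; subtracting from the $\Omega(c\beta n)$ incident hyperedges gives at least $\beta m/60$ when $\kappa_3$ is chosen large enough.

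The main obstacle is the lower bound direction: in the hypergraph setting the deterministic type-2 bin count can reach $O(\beta n^2)$ (versus $O(\beta n)$ in the graph case), which is too large to subtract from an $O(c\beta n)$ incidence count when $c$ is a small constant. The bounded-codegree property is the new ingredient that sidesteps this, translating the worst-case two-dimensional type-2 structure (pairs of 2-cycle vertices $\times$ fixed points) into a uniform per-pair bound that, combined with $N_2(\pi) \leq \beta n/2$, yields a linear-in-$\beta n$ bound on type-2 edges holding simultaneously across all $\pi$.
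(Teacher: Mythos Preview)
Your upper bound and your hypergraph version of the incidence lemma are fine and match the paper. The gap is in your treatment of type-2 size-1 bins via a global constant-codegree bound.

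First, the lemma has to hold for the full range $\kappa_3 \le c \le \binom{n}{3}/\kappa_3^3$, i.e.\ $c$ can be as large as $\Theta(n^2)$. The expected codegree of a fixed pair is $(n-2)\cdot m/Z_3 \approx 6c/n$, so once $c \gg n$ the typical pair already has codegree $\gg 1$; for $c = \Theta(n^2)$ the codegree is $\Theta(n)$ with probability $1$. No absolute constant $D$ can hold across the allowed range, so your subtraction $D\cdot N_2(\pi)\le D\beta n/2$ is not available in general. Second, even when $c$ is a fixed constant, your first-moment bound $\binom{n}{2}\binom{n-2}{D+1}(m/Z_3)^{D+1} = \Theta(n^{1-D})$ is only polynomially small for any fixed $D$, not $n^{-\omega(1)}$ as the lemma requires.

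The paper handles type-2 bins differently and avoids both issues: for each fixed $\pi$ with $\id(\pi)=(1-\beta)n$, the set of possible type-2 hyperedges has size at most $\beta n^2/2$, so its expected intersection with $E(H)$ is at most $\tfrac{\beta n^2}{2}\cdot\tfrac{7c}{n^2} < 4c\beta$. A direct Chernoff bound (negative association) then gives $\Pr[\text{type-2 count}\ge c\beta n/1000]\le \exp(-\Omega(c\beta n^2))$, which swallows the $n^{2\beta n}$ union bound over $\pi$ for every $c$ in the allowed range. Replacing your codegree step by this per-$\pi$ Chernoff argument fixes the proof.
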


By a union bound over all the $\beta : \beta \geq \beta^*$ (there are at most $n$ of them), we get the following corollary.
\begin{corollary} \label{cor:hyperbalanced-edges}
 Let $\mathcal{D}$ be the event that for every $\beta \geq \beta^*$, any permutation $\pi$ such that $\id(\pi)= (1-\beta) n$, the number of the  hyperedges in $H$ that fall into the bins of size $\geq 2$ is at least $\beta m /60$ at at most $2\beta Z_3/10^9$. We have:
    \[
        \Pr_{H\sim \mathcal{H}^{(3)}_{n,m}}[\mathcal{D}] = 1- n^{-\omega(1)}.
    \]
\end{corollary}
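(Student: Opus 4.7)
The plan is to derive \pref{cor:hyperbalanced-edges} from \pref{lem:balancedhyperedgesimproved} by a straightforward union bound over the discrete set of admissible values of $\beta$; essentially all of the probabilistic work has already been carried out in the lemma, so the corollary is a packaging step.

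First, I would observe that for any permutation $\pi$ on $[n]$, the quantity $\id(\pi)$ is an integer in $\{0,1,\dots,n\}$, so $\beta = 1 - \id(\pi)/n$ ranges over the finite set $\calS = \{k/n : 0 \leq k \leq n\}$. Imposing the constraint $\beta \geq \beta^*$ leaves at most $n$ admissible values, which I denote by $\calS^*$. Writing $\calD'_\beta$ for the event appearing in \pref{lem:balancedhyperedgesimproved} with that specific $\beta$, the event $\calD$ from the corollary is exactly $\bigcap_{\beta \in \calS^*} \calD'_\beta$, because every permutation $\pi$ has its non-fixed-point fraction lying in $\calS^*$.

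Second, I would apply the union bound. For each $\beta \in \calS^*$, \pref{lem:balancedhyperedgesimproved} yields $\Pr_{H \sim \calG^{(3)}_{n,m}}[\overline{\calD'_\beta}] = n^{-\omega(1)}$, so
\[
\Pr[\overline{\calD}] \;\leq\; \sum_{\beta \in \calS^*} \Pr[\overline{\calD'_\beta}] \;\leq\; n \cdot n^{-\omega(1)} \;=\; n^{-\omega(1)},
\]
which is the desired bound.

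The one point that warrants a moment's care is checking that the $n^{-\omega(1)}$ bound in \pref{lem:balancedhyperedgesimproved} is uniform enough in $\beta$ to survive the extra factor of $n$ from the union bound. Inspecting the analogous proof of \pref{lem:balancededgesimproved}, the failure probabilities come from Chernoff-type tails of the form $\exp(-\Omega(\beta n))$ (controlling the ``at least $\beta m/60$'' side via the degree-boundedness argument) together with bounds such as $\exp(-\Omega(\beta Z_3))$ (controlling the ``at most $2\beta Z_3/10^9$'' side); since $\beta \geq \beta^* \geq 1/n$, each is of the form $\exp(-n^{\Omega(1)})$, which trivially absorbs the multiplicative $n$. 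Hence there is no real obstacle, and the corollary follows.
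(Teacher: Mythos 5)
Your proof is correct and follows precisely the paper's reasoning: the paper derives \pref{cor:hyperbalanced-edges} from \pref{lem:balancedhyperedgesimproved} by a union bound over the at most $n$ admissible values of $\beta$, which is exactly what you do. The extra care you take to confirm that the $n^{-\omega(1)}$ bound from the lemma absorbs the factor of $n$ is a reasonable sanity check but not a new idea.
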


The proof of \pref{lem:balancedhyperedgesimproved} is similar to that of \pref{lem:balancededgesimproved}, except that now we also need to take care of type-2 size-1 bins. 

\begin{lemma} \label{lem:hypertype2}
    With probability $1-n^{\omega(1)}$, for any permutation $\pi$ with $\id(\pi)\leq(1-\beta^*)n$,  the number of selected hyperedges in type-2 size-1 bins is at most $c\beta n/1000$.
\end{lemma}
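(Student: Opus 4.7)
The plan is to fix a permutation $\pi$ with $\id(\pi) = (1-\beta)n$ for some $\beta \geq \beta^*$, bound the size of the set of potential type-2 size-1 hyperedges of $\pi$, apply a sharp upper-tail Chernoff bound to the count of those actually selected by $H \sim \calG^{(3)}_{n,m}$, and finish with a union bound over $\pi$ and then over $\beta$.

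First, let $W_\pi \subseteq \binom{V}{3}$ be the set of hyperedges $\{u,v,w\}$ such that $u$ is a fixed point of $\pi$ and $\{v,w\}$ is a $2$-cycle of $\pi$. Since $\pi$ has at most $\beta n/2$ two-cycles, $|W_\pi| \leq (1-\beta)n \cdot \beta n/2 \leq \beta n^2/2$. Writing $X_\pi = |E(H) \cap W_\pi|$, we get $\E[X_\pi] = |W_\pi| \cdot m/Z_3 = O(c\beta)$, which is vastly smaller than the target threshold $c\beta n/1000$.

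Next I would observe that, exactly as in \pref{claim:incident-upperbound}, the edge-indicator variables $\{X_e\}_{e\in\binom{V}{3}}$ follow the permutation distribution over $m$ ones and $Z_3 - m$ zeros, and are therefore negatively associated by Theorem~2.11 of \cite{JP83}. The sharp upper-tail Chernoff bound $\Pr[X \geq t] \leq (e\mu/t)^t$ (valid for negatively associated Bernoullis) then yields
\[
\Pr\bigl[X_\pi \geq c\beta n/1000\bigr] \;\leq\; \left(\frac{O(1)}{n}\right)^{c\beta n/1000} \;\leq\; n^{-c\beta n/2000}
\]
for large enough $n$. Union-bounding over the $\binom{n}{\beta n}(\beta n)! \leq n^{\beta n}$ permutations with $\id(\pi)=(1-\beta)n$, and using $c \geq \kappa_3 = 10^4$, the failure probability for a fixed $\beta$ is at most $n^{\beta n(1 - c/2000)} \leq n^{-4\beta n}$. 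A final union bound over the at most $n$ possible values of $\beta \geq \beta^* \geq 1/n$, together with the fact that any valid non-identity permutation has $\beta n \geq 2$, then closes the argument.

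The main obstacle to watch for is the large-deviation regime $t/\mu = \Theta(n)$: the usual multiplicative form of Chernoff (e.g.\ $\exp(-\delta^2\mu/3)$) is far too weak here, so it is essential to use the sharp $(e\mu/t)^t$ form in order to extract the $n^{-\Omega(\beta n)}$ decay that overpowers the $n^{\beta n}$ permutation count. Extending to $k$-uniform hypergraphs for $k\geq 4$ is essentially the same calculation with $|W_\pi|$ bounded by $O(\beta n^{k-1})$ and $\kappa_k$ chosen large enough to absorb the resulting constants.
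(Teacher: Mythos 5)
Your proof follows the same skeleton as the paper's: fix a permutation $\pi$ with $\id(\pi)=(1-\beta)n$, bound the set of potential type-2 size-1 hyperedges by $\beta n^2/2$, note that $\E[X_\pi]=O(c\beta)$, invoke negative association of the edge indicators (Theorem~2.11 of \cite{JP83}), apply a Chernoff-type tail bound, and union bound first over the $\leq n^{\beta n}$ permutations (the paper uses the slightly looser $n^{2\beta n}$) and then over the $\leq n$ admissible values of $\beta$. The one real difference is the tail bound itself, and here you do better than the paper. The paper applies the multiplicative form $\exp(-\delta^2\mu/3)$ with $\delta=\Theta(n)$; that form is only valid when $\delta\leq 1$, and using it in the $\delta\gg 1$ regime is exactly what produces the paper's spurious $\exp(-\Theta(c\beta n^2))$ decay. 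Your switch to the Poisson-tail form $\Pr[X\geq t]\leq(e\mu/t)^t$ is the correct choice here, and it gives $n^{-\Theta(c\beta n)}$ per permutation, which is the right order.

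One consequence you should flag explicitly: with the correct tail bound, the bottleneck is the smallest admissible $\beta$ (namely $\beta n = 2$), and there your chain of estimates delivers a polynomial failure probability $n^{-\Theta(c)}$, not the $n^{-\omega(1)}$ the lemma asserts. For fixed $c=\kappa_3=10^4$ this works out to something on the order of $n^{-15}$ to $n^{-18}$ depending on how carefully one tracks the constant inside the $\log$; in any case it is a fixed polynomial, and it becomes $n^{-\omega(1)}$ only if $c=\omega(1)$. So once the Chernoff form is corrected, the lemma's ``$1-n^{-\omega(1)}$'' should really read ``$1-n^{-\Omega(c)}$'' (which is all that the downstream \pref{thm:randomhypergraphwhp}, with its $1-n^{-15}$ guarantee, actually needs). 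This is not a defect of your reasoning — it is simply what the valid bound gives — but writing that the final union bound ``closes the argument'' papers over the gap between what you derive and the stated $n^{-\omega(1)}$, and that mismatch is worth naming.
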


\begin{proof}
We first prove that for every $\beta \geq \beta^*$, with probability $1-n^{-\omega(1)}$, for every permutation $\pi$ with $\id(\pi)=(1-\beta) n$, the number of selected hyperedges in type-2 size-1 bins is at most $c\beta n/1000$. By a union bound over all possible $\beta$'s (where there are at most $n$ of them), we get the desired statement.

For any fixed permutation $\pi$ with $\id(\pi)=(1-\beta) n$, the number of type-2 size-1 bins is at most $(1-\beta) n \beta n/2\leq \beta n^2/2$. For each possible hyperedge $e$, we define the random variable $X_{e}$ as the indicator variable for the event that $e$ is selected as an hyperedge in $H$. Note that $\E[X_e]=cn/Z_3 \leq \frac{7c}{n^2}$. Define random variable $X=\sum_{e\text{~in type-$2$ size-$1$ bin}}X_e$ as the number of selected hyperedges in type-2 size-1 bins, by linearity of expectation,  
    \[
        \E[X]\leq \frac{\beta n^2}{2}\cdot \frac{7c}{n^2}<4c\beta.
    \]
     On the other hand, we can also show that all these random variables are negative associated, therefore through Chernoff bound for negative associated random variables, we have
    \[
        \Pr_{H\sim \calG^{(3)}_{n,m}}[X \geq \beta cn/1000]\leq \exp(-1/3\cdot (n/250-1)^2 \cdot 4c\cdot \beta)\leq \exp(-c\beta n^2/10^5)
    \]

    By a union bound over at most $\binom{n}{\beta n}(\beta n)!\leq n^{2\beta n}$ such permutations, the probability that there exists $\pi$ with $\id(\pi) = (1 - \beta)n$ such that the number of type-$2$ size-$1$ bins is more than $c \beta n$, is at most
    \[
        \exp(-c\beta n^2/10^5)\cdot n^{2\beta n}= \exp(-c\beta n^2/(10^5)+2\beta n\log n)\leq \exp(-c\beta n^2/(10^6)) \leq n^{-\omega(1)}.
    \]

\end{proof}

The following lemma is an analogue of \pref{lem:incident}, and the proof is almost identical. 
\begin{lemma}  \label{lem:hyperincident}
    Let $H \sim \calG^{(3)}_{n,m}$. Suppose that $\beta \geq \beta^*$. With probability $1-n^{-\omega(1)}$, for any $T\subseteq V$, $|T|= \beta n$, the number of hyperedges incident to $T$ is at least least $c \beta n / 40$.
\end{lemma}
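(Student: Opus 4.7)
The plan is to adapt the proof of \pref{lem:incident} to the $3$-uniform setting, with the main technical change being the construction of a suitable ``cyclic'' disjoint family of hyperedge subsets that replaces the cyclic half-intervals used in the graph case. Identifying $V$ with $\mathbb{Z}/n\mathbb{Z}$, I would define, for each vertex $i \in V$,
\[
W_i = \{\{i,\, (i+a)\bmod n,\, (i+b)\bmod n\} : 1 \leq a < b \leq \lfloor (n-1)/2 \rfloor\}.
\]
Disjointness of $\{W_i\}_{i \in V}$ follows because if a hyperedge lay in both $W_i$ and $W_j$ for $i \neq j$, then both the cyclic distance from $i$ to $j$ and from $j$ to $i$ would have to be at most $\lfloor (n-1)/2 \rfloor$, contradicting the fact that these distances sum to $n$. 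Moreover $|W_i| = \binom{\lfloor (n-1)/2 \rfloor}{2} \geq n^2/10$ for $n$ large, and every $e \in W_i$ contains $i$.

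Next, writing $X_e$ for the indicator that $e \in E(H)$, set $X_i = \sum_{e \in W_i} X_e$. Since the $X_e$'s are negatively associated under the permutation distribution (Theorem 2.11 of \cite{JP83}) and the $W_i$'s are disjoint, the $X_i$'s are also NA. A direct computation gives $\E[X_i] = |W_i| \cdot m / \binom{n}{3} \geq 3c/4$ for large $n$, so by the Chernoff bound for NA random variables, $\Pr[X_i \leq c/10] \leq \exp(-\Omega(c))$---the hypergraph analogue of \pref{claim:deg-prob}.

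Let $U = \{i : X_i \leq c/10\}$. Because $\mathbf{1}[X_i \leq c/10]$ is a non-increasing function of $X_i$, which depends only on the disjoint block $\{X_e : e \in W_i\}$, the family $\{\mathbf{1}[X_i \leq c/10]\}_i$ is NA. Hence $\E[|U|] \leq n \exp(-\Omega(c))$, and a calculation identical to the one in the proof of \pref{lem:incident}---using $\beta \geq \exp(-c/6)$ to dominate the expected size of $U$---yields $\Pr[|U| \geq \beta n / 2] \leq \exp(-\Omega(n)) = n^{-\omega(1)}$. Conditioning on this event, for every $T$ with $|T| = \beta n$ at least $\beta n / 2$ vertices $i \in T$ satisfy $X_i \geq c/10$; since the $W_i$'s are disjoint and each $e \in W_i$ contains $i$, the sum $\sum_{i \in T} X_i$ counts distinct hyperedges of $H$, each incident to $T$, producing at least $(\beta n / 2)(c/10) = c \beta n / 20 \geq c \beta n / 40$ such hyperedges.

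The main technical hurdle is therefore the cyclic partition itself: once disjointness and the $\Theta(n^2)$ lower bound on $|W_i|$ are in hand, every remaining step mirrors the graph-case argument essentially verbatim, with only constant factors adjusted. For general $k \geq 4$, the same idea should go through by taking $W_i$ to be $k$-subsets of the form $\{i, i+a_1, \dots, i+a_{k-1}\}$ with $1 \leq a_1 < \cdots < a_{k-1} \leq \lfloor (n-1)/2 \rfloor$, at the cost of an extra $(k-1)$! factor in the constants, which is absorbed into the $\kappa_k$ in \pref{thm:randomhypergraphwhp}.
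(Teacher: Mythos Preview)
Your proposal is correct and matches the paper's intended approach: the paper simply states that the proof of \pref{lem:hyperincident} is ``almost identical'' to that of \pref{lem:incident} and omits details, and you have supplied exactly the missing ingredient---the cyclic disjoint family $W_i$ of $\Theta(n^2)$ triples containing vertex~$i$---needed to make the graph-case argument carry over. Two minor remarks: your lower bound $\E[X_i] \geq 3c/4$ is only asymptotic (the exact value is $\tfrac{3c(n-3)}{4(n-2)}$ for odd~$n$), so the Chernoff constant in $\Pr[X_i \leq c/10] \leq \exp(-\Omega(c))$ is a bit smaller than the $1/2$ from \pref{claim:deg-prob}, but since $c \geq \kappa_3 = 10^4$ the subsequent comparison with $\beta \geq \exp(-c/6)$ still goes through with room to spare; and your disjointness-based final count, giving $c\beta n/20$ directly, is actually cleaner than the paper's ``sum of degrees divided by two'' step in the graph case.
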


\begin{proof}[Proof of \pref{lem:balancedhyperedgesimproved}]
We only establish the lower bound ($\beta m/60$). The proof for upper bound is almost identical to that in \pref{lem:balancededgesimproved}.

Let $T$ be the set of non-fixed points of $\pi$, then $|T|=\beta n$. By \pref{lem:hyperincident}, we know that with probability $(1 - n^{-\omega(1)})$, there are at least $c\beta n/40$ hyperedges incident to $T$ -- all these edges are either in bins of size $\geq 2$, or edges in size-1 bins of type-2 or type-3. By \pref{lem:hypertype2}, we know that with probability $1-n^{\omega(1)}$ the number of selected hyperedges that fall into type-2 size-1 bins is at most $c\beta n/1000$. Finally we recall that there are at most $\frac{n - \id(\pi)}{3} = \beta n / 3$ type-3 size-1 bins. 

Therefore, with probability $(1 - n^{-\omega(1)}$, the number of selected hyperedges that fall into bins of size $\geq 2$ is at least $c\beta n/40 - \beta n/3-c\beta n/1000 \geq c\beta n/60$.
\end{proof}

Finally we state the following analogue of \pref{lem:conditionedimproved} (whose proof is also almost identical). 
 \begin{lemma}       \label{lem:hyperconditionedimproved}
Conditioned on event $\mathcal{D}$,  for $10^4 \leq c \leq n/10^7$, $\beta \geq \beta^*$, with probability $(1-n^{-17})$, $H\sim \calG^{(3)}_{n,m}$ is $(\beta,\beta/240)$-asymmetric.
\end{lemma}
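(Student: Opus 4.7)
The proof will mirror that of \pref{lem:conditionedimproved}, with the main change being a sharper (larger) estimate on the size of the ``pool'' $L$ of possible hyperedges lying in bins of size $\geq 2$, which only strengthens the probability bounds. I first refine the bin decomposition of $G^{(3)}_\pi$: starting from the cycle decomposition, I split each bin of size $\geq 4$ into bins of size $2$ (and at most one bin of size $3$), exactly as in the graph case. Call a resulting size-$2$ or size-$3$ bin \emph{full}, \emph{empty}, or \emph{half-full} depending on whether all, none, or some-but-not-all of its hyperedges are selected in $H$. Each half-full bin contributes at least one hyperedge to $\diff(H, \pi(H))$, so letting $s_\pi$ denote the number of half-full bins, we get $\aut(H; \pi) \leq m - s_\pi/2$. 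Thus it suffices to show $\Pr[\forall \pi : \id(\pi) \leq (1-\beta_0)n, s_\pi \geq \beta_0 m/120 \mid \calD] \geq 1 - n^{-17}$, which by a union bound over the $\leq n$ possible values of $\beta \geq \beta_0$ reduces to proving the same bound (with $n^{-18}$ in place of $n^{-17}$) for each fixed $\beta \geq \beta^*$ and every $\pi$ with $\id(\pi) = (1-\beta)n$.

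Fix such a $\pi$. I decompose $\calD$ as a disjoint union over sub-events $\calC_t$ indicating that exactly $t$ hyperedges of $H$ fall in bins of size $\geq 2$, where $t$ ranges over $[\beta m/60, 2\beta Z_3/10^9]$. Conditioned on $\calC_t$, the $t$ hyperedges are distributed uniformly over the $\binom{L}{t}$ subsets of the pool of $L$ possible hyperedges in bins of size $2$ or $3$. Here $L = Z_3 - \binom{(1-\beta)n}{3} - (\text{type-2 and type-3 size-1 bin counts})$; since the type-2 size-1 bins contribute $O(\beta n^2)$ and type-3 size-1 bins contribute $O(\beta n)$, while $\binom{n}{3} - \binom{(1-\beta)n}{3} \geq \beta n^3/6$, we obtain $L \geq \beta n^3/10$ for $n$ large. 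Counting the number of ways to realize $s_\pi = i$ for $i \leq \beta m/120 \leq t/2$: choose $i$ half-full bins ($\leq \binom{B}{i}$ ways, where $B$ is the total number of size-$\geq 2$ bins), pick the half-full pattern in each (at most $6$ ways per bin, since a size-$3$ bin has $\binom{3}{1}+\binom{3}{2}=6$ such patterns), and choose the $\leq (t-i)/2$ full bins ($\leq \binom{B}{(t-i)/2}$ ways). The manipulations of \eqref{eqn:conditionedimproved-5}--\eqref{eqn:conditionedimproved-7} carry over verbatim to yield
\[
\Pr[s_\pi = i \mid \calC_t] \leq \left(\frac{10^7 t}{L}\right)^{t/4}.
\]

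Since $(10^7 t/L)^{t/4}$ is monotonically decreasing on $t \leq L/(10^7 e)$ and $t \geq \beta cn/60$, substituting the lower bound for $t$ and $L \geq \beta n^3/10$ gives roughly $(10^6 c/n^2)^{\beta cn/240}$, which for $c \leq n/10^7$ and $n$ large is at most $n^{-30\beta n}$. Taking a union bound over the $\leq n^2$ choices of $i$ and the $\leq \binom{n}{\beta n}(\beta n)! \leq n^{2\beta n}$ permutations with $\id(\pi) = (1-\beta)n$ yields the desired $n^{-18}$ bound (with ample margin, using $\beta \geq 1/n$).

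\textbf{Main obstacle.} The only delicate point is the accounting for $L$: unlike in the graph case where size-$1$ bins are of types $1$ and $2$ only, here the type-$2$ size-$1$ bins alone contribute $\Theta(\beta n^2)$ hyperedges, which one might worry could compete with the dominant term. Since the dominant term $Z_3 - \binom{(1-\beta)n}{3}$ is of order $\beta n^3$, these contributions are lower order, and the resulting probability bound is in fact strictly stronger than in the graph case. The remaining bookkeeping is essentially identical.
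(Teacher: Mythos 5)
Your proposal is correct and follows exactly the route the paper intends: the paper states that the proof of \pref{lem:hyperconditionedimproved} is "almost identical" to that of \pref{lem:conditionedimproved} and leaves out the details. You have filled in the one substantive change — the recomputation of the pool size $L$ to account for the three types of size-1 bins in the hypergraph setting — and correctly observed that $L = \Theta(\beta n^3)$ rather than $\Theta(\beta n^2)$, which makes the ratio $t/L$ even smaller and thus only strengthens the final probability bound. Your sanity check that $t \leq 2\beta Z_3/10^9 \ll L/(10^7 e)$ (needed to invoke monotonicity of $(10^7 t/L)^{t/4}$) and your union bounds over $i$, $\pi$, and $\beta$ all check out, with more than enough slack. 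One small arithmetic quibble: a more careful estimate of $Z_3 - \binom{(1-\beta)n}{3}$ gives a lower bound closer to $\beta n^3/12$ than $\beta n^3/6$ for general $\beta \in [\beta^*, 1]$ (the factor $3$ you might expect from $1 - (1-\beta)^3 \approx 3\beta$ only kicks in for small $\beta$; at $\beta = 1$ the difference is exactly $\binom{n}{3}$), so your claimed $L \geq \beta n^3/10$ should perhaps read $L \geq \beta n^3/20$ or so — but this constant has no effect on the conclusion.
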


The $k=3$ case in \pref{thm:randomhypergraphwhp} follows from \pref{cor:hyperbalanced-edges} and \pref{lem:hyperconditionedimproved}.

\fi
\ifnum\full=1
\newcommand{\ABEL}{\mathsf{Additive}\mathrm{-}\mathsf{CSP}(\pred)}
\newcommand{\pred}{\psi}

\section{Generalizing Feige's Hypothesis}\label{sec:feige}
\rnote{Toran seems to claim all these results after his Lemma 3.2.  He does it explicitly for $\Z_m$.}

In this section, we show how our main results can be derived from a much broader class of assumptions than just Feige's \RXOR hypothesis.  In doing so, we are inspired by the recent $\NP$-hardness results of~\cite{Cha13}, along with the Lasserre gaps of~\cite{Tul09}.  The focus of these papers was on a specific type of predicates, which we now define.

\begin{definition}
Let $k \geq 3$, and let $H$ be a finite abelian group.  Then a predicate $\pred:H^k \rightarrow \{0, 1\}$ is a \emph{balanced pairwise independent subgroup} of $H^k$ if the following two conditions hold.
\begin{itemize}
\item For a uniformly random element $a \sim \pred$, each coordinate $a_i$ is uniformly distributed over $G$.  Furthermore, each pair of coordinates $a_i$ and $a_j$, for $i \neq j$, is independent.
\item $\pred$ is a subgroup of $H^k$.
\end{itemize}
Here we are using the convention that $\pred$ is both a function on $H^k$ and a subset of $H^k$, i.e. the subset $\pred^{-1}(1)$.  We will everywhere assume that $\pred$ is a proper subgroup of $H^k$, as otherwise $\pred$ is uninteresting. 
\end{definition}

An instance $\calC$ of the $\ABEL$ problem consists of a set of constraints $\calC = (C_1, \ldots, C_m)$, each of the form
\begin{equation*}
\pred(x_{j_1} + a_1, \ldots, x_{j_k} + a_k) = 1,
\end{equation*}
where the $a_i$'s are elements of $H$.  The corresponding homogeneous instance $\homog{\calC} = \{\homog{C_1}, \ldots, \homog{C_m}\}$ is formed by setting all of the $a_i$'s in each constraint to $0$.  We note that because $\pred$ is a subgroup of $H^k$, it contains $0^k$, so the assignment $x_1 = \ldots = x_n = 0$ satisfies every constraint in $\homog{\calC}$.

For a fixed number of variables $n$ and constraints $m$, a random instance of $\ABEL$ is formed by generating $m$ constraints of this form independently and uniformly.  This involves, for each constraint, picking the variables $x_{j_1}, \ldots, x_{j_k}$ uniformly from the $\binom{n}{k} \cdot k!$ possibilities, along with choosing each $a_i$ independently and uniformly at random.  The next hypothesis states that random $\ABEL$ instances are indistinguishable from almost-satisfiable instances.

\paragraph{Random $\ABEL$ Hypothesis.}  \emph{For every fixed $\eps > 0$, $\Delta \in Z^+$, there is no polynomial time algorithm which on almost all $\ABEL$ instances with $n$ variables and $m = \Delta n$ constraints outputs ``typical'', but which never outputs ``typical'' on instances which an assignment satisfying at least $(1-\eps)m$ constraints.}
\newline

Note that Feige's \RXOR hypothesis can be recovered by setting $\pred = \mathrm{\threexor}$.  Using this, we can weaken the assumptions for \pref{thm:r3xor-hardness} so that it only needs the Random $\ABEL$ Hypothesis to be true for some balanced pairwise independent subgroup $\pred$.

\begin{theorem}\label{thm:abel}
Let $\pred$ be a predicate which is a balanced pairwise independent subgroup.  Assume the Random $\ABEL$ Hypothesis.  Then there is no polynomial-time algorithm for \robustgiso.  More precisely, there is an absolute constant $\delta >0$ such that the following holds: suppose there is a $t(n)$-time algorithm which can distinguish $(1-\eps)$-isomorphic $n$-vertex, $m$-edge graph pairs from pairs which are not even $(1-\delta)$-isomorphic (where $m = O(n)$).  Then there is a  universal constant $\Delta \in \Z^+$ and a $t(O(n))$-time algorithm which outputs ``typical'' for almost all $n$-variable, $\Delta n$-constraint instances of the $\ABEL$ problem, yet which never outputs ``typical'' on instances which are $(1-\eps)$-satisfiable.
\end{theorem}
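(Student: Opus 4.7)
The proof strategy mirrors that of \pref{thm:r3xor-hardness}, replacing \threexor with the more general $\ABEL$ problem. The first step is to construct a generalized gadget $G_C^{\pred}$ for each constraint $C : \pred(x_{j_1}+a_1,\ldots,x_{j_k}+a_k) = 1$, analogous to \pref{def:gadgetgraph}. The variable block for each variable $x$ consists of $|H|$ vertices $\{x \mapsto h : h \in H\}$ connected by a rigid auxiliary subgraph whose only automorphisms correspond to left translation by elements of $H$ (e.g.\ a Cayley graph of $H$ that is a graphical regular representation; for the finitely many abelian groups lacking a GRR, small asymmetric decorations can be attached to break the extra symmetry). For each $\alpha \in \pred$, we add a constraint vertex $\alpha_C$, connect all constraint vertices of $C$ into a clique, and connect $\alpha_C$ to $x_{j_i} \mapsto \alpha_i - a_i$ for each $i$. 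The full graph $G_\calC$ is obtained by identifying variable blocks across all constraints as in \pref{def:3xorencoding}.

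The completeness lemma generalizes \pref{lem:completeness} using the subgroup property of $\pred$. Given an assignment $\tau$ satisfying $(1-\eps)$ fraction of the constraints, define $\pi$ by $\pi(x_j \mapsto h) = x_j \mapsto h - \tau(x_j)$ on variable vertices, and on a constraint vertex $\alpha_C$ (corresponding to $\alpha + a \in \pred$) by $\pi(\alpha_C) = (\alpha_i - \tau(x_{j_i}))_{\homog{C}}$. The key observation is that whenever $C$ is satisfied by $\tau$, both $\alpha + a$ and $\tau|_C + a$ lie in $\pred$, so their difference $\alpha - \tau|_C$ lies in $\pred$, giving a legal constraint vertex of $\homog{C}$. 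The variable--block edges are preserved because $\pi$ acts as a translation, and the constraint--variable edges are preserved because both endpoints are shifted consistently. Edges associated with unsatisfied constraints can fail, contributing at most $O(\eps m)$ loss.

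The soundness analysis follows the outline of \pref{lem:soundness}. Given an isomorphism $\pi$ between $G_\calC$ and $G_{\homog{\calC}}$ preserving at least $(1-\delta)$ fraction of edges, the rigidity of the variable block gadget forces $\pi$ to map most variable blocks to variable blocks and most constraint vertex sets $V_C$ to sets $V_{\homog{C'}}$, and within each preserved variable block the restriction of $\pi$ must be translation by some group element $\tau(x_j) \in H$. Reading off block destinations yields a permutation $\sigma$ of the $n$ variables, which by the analogue of \pref{claim:soundness-almost-automorphism} is a near-automorphism of the underlying $k$-uniform constraint hypergraph. Invoking the robust asymmetry of random hypergraphs (\pref{thm:randomhypergraphwhp}) forces $\sigma$ to be close to the identity. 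Finally, a generalization of \pref{claim:soundness-gadget} replaces the $\F_2$-parity argument by counting matches between a constraint vertex and translated variable vertices in the group $H$; using pairwise independence of $\pred$ together with its subgroup structure, one shows that for each correctly handled constraint the extracted translations $(\tau(x_{j_i}) + a_i)$ must lie in $\pred$, so $\tau$ satisfies $C$. This yields an assignment satisfying $(1 - O(\delta))$ fraction of the $\ABEL$ instance, which when combined with the Random $\ABEL$ Hypothesis completes the reduction.

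The main obstacle is constructing the variable block gadget whose only automorphisms are translations by $H$, while keeping the gadget small, robust to a few edge modifications, and compatible with the constraint vertex attachment. A secondary difficulty is the redoing of the ``parity'' decoding argument in the general group setting: the nice linear--algebraic identity over $\F_2$ used in \pref{claim:soundness-gadget-helper} is replaced by an argument that uses the balanced pairwise independent subgroup structure of $\pred$ to recover the coset, and hence the translation, uniquely from the neighborhood pattern of a constraint vertex.
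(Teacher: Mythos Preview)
Your high-level outline matches the paper's proof closely: gadget reduction, completeness via subgroup translation, soundness via robust hypergraph asymmetry plus a local decoding step. Two points deserve correction.

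First, the variable gadget. Your suggestion of using a Cayley graph that is a GRR is problematic: for an abelian group $H$, every Cayley graph $\mathrm{Cay}(H,S)$ with $S=-S$ admits the inversion map $h \mapsto -h$ as a graph automorphism, and this is a translation only when every element of $H$ has order at most~$2$, i.e.\ $H$ is an elementary abelian $2$-group. So it is not ``finitely many'' abelian groups that lack a GRR --- it is all of them except $\Z_2^t$. Decorations to kill inversion could in principle be made to work, but this is more than a throwaway fix. The paper instead builds the gadget explicitly from the decomposition $H \cong \Z_{p_1}\oplus\cdots\oplus\Z_{p_t}$: a ``row gadget'' attaches to each $i$-row a star with exactly $i$ leaves, so that any automorphism must send $i$-rows to $i$-rows; then a ``cycle gadget'' places an oriented cycle-like structure on each $i$-row so that within a row the only automorphisms are cyclic shifts. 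Together these force every automorphism to act as $h \mapsto h+a$ on variable vertices, and the construction has size $O(|H|\log^2|H|)$ with maximum clique at most $\min\{4,|H|\}$.

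Second, the decoding step. You invoke pairwise independence in the analogue of \pref{claim:soundness-gadget}, but the paper's argument there uses only the subgroup property: if $f$ is an $\alpha$-permutation isomorphism on the label-extended graph, take the constraint vertex $\beta\in V_C$ with $f(\beta)=0$; each neighbor $x_i\mapsto\beta(i)$ of $\beta$ must map to a neighbor of $0$ in $\homog{G}$, forcing $\beta(i)-\alpha(i)=0$ for all $i$, hence $\alpha=\beta$ satisfies $C$. Pairwise independence enters elsewhere --- it guarantees $|\pred|\geq |H|^2$, so that the constraint cliques are strictly larger than any clique appearing in the variable gadgets, which is what lets the soundness argument identify constraint blocks via maximal cliques in the first place.
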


\paragraph{Outline of proof.}
The proof of Theorem~\pref{thm:abel} largely follows the same outline as the proof of \pref{thm:r3xor-hardness}.  As a result, our proof of \pref{thm:abel} will be carried out at a somewhat high level, except in those areas where we need to highlight changes.  The most significant change is that the general abelian group setting requires us to develop more complicated gadgets, and we will need to describe these gadgets before stating the reduction.

\subsection{Abelian variable gadgets}\label{sec:variablegadget}

For a given variable $x$, the construction in \pref{sec:reduction-3xor} creates two vertices $x \mapsto 0$ and $x \mapsto 1$ and places an edge between them.  This subgraph has two automorphisms, corresponding to the two possible Boolean assignments to $x$.  This is a simple example of what we will call a \emph{variable gadget}.  When $x$ takes values from a more general group $H$, a variable gadget for $x$ will be a graph $G = (V_x \cup V_A, E)$ .  Here, $V_x$ is the set of vertices $``x \mapsto a"$ for $a \in H$, and the vertices in $V_A$ are thought of as auxiliary vertices.  Furthermore, we require that the only automorphisms of $G$ correspond to the $|H|$ possible assignments to $x$.  More formally:

\begin{definition}
Let $x$ be a variable which takes values from $H$, an abelian group.
A \emph{variable gadget} for $x$ is a graph $G = (V_x \cup V_A, E)$ satisfying
\begin{itemize}
\item[] \textbf{(Completeness)} For each $a \in H$, $G$ has an automorphism $f_a$ such that $f_a(x\mapsto b) = x\mapsto(a+ b)$ for each $b \in H$.
\item[] \textbf{(Soundness)} Let $f:V_x \cup V_A \rightarrow V_x \cup V_A$ be an automorphism on $G$.  Then there exists an element $a \in H$ such that $f(x\mapsto b) = x\mapsto(a+ b)$ for each $b \in H$.
\end{itemize}
\end{definition}

Constructing a variable gadget for $x$ when $x$ is non-Boolean is more complicated than just slapping an edge between two vertices.  However, we are able to construct variable gadgets with the following parameters:

\begin{lemma}\label{lem:variablegadget}
Let $x$ be a variable over $H$, an abelian group.  Then there is a variable gadget for $x$ with at most ${4\cdot|H| \log^2 |H|}$ auxiliary variables, $7\cdot |H| \log^2 |H|$ edges, and a maximum clique size of $\min\{4, |H|\}$.
\end{lemma}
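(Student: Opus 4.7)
The plan is to build the gadget as an ``unlabeled encoding'' of the Cayley digraph of $H$ with respect to a chosen generating set: each directed labeled edge gets replaced by a small rigid subgraph whose shape pins down both its label (the generator) and its direction, and each variable vertex receives a distinguishing tag so that $V_x$ is globally identifiable. This will force any graph automorphism to act on $V_x$ as a translation by some $a \in H$.

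Concretely, fix a generating set $S = \{g_1, \dots, g_k\}$ for $H$ with $k \le \lceil \log_2 |H|\rceil$ (which exists because $H$ is a finite abelian group). The gadget $G$ consists of: (a) the variable vertices $V_x = \{x\mapsto a : a\in H\}$; (b) for each $(a,i)\in H\times[k]$, an \emph{arrow} $A_{a,i}$ given by a length-$3$ path $x\mapsto a\,-\,m_1^{a,i}\,-\,m_2^{a,i}\,-\,x\mapsto(a+g_i)$, together with a pendant path of length $i$ (the \emph{type-$i$ tag}) attached to $m_1^{a,i}$; and (c) a pendant path of length $k+2$ (the \emph{$V_x$-tag}) attached to each vertex of $V_x$. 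The type-$i$ tag simultaneously encodes $i$ via its length and breaks the tail/head symmetry of the length-$3$ path, while the $V_x$-tag distinguishes variable vertices from the interior vertices of arrows (this matters in particular when $k=1$, i.e.\ when $H$ is cyclic; the degenerate case $|H|=2$ is handled directly by the single edge $\{x\mapsto 0,x\mapsto 1\}$). Completeness is immediate: for each $h\in H$, sending $x\mapsto b$ to $x\mapsto(b+h)$, sending $m_j^{b,i}$ to $m_j^{b+h,i}$, and shifting all tag vertices accordingly is a graph automorphism that acts on $V_x$ as translation by $h$.

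For soundness, given any automorphism $\phi$ of $G$, the plan is to peel off structure in three stages: (i) pendant paths of length $k+2$ are exactly the $V_x$-tags (all type tags have length at most $k$), so $\phi$ permutes $V_x$-tags and hence maps $V_x$ to $V_x$; (ii) after deleting the $V_x$-tags, the remaining non-$V_x$ vertices partition into disjoint arrow gadgets, so $\phi$ permutes arrows; (iii) inside each arrow $A_{a,i}$, the pendant on $m_1^{a,i}$ pins down both the type $i$ (by its length) and the tail/head orientation, so each type-$i$ arrow out of $x\mapsto a$ must go to a type-$i$ arrow out of $\phi(x\mapsto a)$. Writing $\phi(x\mapsto a) = x \mapsto \phi'(a)$, stage~(iii) gives $\phi'(a+g_i) = \phi'(a)+g_i$ for every $a\in H$ and $i\in[k]$; iterating using the fact that $\{g_i\}$ generates $H$ yields $\phi'(a+h) = \phi'(a) + h$ for all $a,h\in H$, and setting $a=0$ identifies $\phi'$ as a translation, as required.

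The parameter bounds are a direct count: the total number of auxiliary vertices is
\[
|H|(k+2) + 2|H|k + |H|\sum_{i=1}^k i \;\le\; |H|\left(\tfrac{k^2}{2}+\tfrac{7k}{2}+2\right) \;\le\; 4|H|\log^2|H|,
\]
the edge count is bounded analogously by $7|H|\log^2|H|$, and since $G$ is built entirely from paths it is triangle-free, so the maximum clique size is $2\le\min\{4,|H|\}$. The main technical obstacle is verifying stages~(i)--(iii) of the soundness argument rigorously --- ruling out ``mixed'' automorphisms that might, for instance, glue part of a $V_x$-tag onto part of a type tag, or reverse an arrow via some reflection through a pendant --- and this is handled by a case analysis comparing the distinct pendant-path lengths and the local degree profiles at the various attachment vertices.
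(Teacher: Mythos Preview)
Your construction is correct and genuinely different from the paper's. The paper works coordinate-by-coordinate with the structure theorem $H \cong \Z_{p_1}\oplus\cdots\oplus\Z_{p_t}$: first a \emph{row gadget} attaches a star with $i$ leaves to every $i$-row (the set of group elements with all but the $i$th coordinate fixed), forcing any automorphism to send $i$-rows to $i$-rows; then a \emph{cycle gadget} on each row (a directed cycle built from the little four-vertex asymmetric unit) forces the action on each cyclic factor to be a shift. Your approach instead encodes a Cayley digraph directly: one uniform arrow gadget per generator, with pendant-path lengths doing all the distinguishing work (generator label, orientation, and the $V_x$ vs.\ interior distinction).

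The trade-offs are as follows. Your gadget is triangle-free, so you actually beat the paper's clique bound of $\min\{4,|H|\}$; and the soundness argument is cleaner, reducing to the single relation $\phi'(a+g_i)=\phi'(a)+g_i$ and the fact that $\{g_i\}$ generates $H$. The paper's construction is slightly more economical in auxiliary vertices when $H$ has few invariant factors of large order (their count is governed by $t^2$ with $t$ the number of factors, whereas your $V_x$-tags alone already cost $|H|(k+2)$), but both fit under the stated $O(|H|\log^2|H|)$ bound. One point worth making explicit in your write-up: the existence of a generating set of size $k\le\log_2|H|$ follows from the invariant-factor decomposition (each factor has order $\ge 2$, so $2^t\le|H|$), and you should note that the degree of a variable vertex is $2k+1\ge 3$ so that the pendant-path-length argument in stage~(i) goes through uniformly for all $k\ge 1$.
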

The proof of this lemma can be found in \pref{sec:gadgetproofs}.

\subsection{Label-extended graphs}\label{sec:labelext}
Let $\pred:H^k \rightarrow \{0, 1\}$ be a subgroup, and let $C$ be an $\ABEL$ constraint.
The \emph{label-extended graph} of $C$ consists of vertex sets $V_{x_i}$ as defined in \pref{sec:variablegadget} for each $i \in[k]$; and $|\pred|$~``constraint vertices'' with names $``x_1\mapsto a_1, \ldots, x_k \mapsto a_k"$ for each assignment $(a_1, \ldots, a_k)$ which satisfies $C$.  We will sometimes write these constraint vertices as assignments $\alpha:\{x_1, \ldots, x_k\} \rightarrow H$, and the set of them we call $V_C$.  For edges, each constraint vertex $\alpha$ is connected to each of the $k$ variable vertices it is consistent with.

Let $G$ be the label-extended graph of a constraint $C$, and let $\homog{G}$ be the label-extended graph of the homogeneous constraint $\homog{C}$.  Our goal is to show that every homomorphism from $G$ to $\homog{G}$ corresponds to a satisfying assignment of $C$.  While this may not be true in general, we will be able to restrict our attention to only those automorphisms which act on each set $V_{x_i}$ in a very limited way; this is due to the variable gadget from \pref{sec:variablegadget}.  Let $\alpha:\{x_1, \ldots, x_k\} \rightarrow H$ be an assignment to the variables.  Then a permutation $f$ on the vertices of $G$ is an \emph{$\alpha$-assignment} if for each $i \in [k]$, $f(x_i \mapsto b) = x_i \mapsto(b- \alpha(x_i) )$ for each $b \in H$.  Then

\begin{lemma}\label{lem:constraintvariable}
Let $\pred:H^k \rightarrow \{0, 1\}$,  let $C$ be an $\ABEL$ constraint, and let $G$ be its label-extended graph. Let $\homog{G}$ be the label extended graph of $\homog{C}$. Then
\begin{itemize}
\item[] \textbf{(Completeness)} For each assignment $\alpha$ which satisfies $C$, there is an $\alpha$-permutation $f$ which is a homomorphism from $G$ to $\homog{G}$.
\item[] \textbf{(Soundness)} Let $f$ be an $\alpha$-permutation which is a homomorphism from $G$ to $\homog{G}$.  Then $\alpha$ is a satisfying assignment of $C$.
\end{itemize}
\end{lemma}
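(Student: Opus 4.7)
The proof will be a short group-theoretic computation exploiting the fact that $\pred$ is a subgroup of $H^k$. The key observation is that the constraint vertex sets $V_C$ and $V_{\homog{C}}$ are cosets of $\pred$ in $H^k$ --- specifically $V_C = \pred - a$, where $a = (a_1, \ldots, a_k)$ is the shift vector defining $C$, while $V_{\homog{C}} = \pred$ --- and in both graphs every constraint vertex $\gamma$ is adjacent precisely to the variable vertices $\{x_i \mapsto \gamma(x_i) : i \in [k]\}$. This rigid structure means that once an $\alpha$-permutation has been fixed on the variable layer, its action on each constraint vertex is forced by edge preservation to be $\beta_C \mapsto (\beta - \alpha)_{\homog{C}}$, and this prescription is well-defined exactly when $\alpha$ satisfies $C$.

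For completeness, given a satisfying assignment $\alpha$ of $C$, I will take $f$ to be the $\alpha$-permutation on variable vertices extended via $f(\beta_C) = (\beta - \alpha)_{\homog{C}}$ on the constraint layer, and then verify three items. First, the image is genuinely a vertex of $\homog{G}$: since $\alpha + a$ and $\beta + a$ both lie in $\pred$, the subgroup property yields $\beta - \alpha = (\beta + a) - (\alpha + a) \in \pred$, so $(\beta - \alpha)_{\homog{C}} \in V_{\homog{C}}$. Second, $f$ is a bijection on the constraint layer because $|V_C| = |V_{\homog{C}}| = |\pred|$ and $\beta \mapsto \beta - \alpha$ is injective. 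Third, edges are preserved: an edge $\{\beta_C, x_i \mapsto \beta(x_i)\}$ of $G$ is carried to $\{(\beta - \alpha)_{\homog{C}}, x_i \mapsto \beta(x_i) - \alpha(x_i)\}$, which is an edge of $\homog{G}$ since the $i$-th coordinate of the assignment $\beta - \alpha$ is exactly $\beta(x_i) - \alpha(x_i)$.

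For soundness, suppose $f$ is an $\alpha$-permutation that is a homomorphism $G \to \homog{G}$. Pick any satisfying assignment $\beta$ of $C$ (one exists: take $\beta(x_i) = -a_i$, which yields $\beta + a = 0 \in \pred$). In $G$, the vertex $\beta_C$ has the $k$ neighbors $x_i \mapsto \beta(x_i)$, so because $f$ is a homomorphism acting as an $\alpha$-permutation on the variable layer, $f(\beta_C)$ must be adjacent in $\homog{G}$ to $x_i \mapsto \beta(x_i) - \alpha(x_i)$ for every $i \in [k]$. Since each constraint vertex of $\homog{G}$ meets exactly one variable vertex from each block $V_{x_i}$, $f(\beta_C)$ is forced to be the vertex labeled by the assignment $\beta - \alpha$, which in particular places $\beta - \alpha \in \pred$. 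Combining this with $\beta + a \in \pred$ and invoking the subgroup property one last time gives $\alpha + a = (\beta + a) - (\beta - \alpha) \in \pred$, i.e.\ $\alpha$ satisfies $C$, completing the proof.

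The main obstacle here is negligible once the framework is in place; the real technical work has already been done in Lemma~\ref{lem:variablegadget}, which is what justifies restricting attention to $\alpha$-permutations on the variable layer in the first place. The present lemma then reduces, as above, to pure subgroup arithmetic on the constraint layer.
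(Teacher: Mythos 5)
Your proof is correct and follows essentially the same approach as the paper's. The completeness half is nearly verbatim the same; for soundness the paper instead picks the particular $\beta$ whose image under $f$ is the all-zero constraint vertex and concludes $\alpha=\beta$ directly, whereas you argue from an arbitrary satisfying $\beta$ and finish with one extra application of the subgroup property --- a cosmetic restructuring that relies on the same ingredients (the $\alpha$-permutation constraint on the variable layer, the rigid one-neighbor-per-block adjacency of constraint vertices, and closure of $\pred$ under subtraction).
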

In fact, though we will not use this, it can be shown that the soundness holds even if $f$ is only a $(1-\frac{1}{k}+\eps)$ homomorphism from $G$ to $\homog{G}$, for any $\eps > 0$.  \pref{lem:constraintvariable} is proved in \pref{sec:labelext}.

\subsection{Reduction from $\ABEL$ to \giso}

We begin with a generalization of \pref{def:gadgetgraph}:
\begin{definition}
Let $C$ be a $\pred$ constraint involving variables $x_1, \ldots, x_k$.  The associated gadget graph $G_C$ has a vertex set consisting of the variable vertices $V_{x_i}$ for each $i \in [k]$ and the constraint vertices $V_C$.  For each $i \in [k]$, the variable gadget from \pref{lem:variablegadget} is added to $V_{x_i}$; this adds a set of auxiliary variables, which we call $V_{A_i}$.   Next, $V_C$ is connected to the variable vertex sets via the label-extended graph.  Finally, the variables in $V_C$ are connected to each other by a clique.
\end{definition}

Here the vertex sets $V_{x_i}$ are as defined in \pref{sec:variablegadget} and the vertex set $V_C$ is as defined in \pref{sec:labelext}. 

Let $\calC$ be a collection of $\ABEL$ constraints.  We define the associated graph $G_{\calC}$ as in \pref{def:3xorencoding} as a union of gadget graphs $G_C$ where the variable vertex sets and gadgets are associated with each other.

The number variable vertices in $G_{\calC}$ is $|H| \cdot n$, the number of constraint vertices is $|\pred|\cdot m$, and the number of auxiliary variables given in \pref{lem:variablegadget} is at most $4\cdot|H| \log^2 |H| \cdot n$.  Next, the number of edges from the complete graphs and the constraint-variable graphs is $\left(\binom{|\pred|}{2} + |\pred|\cdot k\right)$, and the number of edges given by \pref{lem:variablegadget} is at most $7\cdot |H| \log^2 |H| n$.  This gives us the following remark:

\begin{remark}
If $\calC$ is an $\ABEL$ instance with $n$ vertices and $m$ constraints then the graph $G_{\calC}$ has $N = N_1 m + N_2 n$ vertices and $M = M_1 m + M_2 n$ edges. Here the $N_i$'s and $M_i$'s are absolute constants which depend only on the group $H$ and the predicate $\phi$.
\end{remark}

One thing that we will need is that the largest cliques of $G_{\calC}$ are those only involving constraint vertices.  To show this, we will first need the following simple proposition:
\begin{proposition}\label{prop:bigsubgroup}
Let $H$ be a finite abelian group and let $\pred:H^k \rightarrow \{0, 1\}$ be a balanced pairwise independent subgroup.  Then $|\pred| \geq |H|^2$.
\end{proposition}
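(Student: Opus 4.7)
The plan is to exhibit a surjective group homomorphism from $\pred$ onto $H \times H$, after which Lagrange's theorem finishes the job.

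First, I would fix any two distinct coordinates $i \neq j$ (which exist since $k \geq 3$, though $k\geq 2$ suffices) and consider the projection map $\pi_{ij} : \pred \to H \times H$ defined by $\pi_{ij}(a_1,\ldots,a_k) = (a_i, a_j)$. Since $\pred$ is a subgroup of $H^k$ and $\pi_{ij}$ is the restriction of a coordinate projection, it is a group homomorphism.

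Next, I would use the balanced pairwise independence hypothesis to argue that $\pi_{ij}$ is surjective. If $a$ is drawn uniformly from $\pred$, then by pairwise independence the pair $(a_i,a_j)$ is uniformly distributed on $H \times H$. In particular, the support of this distribution is all of $H \times H$, so every element of $H \times H$ is hit by $\pi_{ij}$.

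Finally, by the first isomorphism theorem, $\pred/\ker(\pi_{ij}) \cong H \times H$, so $|\pred| = |\ker(\pi_{ij})|\cdot |H|^2 \geq |H|^2$. There is no real obstacle here; the only subtlety is being careful that ``balanced pairwise independent'' is used to conclude surjectivity of $\pi_{ij}$ (rather than, say, trying to directly count elements of $\pred$), which is what makes the homomorphism-and-Lagrange route clean.
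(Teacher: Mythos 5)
Your proof is correct and rests on the same key observation as the paper's: pairwise independence forces the projection of a uniformly random $a \sim \pred$ onto two coordinates to be uniform on $H \times H$, so the projection is surjective and $|\pred| \geq |H|^2$. The paper stops there with a bare cardinality count; the group-homomorphism and first-isomorphism-theorem scaffolding you add is harmless but unnecessary, since surjectivity of \emph{any} map $\pred \to H \times H$ already gives the bound.
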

\begin{proof}
For a uniformly random element $a \sim \pred$, the definition of a balanced pairwise independent subgroup states that $a_1$ and $a_2$ will each be uniform on $H$ and independent of each other.  Thus, $(a_1, a_2)$ ranges over the $|H|^2$ possibilities with equal probability, meaning that $\pred$ must have at least $|H|^2$ elements.
\end{proof}
\begin{remark}\label{rem:largestcliques}
By \pref{lem:variablegadget}, the largest clique in the variable gadgets is of size $\min\{4, |H|\}$, whereas the constraint vertices come in cliques of size $|\pred| \geq |H|^2$.  Thus, the largest cliques in the graph consist of the constraint vertices.
\end{remark}

\paragraph{The reduction.}  \emph{Given a collection of $\ABEL$ constraints $\calC$, we introduce the corresponding \giso instance $(G_{\calC}, G_{\homog{\calC}})$.}

\subsection{Abelian proofs}

Having defined the reduction from $\ABEL$ to \giso, we are now in a position to prove \pref{thm:abel}.  Our two main lemmas are:

\begin{lemma}\label{lem:chancompleteness}
Let $\calC$ be a $\ABEL$ instance such that $\val(\calC) \geq 1-\eps$.  Then $\GI(G_{\calC}, G_{\homog{\calC}}) \geq 1- \eps$.
\end{lemma}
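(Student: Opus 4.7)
The plan is to mirror the proof of \pref{lem:completeness} in the more general abelian setting, with the variable gadget from \pref{lem:variablegadget} replacing the single variable edge and with the subgroup structure of $\pred$ playing the role of the identity $2b=0$ that was implicit in the $\Z_2$ case. Given an assignment $\tau:\calX\to H$ with $\val(\calC;\tau)\geq 1-\eps$, I would construct a bijection $\pi:V(G_\calC)\to V(G_{\homog{\calC}})$ piece by piece and lower bound the number of edges it preserves.

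On each variable gadget for $x\in\calX$ (which is identical in $G_\calC$ and $G_{\homog{\calC}}$), I would take $\pi$ to be the automorphism $f_{-\tau(x)}$ supplied by the completeness of \pref{lem:variablegadget}; in particular $\pi(x\mapsto b)=x\mapsto b-\tau(x)$, with the gadget's auxiliary vertices permuted accordingly. On each constraint block $V_C$, I would set $\pi(\alpha)=\alpha-\tau|_{\mathrm{vars}(C)}$ when $\tau$ satisfies $C$, and use any bijection $V_C\to V_{\homog{C}}$ otherwise. The substantive check here is that the shift $\alpha\mapsto\alpha-\tau|_{\mathrm{vars}(C)}$ really lands in $V_{\homog{C}}$ when $C$ is satisfied: writing $C$ as $\pred(x_{j_1}+a_1,\ldots,x_{j_k}+a_k)=1$, both $(\alpha(x_{j_i})+a_i)_i$ and $(\tau(x_{j_i})+a_i)_i$ lie in $\pred$, so their difference $(\alpha(x_{j_i})-\tau(x_{j_i}))_i$ lies in $\pred$ precisely because $\pred$ is a \emph{subgroup} of $H^k$. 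This is the only place where the subgroup hypothesis is used in an essential way, and it is the main conceptual obstacle relative to the Boolean \threexor case (which could afford the sign-agnostic map $\alpha\mapsto\alpha+\tau$ thanks to $2b=0$ in $\Z_2$).

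I would then verify that the only edges of $G_\calC$ that $\pi$ can fail to preserve are the constraint-variable edges incident to unsatisfied constraints. Variable-gadget edges are preserved because $f_{-\tau(x)}$ is a graph automorphism; the clique on each $V_C$ is mapped bijectively to the clique on the matching $V_{\homog{C}}$; and for each satisfied $C$, the constraint-variable edges are preserved because $\pi$ restricted to $V_C\cup\bigcup_i V_{x_{j_i}}$ is exactly a $\tau|_{\mathrm{vars}(C)}$-permutation and hence a homomorphism from $G_C$ to $G_{\homog{C}}$ by the completeness half of \pref{lem:constraintvariable}.

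The counting is then routine: each of the at most $\eps m$ unsatisfied constraints destroys at most $|\pred|\cdot k$ edges (its constraint-variable edges), for at most $\eps m\cdot |\pred|\cdot k$ broken edges in total. Since $G_\calC$ already contains $|\pred|\cdot k\cdot m$ constraint-variable edges alone, the fraction of broken edges is at most $\eps$, giving $\GI(G_\calC,G_{\homog{\calC}})\geq\GI(G_\calC,G_{\homog{\calC}};\pi)\geq 1-\eps$.
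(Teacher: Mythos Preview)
Your proposal is correct and follows essentially the same approach as the paper's proof: both use the variable-gadget automorphism $f_{\tau(x)}$ (or equivalently $f_{-\tau(x)}$) on each variable block, invoke the completeness of \pref{lem:constraintvariable} on satisfied constraints, map $V_C$ arbitrarily to $V_{\homog{C}}$ on unsatisfied ones, and conclude that only the constraint--variable edges of unsatisfied constraints can be broken. Your write-up is in fact a bit more careful than the paper's in two places---you unpack the subgroup calculation that makes $\alpha-\tau$ land in $V_{\homog{C}}$ (which is really the content of the completeness half of \pref{lem:constraintvariable}), and you make the final counting explicit by observing that the $|\pred|\cdot k\cdot m$ constraint--variable edges alone already dominate the at most $\eps m\cdot|\pred|\cdot k$ broken ones---whereas the paper just asserts the $1-\eps$ bound from the fact that unsatisfied constraints are an $\eps$-fraction.
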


\begin{lemma}\label{lem:chansoundness}
There are absolute constants $\delta, c > 0$ depending only on the predicate $\pred$ such that the following holds: let $\calC = \{C_1, C_2, \ldots, C_m\}$ be a random $\ABEL$ instance with $n$ variables and $m = cn$ equations.  With probability $1-o(1)$, we have
\begin{equation*}
\GI(G_{\calC}, G_{\homog{\calC}}) < 1- \delta.
\end{equation*}
\end{lemma}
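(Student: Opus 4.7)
The plan is to follow the architecture of \pref{lem:soundness}: prove high-probability structural properties of a random $\ABEL$ instance and then combine them with a deterministic decoding lemma. For the high-probability part, given $\calC$ random with $m = cn$ constraints, the associated $k$-uniform constraint hypergraph $H_\calC$ is a uniformly random element of $\calG^{(k)}_{n,m}$, so for $c \geq c_0(\pred)$ sufficiently large, with probability $1 - o(1)$, $H_\calC$ is $(\eps, O(c))$-degree bounded (a straightforward generalization of \pref{claim:degree-bounded}) and $(\beta, \gamma)$-asymmetric with $\beta, \gamma$ chosen so that $200\eps < \gamma$ (by \pref{thm:randomhypergraphwhp}). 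Additionally, because $\pred$ is a \emph{proper} subgroup of $H^k$ and the offsets $a_i$ in each constraint are independent and uniform, every fixed assignment satisfies each constraint independently with probability exactly $|\pred|/|H|^k < 1$, so a Chernoff plus union bound over $|H|^n$ assignments gives $\val(\calC) < \tfrac{1}{2}(1 + |\pred|/|H|^k) =: 1 - \eta_0$ with probability $1 - o(1)$, provided $c$ is large enough as a function of $\pred$.

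The main technical content is a deterministic decoding lemma in the style of \pref{lem:soundness-decoding}. Suppose $\GI(G_\calC, G_{\homog{\calC}}) \geq 1 - \delta$ for some small $\delta$, witnessed by a bijection $\pi$. First, by \pref{rem:largestcliques}, the maximum cliques of $G_\calC$ and $G_{\homog{\calC}}$ are exactly the constraint vertex sets $V_{C_i}$ and $V_{\homog{C_{i'}}}$ (size $|\pred| \geq |H|^2$, strictly exceeding the max clique $\min\{4,|H|\}$ inside any variable gadget). Since at most $\delta M$ edges are broken, for all but $O(\delta)m$ indices $i$, $\pi$ bijects $V_{C_i}$ with some $V_{\homog{C_{\sigma(i)}}}$; call these indices $A$. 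Next, using the label-extended bipartite structure to count preserved constraint-to-variable edges and invoking the degree-bound on $H_\calC$, one shows as in \pref{claim:soundness-correspondence} that for all but $O(c\delta)n$ variables $x_j$, the non-constraint vertex set $V_{x_j} \cup V_{A_j}$ is mapped by $\pi$ bijectively onto some $V_{x_{j'}} \cup V_{A_{j'}}$ and that the induced bijection is a genuine isomorphism of variable gadgets; the soundness clause of \pref{lem:variablegadget} then forces this isomorphism to be a group translation by some $a_j \in H$. Defining $\sigma \co \calX \to \calX$ by $\sigma(j) = j'$ on these variables, the argument of \pref{claim:soundness-almost-automorphism} shows $\sigma$ is a $(1 - O(\eps + \delta))$-automorphism of $H_\calC$; the $(\beta,\gamma)$-asymmetry then yields that $\sigma$ has at least $(1-\beta)n$ fixed points. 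Setting $\tau(x_j) := -a_j$ on fixed points and arbitrary elsewhere, and mimicking \pref{claim:soundness-threelin-val}, for a $(1 - O(\delta + \eps + \beta))$-fraction of constraints $C_i$ the restriction of $\pi$ to $G_{C_i}$ is an exact homomorphism onto $G_{\homog{C_i}}$ that acts on each $V_{x_j}$ as translation by $-\tau(x_j)$; the soundness clause of \pref{lem:constraintvariable} then guarantees $\tau$ satisfies such $C_i$. Hence $\val(\calC) \geq 1 - O(\delta + \eps + \beta)$, contradicting $\val(\calC) < 1 - \eta_0$ once $\delta$ (and hence $\eps, \beta$ chosen via $c$) is a small enough constant depending on $\pred$.

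The main obstacle will be the second half of the structural step: identifying which vertices of $G_{\homog{\calC}}$ constitute each variable gadget. In the $\threexor$ case this was trivial because each variable gadget had only two vertices connected by a single edge, but here variable gadgets carry nontrivial auxiliary structure, and \pref{lem:variablegadget} provides only exact (not robust) soundness. The workaround is to first use Step A combinatorially, together with the label-extended bipartite pattern between constraint cliques and variable vertices, to conclude that for most $j$ the set $V_{x_j} \cup V_{A_j}$ is mapped \emph{exactly} onto a single target variable gadget (not spread across several); then exact soundness of \pref{lem:variablegadget} extracts the group element $a_j$. Getting the accounting right so that the cumulative loss from "bad" constraints, "bad" variables, and "bad" constraints in $H_\calC$ whose variables are not all fixed points remains $o(1)$ is the bookkeeping core of the argument.
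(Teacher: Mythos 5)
Your proposal is correct and follows essentially the same architecture as the paper: establish high-probability structural properties of the random $\ABEL$ instance (degree bound on the constraint hypergraph via a generalization of \pref{claim:degree-bounded}, robust asymmetry via \pref{thm:randomhypergraphwhp}, and low value via Chernoff over $|H|^n$ assignments), then invoke a deterministic decoding lemma that mirrors \pref{lem:soundness-decoding}---identify constraint cliques via \pref{rem:largestcliques}, identify variable-gadget blocks, extract a group translation per variable from the exact soundness of \pref{lem:variablegadget}, build an almost-automorphism $\sigma$ of the constraint hypergraph, apply asymmetry to get many fixed points, and decode an assignment via \pref{lem:constraintvariable}. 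The paper's proof of \pref{lem:chansoundness} is literally this plan, including the bookkeeping obstacle you flag about matching variable gadgets as units.
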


Given a polynomial-time algorithm $\calA$ for \robustgiso, we can define a polynomial-time random $\ABEL$ solver $\calA_{\pred}$ which, on input $\calC$, runs $\calA$ on $\left(G_{\calC}, G_{\homog{\calC}}\right)$.  If $\val(\calC) \geq 1-\eps$, then by \pref{lem:chancompleteness}, $\GI(G_{\calC}, G_{\homog{\calC}}) \geq 1-\eps$, and so $\calA_{\pred}$ will never output ``typical".  On the other hand, if $\calC$ is a random $\ABEL$ instance, then with probability $1-o(1)$ we have $\GI(G_{\calC}, G_{\homog{\calC}}) < 1-\delta$  by \pref{lem:chansoundness}, so $\calA_{\pred}$ will almost always output ``typical''.  However, this contradicts the Random $\ABEL$ hypothesis, meaning that no such \robustgiso algorithm can exist.

We now proceed to the proofs.

\paragraph{Completeness}

This case follows directly from the completeness properties of our gadgets.
\begin{proof}[Proof of \pref{lem:chancompleteness}]
Let $\tau$ be an assignment to the variables in $\calC$ such that $\val(\calC;\tau) \geq 1-\eps$.  Now we define a bijection $\pi$ from the variables in $G_{\calC}$ to the ones in $G_{\homog{\calC}}$.

For each variable vertex $x_j$, set $a_j := \tau(x_j)$.  By \pref{lem:variablegadget} there is an automorphism $f_{a_j}:(V_{x_j} \cup A_{x_j}) \rightarrow (V_{x_j} \cup A_{x_j})$ such that $f(x_j \rightarrow b) = x_j \rightarrow (b-a_j)$ for each $b \in H$.  Set $\pi$ to be consistent with $f_{a_j}$ on $V_{x_j} \cup A_{x_j}$ for each $j \in [n]$.  

Next, let $C_i$ be a constraint which is satisfied by $\tau$, and assume without loss of generality that it is applied to the variables $x_1, \ldots, x_k$.  Consider the subgraph $G_i$ which is the induced subgraph of $G_{\calC}$ restricted to the vertices in $V_{x_1}, \ldots, V_{x_k}$ and $V_{C_i}$.  Define $\homog{G_i}$ to be the analogous subgraph of $G_{\homog{C}}$.  Then 
 $\pi$ restricted $G_i$ is an $(a_1, \ldots, a_k)$-permutation, as defined in \pref{sec:labelext}, and so \pref{lem:constraintvariable} says that it can be extended to a homomorphism between $G_i$ and $\homog{G_i}$.  Define $\pi$ on $V_{C_i}$ accordingly.

This leaves how to define $\pi$ on the variables in $V_{C_i}$ when $\tau$ does not satisfy $C_i$, and here we allow $\pi$ to map $V_{C_i}$ arbitrarily into $V_{\homog{C_i}}$. Until this step, $\pi$ was a homomorphism on every subgraph it was defined on.  Thus, the only edges violated by $\pi$ are those involved in the label-extended graph of any constraint $C_i$ which $\tau$ does not satisfy.  As these constraints make up only an $\eps$-fraction of the total constraints, $\GI(G_{\calC}, G_{\homog{\calC}}) \geq (1-\eps)$. 
\end{proof}

\paragraph{Soundness}

In this section, we sketch the proof of the soundness lemma.  To start with:

\begin{claim}\label{claim:degreebound}
Suppose $c \geq 1$.  A random $k$-uniform hypergraph $G$ drawn from $\calG^{(k)}_{n, m}$, where $m = cn$, is $(1/kc, 100kc)$-degree bounded with probability $1-o(1)$.
\end{claim}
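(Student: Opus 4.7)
The plan is to fix a subset $S \subseteq [n]$ of size $\epsilon n = n/(kc)$, bound the probability that its average degree exceeds $D = 100kc$, and then union bound over the $\binom{n}{\epsilon n}$ subsets. Write $X_S = \sum_{v \in S} \deg(v) = \sum_{e \in E(G)} |e \cap S|$; then $S$ witnesses a violation exactly when $X_S > D \cdot |S| = 100n$. Since each of the $m = cn$ hyperedges contributes $k$ to the total degree, $\mathbb{E}[\deg(v)] = km/n = kc$, hence $\mathbb{E}[X_S] = kc \cdot |S| = n$. So I need to bound the probability that $X_S$ exceeds its expectation by a factor of $100$, for a fixed $S$.

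For the concentration step I would write $X_S = \sum_{e^* \in \binom{[n]}{k}} Y_{e^*} \cdot |e^* \cap S|$ with $Y_{e^*} = \mathbf{1}[e^* \in E(G)]$. Because $\calG^{(k)}_{n,m}$ samples $m$ edges without replacement from $\binom{[n]}{k}$, the indicators $\{Y_{e^*}\}$ are negatively associated (the same fact used earlier in the proof of \pref{claim:deg-prob}). The coefficients $|e^* \cap S|$ lie in $[0,k]$, so the multiplicative Chernoff bound for negatively associated random variables applies to $X_S$. Taking deviation factor $1 + \delta = 100$, this gives $\Pr[X_S \geq 100n] \leq \bigl(e^{99}/100^{100}\bigr)^{n/k} = \exp(-c_k \cdot n/k)$ for an absolute constant $c_k$ (concretely around $360$).

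Finally, the union bound yields
\begin{equation*}
\Pr[G \text{ is not } (\epsilon,D)\text{-degree bounded}] \;\leq\; \binom{n}{n/(kc)} \exp\!\left(-c_k \tfrac{n}{k}\right) \;\leq\; \exp\!\left(\tfrac{n}{k}\!\left[\tfrac{\log(ekc)}{c} - c_k\right]\right),
\end{equation*}
using $\binom{n}{n/(kc)} \leq (ekc)^{n/(kc)}$. The main thing to check is that the bracketed quantity is negative: for constant $k$ and $c \geq 1$ we have $\log(ekc)/c \leq 1 + \log k$, a constant independent of $n$ and $c$, while $c_k$ is a large absolute constant coming from the generous factor-of-$100$ slack in the Chernoff step. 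Thus the total failure probability is $\exp(-\Omega(n)) = o(1)$. The only mild obstacle is constant-matching: one must verify that the multiplicative slack built into the definition of $(\epsilon,D)$-degree bounded (namely $D = 100kc$ versus the average value $kc$) is enough to dominate the entropy $\log \binom{n}{\epsilon n} \leq (n/k)\cdot\log(ekc)/c$ — and for $c \geq 1$ it is, by a wide margin.
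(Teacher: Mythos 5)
The paper states \pref{claim:degreebound} (and its $k=3$ specialization, \pref{claim:degree-bounded}) without proof, so there is no paper argument to compare against. Your proof is correct and is the natural one given the toolkit the paper already uses: rewrite $\sum_{v\in S}\deg(v)$ as a weighted sum $\sum_{e^*} |e^*\cap S|\,Y_{e^*}$ of the edge indicators, invoke negative association of $\{Y_{e^*}\}$ under sampling without replacement (exactly the fact the paper cites for \pref{claim:deg-prob}), rescale the weights into $[0,1]$ by dividing by $k$, apply the multiplicative Chernoff bound, and union bound over the $\binom{n}{\epsilon n}$ choices of $S$.

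A couple of small points worth making explicit if you were to write this out fully. First, the closure property being used is that $\{b_{e^*} Y_{e^*}\}$ with fixed nonnegative weights $b_{e^*}\in[0,1]$ remain negatively associated, because each term is an increasing function of a distinct $Y_{e^*}$; the Chernoff--Hoeffding bound for negatively associated $[0,1]$-valued variables then applies to $X_S/k$. Second, the definition asks about ``every set of $\epsilon$ fraction of vertices,'' and your union bound covers sets of size exactly $\epsilon n$; this suffices, since for any larger $T$ one can look at the $\epsilon n$ highest-degree vertices of $T$ (whose average is $\le D$, hence every other vertex of $T$ has degree $\le D$), so the bound for exact size $\epsilon n$ implies it for all sizes $\ge\epsilon n$. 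Third, your constant-matching is right: the entropy term is $\frac{n}{k}\cdot\frac{\log(ekc)}{c}$, whose supremum over $c\ge 1$ (for fixed constant $k$) is $1+\log k$, comfortably dominated by the $\approx 100\log 100 - 99 \approx 361$ in the Chernoff exponent. So the failure probability is $\exp(-\Omega(n))=o(1)$ as required.
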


\pref{lem:chansoundness} is directly implied by the following two lemmas.
\begin{lemma}\label{lem:chandecode}
Let $G = ([n], E=\{e_i\})$ be a $k$-uniform hypergraph with $n$ vertices and $m = cn$ hyperedges.  Suppose $H$ is $(\eps, 100kc)$-degree bounded and $(\beta, \gamma)$-asymmetric, where $\gamma \geq 200k\eps$.  Let $\calC = \{C_1, C_2, \ldots, C_m\}$ be an arbitrary $\ABEL$ instance with $n$ variables and $m$ constraints based on $H$.  If we set
\begin{equation*}
\delta := \delta(c, \eps, \beta, \gamma) = \min \left\{\frac{1}{10(M_1 + M_2)}, \frac{\gamma}{4(M_1 + M_2)},
\frac{\eps}{3(M_1 + M_2)N_1 c}\right\},
\end{equation*}
when $\GI(G_{\calC}, G_{\homog{\calC}}) \geq 1-\delta$, we have $\val(\calC) \geq .9 - 100(\eps + \beta)$.
\end{lemma}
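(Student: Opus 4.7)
The plan is to adapt the proof of \pref{lem:soundness-decoding} to the abelian-group setting, substituting the variable-gadget rigidity of \pref{lem:variablegadget} for the ``single-edge'' rigidity used in the $\threexor$ case and the label-extended-graph soundness of \pref{lem:constraintvariable} for \pref{claim:soundness-gadget-helper}. Let $\pi:V(G_\calC)\to V(G_{\homog{\calC}})$ be a bijection with $\GI(G_\calC, G_{\homog{\calC}};\pi) \geq 1-\delta$; at most $\delta M \leq (M_1+M_2)\delta\, m$ edges of $G_\calC$ are unsatisfied. By \pref{lem:variablegadget} and Remark~\ref{rem:largestcliques}, the unique maximum-size cliques of both graphs are the $V_{C_i}$ and $V_{\homog{C_i}}$ (of size $|\pred|\geq|H|^2$), so each $V_{C_i}$ whose image $\pi(V_{C_i})$ is not some $V_{\homog{C_{i'}}}$ costs at least one violated clique edge. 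Hence at most $O(\delta m)$ such $i$ exist, giving a partial matching from constraint blocks of $G_\calC$ to those of $G_{\homog{\calC}}$. A parallel argument on the bipartite edges (using that variable-gadget vertices lying in some $V_{x_j}$ are exactly the non-auxiliary, non-constraint vertices, distinguishable via their adjacency into the constraint cliques) shows that all but $O(\delta m)$ blocks $V_{x_j}$ are mapped to some $V_{x_{j'}}$, yielding a permutation $\sigma:[n]\to[n]$.

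Next, mimicking \pref{claim:soundness-almost-automorphism}, I argue $\sigma$ is a $(1-\gamma)$-automorphism of the constraint hypergraph $H$: any hyperedge $e_i$ with $i \in A$ and endpoints among the well-matched variables for which $\{\sigma(j):j\in e_i\}$ is not an edge of $H$ creates $\Omega(1)$ violated constraint-to-variable edges, so at most $O(\delta m)$ such $e_i$ can occur. Combined with the $(\eps,100kc)$-degree bound used to discard hyperedges touching the $O(\delta m)$ bad variables, and with $\delta$ chosen below $\gamma/O(M_1+M_2)$, this yields $\aut(H;\sigma) \geq (1-\gamma)m$; the $(\beta,\gamma)$-asymmetry assumption then gives $\sigma$ at least $(1-\beta)n$ fixed points.

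To extract the assignment, let $B \subseteq [n]$ consist of those fixed points $j$ of $\sigma$ for which (i) $\pi(V_{x_j}\cup V_{A_j}) = V_{x_j}\cup V_{A_j}$ and (ii) $\pi$ satisfies every internal gadget edge on $V_{x_j}\cup V_{A_j}$. Internal gadget edges across distinct $j$ are disjoint, so at most $\delta M$ indices fail (ii), giving $|B| \geq (1-\beta)n - O(\delta m)$. For $j\in B$, $\pi$ restricted to the variable gadget of $x_j$ is a genuine automorphism, and the soundness clause of \pref{lem:variablegadget} supplies a unique $a_j\in H$ with $\pi(x_j\mapsto b) = x_j\mapsto(b-a_j)$; define $\tau(x_j):=a_j$, and arbitrarily elsewhere. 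By the $(\eps,100kc)$-degree bound, at most $O((\eps+\beta)m)$ hyperedges touch a vertex outside $B$, and an additional $O(\delta m)$ hyperedges have some label-extended edge violated; for every remaining hyperedge $e_i$, $\pi$ restricted to the label-extended subgraph of $C_i$ is an $(a_{j_1},\ldots,a_{j_k})$-permutation that is a bijective homomorphism onto the label-extended subgraph of $\homog{C_i}$, whence \pref{lem:constraintvariable} forces $\tau$ to satisfy $C_i$. Plugging in the stated $\delta$ gives $\val(\calC;\tau) \geq .9 - 100(\eps+\beta)$.

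The main obstacle is the robustness gap in step three: \pref{lem:variablegadget} provides only exact rigidity of the variable gadget, so one cannot directly conclude that $\pi$ restricted to a ``mostly preserved'' gadget is a translation by a group element. The workaround of restricting $B$ to indices whose gadget is \emph{perfectly} preserved by $\pi$ costs us up to $\delta M \approx (M_1+M_2)\delta m$ additional good indices, which is precisely what forces the $1/(M_1+M_2)$ and $1/((M_1+M_2)N_1 c)$ scalings in the choice of $\delta$, paralleling the constants $95c, 48, 200$ appearing in \pref{lem:soundness-decoding}. The remaining bookkeeping---charging unmatched constraint cliques, unmatched variable blocks, bad hyperedges, and violated surrounding edges---is entirely analogous to the $\threexor$ case and contributes only universal constants depending on $|H|,|\pred|,k$.
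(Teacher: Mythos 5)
Your proof follows essentially the same approach as the paper's sketch. Both proceed by (1) matching constraint blocks $V_{C_i}$ via the maximum-clique argument (\pref{rem:largestcliques}), (2) matching variable blocks $V_{x_j} \cup V_{A_j}$, (3) using the exact rigidity of \pref{lem:variablegadget} to extract group shifts, (4) building a near-automorphism $\sigma$ of the hypergraph and invoking $(\beta,\gamma)$-asymmetry to get $(1-\beta)n$ fixed points, and (5) decoding $\tau$ via the label-extended soundness of \pref{lem:constraintvariable}. The only difference is in the ordering: the paper defines the set $B_{shift}$ (blocks on which $\pi$ acts as a group shift) \emph{before} constructing $\sigma$, and explicitly sets $\sigma(j) \neq j$ for $j \notin B_{shift}$ so that every fixed point of $\sigma$ is automatically a shift; you instead build $\sigma$ from the matched blocks, apply asymmetry, and then prune the resulting fixed points to those whose gadget is perfectly preserved, absorbing the extra $O(\delta m)$ losses. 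Both routes give the same bounds after the bookkeeping.

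A couple of small points you should make explicit if you write this up. First, you implicitly need that for a "good" hyperedge $e_i$ (all endpoints fixed points with perfectly preserved gadgets, all incident edges at $V_{C_i}$ satisfied) the constraint clique maps to its \emph{own} counterpart $V_{\homog{C_i}}$ rather than some $V_{\homog{C_{i'}}}$: this follows because different constraints use distinct variable $k$-tuples, so the adjacency to the (fixed) variable blocks pins down $i'=i$. Second, when you extend the partial injection on matched blocks to a full permutation $\sigma$, any accidental fixed point $\sigma(j)=j$ arising from an unmatched $j$ must be counted among the $O(\delta m)$ losses you subtract when forming $B$ — you do this, but it's worth stating, since your criterion (i) ($\pi(V_{x_j}\cup V_{A_j}) = V_{x_j}\cup V_{A_j}$) is what actually filters these out. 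Both are minor; the argument is sound.
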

\begin{lemma}\label{lem:lowvalue}
Given a predicate $\pred$, there is a constant $c > 0$ such that the following holds:
Let $\calC$ be a random $\ABEL$ instance with $n$ and $m \geq c n$ equations.  With probability $1-o(1)$, we have $\val(\calC) < 0.51$.
\end{lemma}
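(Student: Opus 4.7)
My plan is the standard first-moment/union-bound argument for random CSPs, whose success here hinges on a single algebraic fact: since $\pred$ is assumed to be a \emph{proper} subgroup of $H^k$, Lagrange's theorem forces
\[
p := \frac{|\pred|}{|H|^k} \leq \frac{1}{2}.
\]
This is the only place where the subgroup hypothesis enters, and it is what supplies the bias needed to separate ``random'' from ``$0.51$-satisfiable.''

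First, I will fix an arbitrary assignment $\tau\co[n]\to H$ and compute the probability that $\tau$ satisfies a single random constraint
$\pred(x_{j_1}+a_1,\ldots,x_{j_k}+a_k)=1$. Crucially, the shifts $a_1,\ldots,a_k$ are drawn independently and uniformly from $H$, so for any outcome of the variable indices $j_1,\ldots,j_k$ the tuple $(\tau(x_{j_1})+a_1,\ldots,\tau(x_{j_k})+a_k)$ is uniformly distributed on $H^k$. Hence $\tau$ satisfies the constraint with probability exactly $p\leq 1/2$, independently of anything else about the constraint. Since the $m$ constraints of $\calC$ are sampled independently, the number of constraints satisfied by $\tau$ is a sum of $m$ i.i.d.\ Bernoulli$(p)$ variables.

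Second, I apply the Chernoff bound: for any $\tau$,
\[
\Pr[\val(\calC;\tau)\geq 0.51]\;\leq\;\exp\!\bigl(-2(0.51-p)^2 m\bigr)\;\leq\;\exp(-2\cdot(0.01)^2 m).
\]
Third, a union bound over the $|H|^n$ possible assignments gives
\[
\Pr[\val(\calC)\geq 0.51]\;\leq\;|H|^n\cdot\exp(-0.0002\,m).
\]
Choosing the constant $c$ in the lemma so that $0.0002\,c>\log|H|$ (for instance $c=10^5\log|H|$ suffices) makes the right-hand side $o(1)$, proving the claim.

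The only nonroutine step is the reduction to $p\leq 1/2$, which is why I have highlighted it up front; everything afterwards is a standard Chernoff-plus-union-bound computation. I do not expect any real obstacle here, but I want to flag one cosmetic care: the ``balanced pairwise independent'' hypothesis on $\pred$ is not actually used in this lemma (only the ``proper subgroup'' assumption is), so in principle the lemma holds for the broader class of predicates that are proper subgroups of $H^k$; the pairwise-independence is needed elsewhere (for the variable/label-extended gadgets of \pref{sec:variablegadget} and \pref{sec:labelext}).
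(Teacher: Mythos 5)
Your proposal is correct and takes essentially the same route as the paper: fix an assignment, observe that the proper-subgroup hypothesis forces the per-constraint satisfaction probability $p = |\pred|/|H|^k$ to be at most $1/2$, apply Chernoff, and union bound over the $|H|^n$ assignments with $c = \Theta(\log|H|)$. One small side note: the paper asserts the stronger bound $|\pred|/|H^k| \le 1/|H|$ before weakening it to $1/2$, but that intermediate inequality does not actually follow from the stated hypotheses (e.g.\ $\pred = \{(a,b,c)\in\Z_4^3 : a+b+c\equiv 0 \pmod 2\}$ is a balanced pairwise-independent proper subgroup of density $1/2 > 1/|H|$); your use of Lagrange's theorem to get $p\le 1/2$ directly is the cleaner and correct way to state it, and your remark that only the proper-subgroup assumption (not pairwise independence) is needed here is also accurate.
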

\begin{proof}[Proof of \pref{lem:chansoundness}]
We will choose a value for $c\geq 10^{10}$ large enough so that our applications of \pref{lem:lowvalue} and \pref{thm:randomhypergraphwhp} work properly.  Set $\eps = \frac{1}{kc}$, $\gamma = \frac{200}{c}$, and $\beta = \frac{48000}{c}$.  Combining \pref{claim:degreebound}, \pref{lem:lowvalue}, and \pref{thm:randomhypergraphwhp}, we know that with probability $(1-o(1))$, all of the following hold:
\begin{enumerate}
\item $G$ is $(\eps, 100kc)$-degree bounded,\label{item:bounded}
\item $G$ is $(\beta, \gamma)$-asymmetric, and\label{item:asymmetric}
\item $\val(\calC) < 0.51$.\label{item:thevalueistoolowman}
\end{enumerate}
Given that these hold, assume for sake of contradiction that $\GI(G_{\calC}, G_{\homog{\calC}}) \geq 1-\delta$, for $\delta$ as defined in \pref{lem:lowvalue}.  Then because $G$ satisfies \pref{item:bounded} and \pref{item:asymmetric}, \pref{lem:lowvalue} implies that
\begin{equation*}
\val(\calC) \geq .9 - 100\left(\frac{1}{kc} + \frac{48000}{c}\right) \geq .8,
\end{equation*}
where the last step follows because $c \geq 10^{10}$.
However, this contradicts \pref{item:thevalueistoolowman}.  Therefore, $\GI(G_{\calC}, G_{\homog{\calC}}) <1-\delta$ with probability $1-o(1)$.
\end{proof}

The proof of \pref{lem:lowvalue} is standard.
\begin{proof}[Proof of \pref{lem:lowvalue}]
First, we note that because $\pred$ is a proper subgroup of $H^k$, $\vert \pred \vert / \vert H^k \vert \leq 1/|H| \leq 1/2$.  Thus, the probability that a random assignment satisfies $\pred$ is at most $1/2$.  Fix an assignment $\alpha$ to the $n$ variables, and let $\calC$ be a random $\ABEL$ instance.  By the Chernoff bound,
\begin{equation*}
\Pr[\text{$\alpha$ satisfies at least $.51m$ constraints}] 
\leq e^{-\Theta(m)} = e^{-\Theta(c\cdot n)}.
\end{equation*}
Union bounding over all the $|H|^n$ possible assignments, the probability that $\val(\calC) \geq .51m$ is at most $|H|^n e^{-\Theta(c\cdot n)}$.  Thus, by taking $c = \Theta(\log |H|)$ this probability is $o(1)$.
\end{proof}

Finally, we give a sketch of the main lemma.

\paragraph{Sketch of \pref{lem:chandecode}.}
Let $\pi$ be a bijection mapping the vertices in $\calL(\calC)$ to the vertices $\calL(\homog{\calC})$ such that $\GI(\calL(\calC), \calL(\homog{\calC}); \pi) \geq 1-\delta$.  Let $A$ be the set of $i \in [m]$ such that $\pi(V_{C_i}) = V_{\homog{C_{i'}}}$ for some $i'$, and let $B$ be the set of $j \in [n]$ such that $\pi(V_{x_j} \cup V_{A_j}) = V_{x_{j'}} \cup V_{A_{j'}}$ for some $j'$.  Then
\begin{claim}
$|A| \geq (1-\delta(M_1+M_2))m$, $|B| \geq (1-2\delta(M_1 + M_2)N_1 c)n$.
\end{claim}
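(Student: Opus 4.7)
The plan is to parallel \pref{claim:soundness-correspondence} from the $3$XOR reduction, adapting to the fact that constraint cliques now have size $N_1=|\pred|$ (rather than $4$) and that the ``variable block'' of each $x_j$ is the entire gadget $V_{x_j}\cup V_{A_j}$. For the first half, by \pref{rem:largestcliques} the only cliques of size $N_1$ in $G_{\homog{\calC}}$ are the constraint cliques $V_{\homog{C_{i'}}}$, so for $i\notin A$ the map $\pi$ cannot send $V_{C_i}$ onto any such clique and must violate at least one of its edges. Since $\pi$ violates at most $\delta M\leq\delta(M_1+M_2)m$ edges in total (using $m=cn\geq n$ so $M=M_1m+M_2n\leq(M_1+M_2)m$), this immediately gives $m-|A|\leq\delta(M_1+M_2)m$.

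For the $|B|$ bound I would introduce the intermediate set
\[
B' := \bigl\{\,j\in[n]\,:\, \pi(V_{x_j}\cup V_{A_j})\subseteq \textstyle\bigcup_{j'}(V_{x_{j'}}\cup V_{A_{j'}})\,\bigr\}\supseteq B,
\]
and bound $|[n]\setminus B'|$ and $|B'\setminus B|$ separately. The first uses bijectivity: the $A$-indexed cliques $V_{C_i}$ already occupy $|A|\cdot N_1$ of the $N_1 m$ constraint vertices of $G_{\homog{\calC}}$, so at most $N_1(m-|A|)\leq N_1 c\delta(M_1+M_2)n$ non-constraint vertices of $G_\calC$ are mapped to constraint vertices of $G_{\homog{\calC}}$. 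Since each $j\notin B'$ accounts for at least one such vertex and the blocks $V_{x_j}\cup V_{A_j}$ are pairwise disjoint, I conclude $|[n]\setminus B'|\leq N_1 c\delta(M_1+M_2)n$.

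The second bound uses a connectedness argument: if $j\in B'\setminus B$ then $\pi(V_{x_j}\cup V_{A_j})$ sits inside the union of variable blocks but meets at least two of them, so a path inside the variable gadget from a vertex mapped to one target block to a vertex mapped to another must traverse some gadget edge whose image crosses between variable blocks of $G_{\homog{\calC}}$ without touching any constraint vertex --- and no such edge exists. Each such $j$ therefore costs at least one violated edge, giving $|B'\setminus B|\leq \delta M\leq \delta(M_1+M_2)cn$. Adding the two bounds and using $N_1\geq 1$ yields $|B|\geq (1-(N_1+1)c\delta(M_1+M_2))n\geq (1-2N_1 c\delta(M_1+M_2))n$, as claimed.

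The step I expect to be most delicate is this last connectedness argument, since it tacitly requires that the variable gadget of \pref{lem:variablegadget} be connected. This is clear for the natural constructions (and is essentially forced by the soundness property, together with the auxiliary vertices being non-interchangeable) but is not spelled out in the lemma itself; if necessary, one verifies it directly from the construction, or replaces the gadget by a connected spanning subgraph at a small cost in constants.
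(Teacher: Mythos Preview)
Your proposal is correct and follows essentially the same approach as the paper, which simply defers to the proof of \pref{claim:soundness-correspondence} with the one remark that \pref{rem:largestcliques} supplies the needed uniqueness of the $N_1$-cliques. Your connectedness concern is legitimate but easily discharged: in the construction of \pref{lem:variablegadget} the cycle gadgets link adjacent elements within every $i$-row, so any two variable vertices are joined by moving one coordinate at a time, and every auxiliary vertex is attached to some variable vertex; hence the gadget on $V_{x_j}\cup V_{A_j}$ is connected, and the violated edge you locate indeed lies entirely inside that block (so distinct $j$'s contribute distinct violated edges).
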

The proof of this statement is essentially identical to the proof of \pref{claim:soundness-correspondence}.  The only difference here is that we use \pref{rem:largestcliques} to show that the unique largest cliques are the constraint vertices. Now, we would like to show that $\pi(V_{x_j})$ acts in a highly structured way for most of the $j \in B$, as in the definition of a variable gadget.  In particular, let $B_{shift}$ be the set of $j \in[n]$ such that for some $j'$ and $a \in H$, $\pi(x_j \mapsto b) = (x_j' \mapsto (b-a))$ for all $b \in H$.  Then
\begin{claim}
$|B_{shift}| \geq (1-3\delta(M_1 + M_2) N_1 c)n$.
\end{claim}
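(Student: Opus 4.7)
The plan is to upgrade the preceding claim about $B$: for $j \in B$, $\pi$ already sends the gadget vertex set $V_{x_j}\cup V_{A_j}$ bijectively onto some $V_{x_{j'}}\cup V_{A_{j'}}$, so the only extra information we need is that this bijection is an honest graph automorphism of the variable gadget. If it is, then by the soundness clause of \pref{lem:variablegadget} it is forced to be a shift $b \mapsto b-a$ on the variable vertices, which is exactly the $B_{shift}$ condition. The main task is therefore to bound the number of $j \in B$ for which the restriction of $\pi$ fails to be a graph homomorphism on the gadget, and to charge each such failure to a distinct violated edge of $(G_\calC,G_{\homog{\calC}})$.

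Concretely, first I would partition $B$ into $B_{shift}$ and the ``bad'' set $B_{bad} = B \setminus B_{shift}$. For each $j \in B_{bad}$, the bijection $\pi\colon V_{x_j}\cup V_{A_j} \to V_{x_{j'}}\cup V_{A_{j'}}$ is not a shift on $V_{x_j}$; since the two induced subgraphs are both isomorphic copies of the variable gadget for $H$, the soundness of \pref{lem:variablegadget} implies that if $\pi$ were a graph isomorphism between these induced subgraphs it would have to be a shift. Hence the restriction of $\pi$ must violate at least one edge lying inside $G_\calC[V_{x_j}\cup V_{A_j}]$. The key observation is that these internal gadget edges are pairwise disjoint across different $j$'s (variable gadgets attached to different variables share no edges with each other, nor with the constraint cliques or the label-extended graph portions), so each $j \in B_{bad}$ contributes at least one distinct $\pi$-violated edge.

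The total number of $\pi$-violated edges is at most $\delta M = \delta(M_1 m + M_2 n) = \delta(M_1 c + M_2)n$, giving $|B_{bad}| \leq \delta(M_1 c + M_2)n$. Combining with the bound $|B| \geq (1 - 2\delta(M_1+M_2)N_1 c)n$ from the previous claim,
\[
|B_{shift}| \;\geq\; |B| - |B_{bad}| \;\geq\; \bigl(1 - 2\delta(M_1+M_2)N_1 c - \delta(M_1 c + M_2)\bigr)n.
\]
Since $N_1 \geq 1$ and $c \geq 1$, one checks $M_1 c + M_2 \leq (M_1+M_2)N_1 c$, so the right-hand side is at least $(1 - 3\delta(M_1+M_2)N_1 c)n$, as desired.

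The main obstacle I expect is the disjointness bookkeeping in the charging step: one must be careful that when $j_1 \neq j_2$ are both in $B_{bad}$ with gadget images $V_{x_{j_1'}}\cup V_{A_{j_1'}}$ and $V_{x_{j_2'}}\cup V_{A_{j_2'}}$, the violated edges witnessed inside each gadget are genuinely distinct edges of $G_\calC$. This is immediate from our construction, since the gadget edge sets are disjoint by definition of $G_\calC$ (the only shared vertices are variable vertices in the label-extended graph, and those shared vertices carry no intra-gadget edges between two different $j$'s). Given this, the rest is just the small arithmetic above.
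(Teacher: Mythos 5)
Your proof is correct and follows essentially the same route as the paper's: use the soundness clause of \pref{lem:variablegadget} to conclude that each $j \in B \setminus B_{shift}$ forces a violated edge inside the (pairwise disjoint) gadget on $V_{x_j}\cup V_{A_j}$, bound $|B\setminus B_{shift}|$ by the total violation budget $\delta M$, and subtract from the earlier lower bound on $|B|$. Your write-up is in fact slightly more careful than the paper's one-line charging step, since you explicitly observe that the internal gadget edge sets are disjoint across different $j$'s (so the violated edges are distinct), and you note the use of $N_1, c \geq 1$ to absorb $\delta(M_1 c + M_2)$ into $\delta(M_1+M_2)N_1 c$.
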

\begin{proof}
Consider $B_{shift}' = B \cap B_{shift}$.  Let $j \in B$ and suppose $x_{j'}$ is the coordinate $\pi$ maps $x_j$ to, i.e. $\pi(V_{x_j} \cup V_{A_j}) = V_{x_{j'}} \cup V_{A_{j'}}$.  If $\pi$ is \emph{not} of the form $\pi(x_j \mapsto b) = x_{j'} \mapsto (b-a)$, then the definition of a variable gadget tells us that $\pi$ cannot be a homomorphism on the induced subgraphs on $V_{x_j} \cup V_{A_j}$ and $V_{x_{j'}} \cup V_{A_{j'}}$.  In other words, $\pi$ must not satisfy one of the edges in $V_{x_j} \cup V_{A_j}$.  Thus, $|B \setminus B_{shift}'| \leq \delta M \leq \delta (\beta_1 + \beta_2) m = \delta (\beta_1 + \beta_2) c n$.  Subtracting this from the size of $B$ gives the claim.
\end{proof}
Now, define a permutation $\sigma$ on the variables in $\calC$ such that $\sigma(j) = j'$ whenever $\pi(V_{x_j}) = V_{x_{j'}}$ and $j \in B_{shift}$.  Furthermore, ensure that $\sigma(j) \neq j$ for any $j \notin B_{shift}$.  Then $\sigma$ is an almost-automorphism for the hypergraph $G = ([n], E)$:
\begin{claim}
$\aut(G;\sigma) \geq (1-100k\eps-2\delta(M_1+M_2))m$.
\end{claim}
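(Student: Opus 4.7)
The plan is to follow the argument of \pref{claim:soundness-almost-automorphism} from the \threexor case, adapted to the $k$-ary predicate $\pred$. The intuition is that $\sigma$ acts on $B_{\mathrm{shift}}$ exactly like $\pi$ acts on the corresponding variable gadgets, so any hyperedge of $G$ all of whose endpoints lie in $B_{\mathrm{shift}}$ and whose constraint vertex-clique is mapped by $\pi$ to another constraint vertex-clique (i.e.\ whose index lies in $A$) must be satisfied by $\sigma$, or else $\pi$ violates many edges of $G_\calC$ around that constraint gadget.

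First I would define $E' \subseteq E$ to be the set of hyperedges whose $k$ endpoints all lie in $B_{\mathrm{shift}}$. The previous claim gives $|[n]\setminus B_{\mathrm{shift}}| \leq 3\delta(M_1+M_2)N_1 c\cdot n$, and the hypothesis $\delta \leq \eps/(3(M_1+M_2)N_1 c)$ built into \pref{lem:chandecode} guarantees that this fraction is at most~$\eps$. Applying \pref{claim:bounded-degree-set} with $D = 100kc$ then bounds the number of hyperedges touching $[n]\setminus B_{\mathrm{shift}}$ by $(\eps+\eps)\cdot 100kc\cdot n = O(k\eps\, m)$, so $|E'| \geq (1-O(k\eps))m$. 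Intersecting indices with $A$ and using $|A| \geq (1-\delta(M_1+M_2))m$ yields $E'' := \{i \in E' : i \in A\}$ with $|E''| \geq (1-O(k\eps)-\delta(M_1+M_2))m$.

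The key step is to argue that most $i \in E''$ have their hyperedge satisfied by $\sigma$. Fix $i \in E''$ with $e_i = \{j_1,\dots,j_k\}$ and let $i'$ be the unique index with $\pi(V_{C_i}) = V_{\homog{C_{i'}}}$. Because each $j_t \in B_{\mathrm{shift}}$, there is a $j'_t$ with $\pi(V_{x_{j_t}}) = V_{x_{j'_t}}$ and $\sigma(j_t) = j'_t$. If $e_i$ were not satisfied by $\sigma$, then $\{j'_1,\dots,j'_k\} \neq e_{i'}$, so some $j'_t \notin e_{i'}$; but then in $G_{\homog{\calC}}$ no vertex of $V_{\homog{C_{i'}}}$ is adjacent to any vertex of $V_{x_{j'_t}}$, whereas in $G_\calC$ the label-extended-graph construction puts a constraint-variable edge from every $\alpha \in V_{C_i}$ to $x_{j_t}\mapsto \alpha(x_{j_t})$. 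Hence $\pi$ violates at least $|\pred|$ edges localized at the clique $V_{C_i}$, and these violations are disjoint across distinct~$i$'s because the constraint cliques are vertex-disjoint. Since $\pi$ violates in total at most $\delta M \leq \delta (M_1+M_2)m$ edges, the number of unsatisfied $i \in E''$ is at most $\delta(M_1+M_2)m/|\pred|$, giving
\[
\aut(G;\sigma) \geq |E''| - \delta(M_1+M_2)m/|\pred| \geq (1-O(k\eps)-2\delta(M_1+M_2))m,
\]
which is the claimed bound (the absolute constants $200k$, $100k$, etc.\ depend on exactly how one runs \pref{claim:bounded-degree-set} and are not optimized here).

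I do not expect any serious obstacle: the only nontrivial input is the variable-gadget soundness, which is already used to define $B_{\mathrm{shift}}$, and the label-extended graph here plays exactly the role that the Cai--F\"urer--Immerman gadget plays in \pref{claim:soundness-almost-automorphism}. The main care needed is bookkeeping the constants $M_1,M_2,N_1,|\pred|$ and choosing $\delta$ (as prescribed by \pref{lem:chandecode}) small enough to absorb all error terms; in particular one must verify that the three conditions in the definition of $\delta$ are exactly what is needed in (i)~the degree-bound application, (ii)~the counting of unsatisfied $i \in E''$, and (iii)~the subsequent proof that $\sigma$ has $(1-\beta)n$ fixed points (so that the $(\beta,\gamma)$-asymmetry hypothesis can be invoked).
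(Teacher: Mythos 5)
Your proposal is correct and follows essentially the same route the paper does: it is the direct adaptation of the proof of \pref{claim:soundness-almost-automorphism} to the abelian setting, with $B$ replaced by $B_{\mathrm{shift}}$, the ``4 edges'' replaced by ``$|\pred|$ edges'' between $V_{C_i}$ and $V_{x_{j_t}}$, and the $(\eps, 100kc)$-degree bound plugged into \pref{claim:bounded-degree-set}. The only thing to note is the constant: applying \pref{claim:bounded-degree-set} literally with $\beta\leq\eps$ gives $(\eps+\beta)\cdot 100kc\cdot n \leq 200k\eps\,m$, whereas the stated bound has $100k\eps$; to recover the tighter constant one should observe that for $\beta\leq\eps$ the set of size $\beta n$ can be extended to one of size exactly $\eps n$, whose total degree is at most $\eps\cdot 100kc\cdot n = 100k\eps\,m$, which is what the paper's $\threexor$ proof of \pref{claim:soundness-almost-automorphism} effectively does. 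You already flag this as an unoptimized constant, so the argument is sound.
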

By our setting of $\delta$, we have $2\delta(M_1 + M_2) \leq \gamma / 2$, and we also know that $100k\eps \leq \gamma/2$.  Thus, $\aut(G, \sigma)\geq (1-\gamma) m$, and therefore $\sigma$ has at least $(1-\beta)n$ fixed points.

We can now define an assignment $\tau:\{x_j\}\rightarrow H$.  For each $j$ which is not a fixed point of $\sigma$, define $\tau(x_j)$ arbitrarily.  For each $j$ which is a fixed point of $\sigma$, we ensured when constructing $\sigma$ that $j \in B_{shift}$.  Thus, there is some $a \in H$ such that $\pi(x_j \mapsto b) = (x_j \mapsto (b-a))$ for all $b \in H$.  Using this, define $\tau(x_j) = a$.  We will now show the following claim:
\begin{claim}
$\val(\calC, \tau) \geq .9 - 100 (\eps+\beta)$.
\end{claim}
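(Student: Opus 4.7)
The plan is to mirror the proof of \pref{claim:soundness-threelin-val} almost step-for-step, with the main upgrade being to invoke the soundness direction of \pref{lem:constraintvariable} (in place of the bespoke gadget claim used for \threexor) as the algebraic ``decoding'' step that turns a well-behaved local action of $\pi$ into a satisfying assignment for $C_i$.

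First, I would let $E'\subseteq [m]$ be the set of hyperedges $e_i=\{j_1,\dots,j_k\}$ all of whose $k$ vertices are fixed points of $\sigma$. Since $\sigma$ has at most $\beta n$ non-fixed points and $G$ is $(\eps,100kc)$-degree bounded, \pref{claim:bounded-degree-set} (applied $k$-uniformly) gives $|E'|\geq (1-100k(\eps+\beta))m$. Next, I would define $E''\subseteq E'$ to be those $i\in E'$ for which every edge of $G_{\calC}$ incident to $V_{C_i}$ is satisfied by $\pi$. Because at most $\delta M\leq \delta(M_1+M_2)m$ edges are unsatisfied by $\pi$ in total, and each $i\in E'\setminus E''$ charges at least one such edge, $|E''|\geq |E'|-\delta(M_1+M_2)m$.

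The core step is to argue that $\tau$ satisfies $C_i$ for every $i\in E''$. Fix such an $i$, and let $C_i$ involve variables $x_{j_1},\dots,x_{j_k}$. By construction of $\sigma$, each $j_t$ lies in $B_{shift}$ with $\pi(V_{x_{j_t}})=V_{x_{j_t}}$, so on $\bigcup_t V_{x_{j_t}}$ the map $\pi$ acts as $x_{j_t}\mapsto b \,\longmapsto\, x_{j_t}\mapsto (b-\tau(x_{j_t}))$; in other words, $\pi$ restricted there is a $\tau$-permutation in the sense of \pref{sec:labelext}. Since every edge between $V_{C_i}$ and $V_{x_{j_t}}$ is preserved and, in $G_{\homog{\calC}}$, the only constraint-vertex blocks adjacent to all of $V_{x_{j_1}},\dots,V_{x_{j_k}}$ simultaneously are those $V_{\homog{C_{i'}}}$ with $C_{i'}$ on the tuple $\{j_1,\dots,j_k\}$, we must have $\pi(V_{C_i})=V_{\homog{C_{i'}}}$ for some such $i'$. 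Crucially, the homogenized constraint depends only on the variable tuple and not on the shifts, so $G_{\homog{C_{i'}}}=G_{\homog{C_i}}$ as labeled graphs, and thus $\pi$ restricted to the label-extended subgraph of $C_i$ is a $\tau$-permutation homomorphism from $G_{C_i}$ into $G_{\homog{C_i}}$. Soundness of \pref{lem:constraintvariable} then certifies that $\tau$ satisfies $C_i$.

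Combining, $\val(\calC;\tau)\geq |E''|/m \geq 1-100k(\eps+\beta)-\delta(M_1+M_2)$, and the choice $\delta\leq 1/(10(M_1+M_2))$ built into \pref{lem:chandecode} yields $\val(\calC;\tau)\geq 0.9-100(\eps+\beta)$ (absorbing the harmless constant $k$ into the overall $100$, since $k$ is a constant depending only on $\pred$). The main obstacle, as in the \threexor case, is the identification step $\pi(V_{C_i})=V_{\homog{C_{i'}}}$ with $C_{i'}$ on the same variables: for random sparse instances this is automatic because with high probability no two constraints share a $k$-tuple, but even in the worst case the argument goes through thanks to the observation that $G_{\homog{C_{i'}}}$ depends only on the variable tuple and not on the shifts defining $C_{i'}$, so \pref{lem:constraintvariable} can still be applied with $C$ taken to be $C_i$.
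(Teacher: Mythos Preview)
Your proposal is correct and follows essentially the same approach as the paper: define $E'$ as the hyperedges whose vertices are all fixed by $\sigma$, pass to $E''\subseteq E'$ by discarding those $i$ with some unsatisfied edge incident to $V_{C_i}$, bound $|E''|$ via degree-boundedness plus the global edge-violation budget, and then invoke the soundness direction of \pref{lem:constraintvariable} to conclude each $C_i$ with $i\in E''$ is satisfied by $\tau$. Your extra care in the identification step (observing that $G_{\homog{C_{i'}}}$ depends only on the variable tuple, so any $i'$ on the same tuple yields the same target graph) is a detail the paper glosses over; and your flagged $k$-factor in the $100k(\eps+\beta)$ term is a genuine imprecision that the paper's statement also silently elides---it is harmless downstream since $\eps,\beta$ are chosen in terms of $c$ anyway, but ``absorbing $k$ into $100$'' is not literally how to fix it.
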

\begin{proof}
Let $E'$ be the set of hyperedges in $E$ whose vertices are all fixed points of $\sigma$, and let $E'' \subset E'$ contain those hyperedges $e_i$ in $E'$ for which all the edges incident to $V_{C_i}$ are satisfied.  By an argument similar to \pref{claim:soundness-threelin-val}, we know that $\vert E''\vert \geq (1 - 100(\eps+\beta) - (M_1 + M_2)\delta)m \geq .9 - 100(\eps+ \beta)$, by our choice of $\delta$.

Given $e_i \in E''$, assume without loss of generality that $e_i = \{x_1, \ldots, x_k\}$.  Then we know that for some $a_1, \ldots, a_k \in H$, $\pi(x_j \mapsto b) = x_j\mapsto (b_j-a_j)$ for each $j$.  Consider the subgraph $G_i$ which is the induced subgraph of $G_{\calC}$ restricted to the vertices in $V_{x_1}, \ldots, V_{x_k}$ and $V_{C_i}$.  Define $\homog{G_i}$ to be the analogous subgraph of $G_{\homog{C}}$.  Then  $\pi$ restricted to $G_i$ is an $(a_1, \ldots, a_k)$-permutation, and it is a homomorphism from $G_i$ to $\homog{G_i}$.  Therefore, \pref{lem:constraintvariable} tells us that $(a_1, \ldots, a_k)$ is a satisfying assignment of $C_i$.  Thus, for each $T_i \in T''$, the assignment $\tau$ satisfies the constraint $C_i$.  As $T''$ makes up at least a $.9 - 100(\eps+\beta)$ fraction of all the constraints, the claim follows.
\end{proof}

\subsection{Variable gadget proof}\label{sec:gadgetproofs}
\begin{proof}[Proof of \pref{lem:variablegadget}.]

Because $H$ is abelian, it is isomorphic to the group $\Z_{p_1}\oplus \Z_{p_2}\oplus \cdots \oplus \Z_{p_t}$, for some integers $p_1, \ldots, p_t$.  Thus, we will without loss of generality assume that it equals this group.  The group $H$ can be visualized as being a $p_1$-by-$p_2$-by-\ldots-by-$p_t$ grid.  For example, the case when $H = \Z_3 \oplus \Z_5$ is depicted in \pref{fig:z3z5}. 

 The important subsets of $H$ for our gadget construction will be those in which a specified coordinate is allowed to be free, and the rest of the coordinates are fixed.  We will call these subsets \emph{rows}.  For example, a $1$-row of $H$ consists of the elements
\begin{equation*}
(0, a_2, \ldots, a_t), (1, a_2, \ldots, a_t), \ldots, (p_t - 1, a_2, \ldots, a_t),
\end{equation*}
where $a_i$ is an element of $\Z_{p_i}$ for each $i$.  For a coordinate $i \in [t]$, define an $i$-row analogously.  In \pref{fig:z3z5}, the rows of the grid correspond to $1$-rows, and the columns correspond to $2$-rows.

\begin{figure}
\centering
\includegraphics{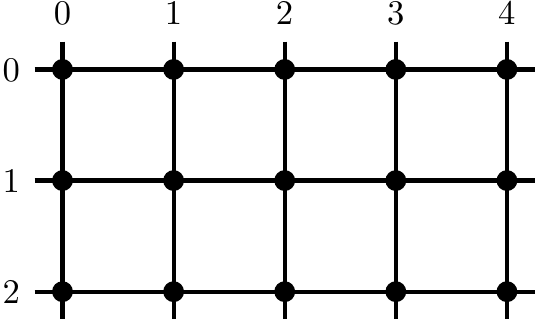}
\caption{A grid representation of the group $\Z_3 \oplus \Z_5$.  Each intersection point corresponds to a group element.}
\label{fig:z3z5}
\end{figure}

We now construct the gadget in two steps.  Each step will involve adding a set of vertices and edges which is local to each $i$-row.

\paragraph{The row gadget:}
Consider the group $H = \Z_3 \oplus \Z_3$.  In this case, the function $f:H \rightarrow H$ in which $f(a, b) = (b, a)$ is an automorphism on $H$.  However, it is not of the form required by the soundness condition; one way to show this is that $f(1, 0) - f(0, 0) = (0, 1)$, whereas the soundness requires that it equal $(1, 0)$.  The row gadget will disallow such automorphisms by forcing any automorphism to maintain the order of coordinates.

We define the gadget $G_{row}$ as follows: the graph initially consists of a vertex $``b"$ for each group element $b \in H$.  These are the variable vertices, the set of which we call $V_x$. (In the rest of the paper, we are naming these vertices $``x \mapsto b''$.  However, we are opting for the simpler option for this gadget construction.)  Note that $V_x = H$, and so $i$-rows of $H$ are $i$-rows of $V_x$, and vice versa. Now, if $t = 1$, then we stop here.  Otherwise, for each coordinate $i$, let $R$ be an $i$-row in $V_x$.  Create a star graph with $i$ degree-one outer nodes and one degree-$i$ inner node, and attach this inner node to every vertex in $R$.  Doing this for each $i$-row completes the construction of $G_{row}$.  Set $A_{row}$ to be the set of auxiliary vertices in $G_{row}$.

First, we show that the row gadget satisfies the necessary completeness condition:
\begin{proposition}\label{prop:rowcompleteness}
For each $a \in H$, $G_{row}$ has an automorphism $f_a$ such that $f_a(b) = a + b$ for each $b \in H$.
\end{proposition}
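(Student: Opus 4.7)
The plan is to construct $f_a$ explicitly by extending the translation map from $V_x$ to the auxiliary vertices in the natural way. The restriction of $f_a$ to $V_x$ is already given by $b \mapsto a+b$, which is clearly a permutation of $V_x = H$, so the only thing to do is decide where $f_a$ sends each auxiliary vertex and then verify that edges are preserved.

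The crucial structural observation is that translation by $a$ maps $i$-rows to $i$-rows: if $R$ is an $i$-row (i.e.\ a coset of the subgroup $\{0\} \oplus \cdots \oplus \Z_{p_i} \oplus \cdots \oplus \{0\}$), then $R + a$ is another $i$-row, and the map $R \mapsto R+a$ is a bijection on the set of $i$-rows. This immediately suggests how to extend $f_a$ to the star gadget attached to each row: for every $i \in [t]$ and every $i$-row $R$, let $v_R$ denote the inner (degree-$i$) vertex of the star attached to $R$, and let $L_R$ denote its $i$ outer leaves. Define $f_a(v_R) = v_{R+a}$, and let $f_a$ restricted to $L_R$ be any bijection onto $L_{R+a}$ (which exists since $|L_R| = |L_{R+a}| = i$).

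To verify that the resulting $f_a$ is an automorphism, I would check the three types of edges in turn. There are no edges inside $V_x$ itself, so nothing to check there. The edges between $v_R$ and the vertices of $R$ are preserved because $f_a$ sends $v_R$ to $v_{R+a}$ and sends $R$ bijectively onto $R+a$ (this is exactly the translation of the coset $R$ by $a$). Finally, the edges between $v_R$ and each leaf in $L_R$ are preserved because $f_a$ sends $v_R$ to $v_{R+a}$ and $L_R$ bijectively to $L_{R+a}$, which are precisely the neighbors of $v_{R+a}$ among the leaves. Since $f_a$ is a bijection on the vertex set that preserves adjacency in each case, it is an automorphism.

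No real obstacle is anticipated — the proof is essentially a direct verification, and the only place one must be slightly careful is in checking that translation indeed permutes the set of $i$-rows (which is immediate from the fact that the $i$-rows are the cosets of a fixed subgroup of $H$). The case $t = 1$ is degenerate: then $G_{\mathrm{row}}$ has no auxiliary vertices and $f_a = (b \mapsto a+b)$ is trivially an automorphism of the empty-edge graph on $V_x$.
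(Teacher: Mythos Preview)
Your proof is correct and follows essentially the same approach as the paper: both arguments rest on the observation that translation by a group element permutes the set of $i$-rows, and then extend the translation to the attached star gadgets accordingly. The only difference is cosmetic---the paper first builds the automorphisms $f_{e_j}$ for the coordinate generators $e_j$ and obtains general $f_a$ by composition, whereas you construct $f_a$ directly for arbitrary $a$; your direct construction is arguably cleaner and avoids the (trivial) composition step.
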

\begin{proof}
For $j \in [t]$, define $f_j(b) = b+e_j$, where $e_j = (0, \ldots, 0, 1, 0, \ldots, 0)$, with the~$1$ in the $j$th coordinate.  If $R$ is an $i$-row in $V_x$, then $f(R)$ is also an $i$-row, for each $i$.  As a result, if we define $f$ so that it maps the star graph connected to $R$ to the star graph connected to $f_j(R)$, then $f_j$ is an automorphism on $G_{row}$.  This follows because each $i$-row is connected to a star graph with exactly $i$ outer nodes.

For more general $a \in H$, $f_a$ can be constructed by composing together many different $f_j$'s.  As automorphisms are closed under composition, this gives us an automorphism $f_a$ for each $a \in H$.
\end{proof}

Unfortunately, $G_{row}$ is not yet a variable gadget.  However, it does satisfy the following key property:
\begin{proposition}\label{prop:rowgadget}
Let $f:H \cup A_{row} \rightarrow H\cup A_{row}$ be an automorphism on $G_{row}$.  Then if $R$ is an $i$-row of $V_x$, $f(R)$ is also an $i$-row of $V_x$.
\end{proposition}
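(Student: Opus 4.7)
The plan is to show that any automorphism $f$ of $G_{row}$ must preserve the tripartition of vertices into variable vertices ($V_x$), inner nodes of stars, and outer (leaf) nodes; and that among inner nodes it must preserve the coordinate $i$ of the $i$-row star each one belongs to. Once this is established, the proposition follows essentially immediately from the observation that each $i$-row is exactly the $V_x$-neighborhood of its corresponding inner node.

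First I would handle the trivial case $t=1$: the construction adds no edges at all, so $V_x$ itself is the unique $1$-row and the conclusion holds vacuously. Assume now that $t\geq 2$, so that each $p_i\geq 2$ (trivial factors can be discarded). I would then classify vertices by local structure. Outer nodes are precisely the degree-$1$ vertices of $G_{row}$ (variable vertices have degree $t\geq 2$, and inner nodes of $i$-row stars have degree $i+p_i\geq 3$). Inner nodes are characterized as the neighbors of outer nodes (no variable vertex is adjacent to an outer node, since outer nodes only attach to inner nodes). The remaining vertices are exactly those in $V_x$. These three characterizations are first-order invariants of the graph, hence preserved by $f$, so $f(V_x)=V_x$, $f$ permutes inner nodes among themselves, and $f$ permutes outer nodes among themselves.

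Next I would refine the classification of inner nodes. By construction, the inner node associated with an $i$-row star has exactly $i$ neighbors of degree $1$ (its outer nodes). Since the values $1,2,\ldots,t$ are all distinct, the number of leaf-neighbors uniquely identifies the coordinate $i$. As $f$ maps outer nodes to outer nodes, it preserves the count of leaf-neighbors of each inner node, hence maps inner nodes of $i$-row stars to inner nodes of $i$-row stars for each $i\in[t]$.

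Finally, fix an $i$-row $R$ and let $u_R$ denote the inner node of $G_{row}$ attached to $R$. By construction, $R=\{v\in V_x:v\sim u_R\}$. Since $f$ is an automorphism, $f(R)=\{w\in V_x:w\sim f(u_R)\}$, and since $f(u_R)$ is the inner node of some $i$-row star (with the same $i$), the right-hand side is exactly the $i$-row of $V_x$ that $f(u_R)$ is attached to. Thus $f(R)$ is an $i$-row, as required. No step appears to be a substantive obstacle; the whole argument is a direct degree-and-neighborhood bookkeeping, and the only subtlety is making sure the three vertex classes are distinguishable, which is why the $i$ outer nodes were built with distinct counts per coordinate in the first place.
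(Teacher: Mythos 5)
Your proof is correct and follows essentially the same route as the paper: identify the inner node $u_R$ of the star attached to $R$, use the count of degree-one neighbors to pin down which coordinate $i$ it belongs to, and then recover $f(R)$ as the $V_x$-neighborhood of $f(u_R)$. You are slightly more careful than the paper in that you first establish that $f$ preserves the tripartition $V_x$, inner nodes, outer nodes (so that $f(R)\subseteq V_x$ and $f(u_R)$ is again a star center), steps the paper leaves implicit; this fills a small gap but does not change the underlying argument.
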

\begin{proof}
If $t = 1$ then this is trivial, as the only vertices in the row graph are those in $V_x$, and $V_x$ is the only row.  Now assume that $t > 1$.  Let $u \in A_{row}$ be the center of a star graph attached to the $i$-row $R$.  By construction, $u$ is connected to exactly $i$ degree-one nodes, so $f(u)$ must have $i$ degree-one neighbors as well.  The only degree-one vertices in $G_{row}$ are the outer nodes of the star graphs, so $f(u)$ must be the center of a star graph attached to an $i$-row.  Because $f$ is an automorphism, $f(R)$ must also be adjacent to $f(u)$.  The only way this can happen is if $f(R)$ is the $i$-row adjacent to $f(u)$.  Thus, $f(R)$ is an $i$-row if $R$ is, and this happens for each row of $V_x$.
\end{proof}

Each star graph connected to an $i$-row adds $i+1 \leq 2t$ auxiliary vertices and $i + p_i \leq t+p_i$ edges.  The number of $i$-rows is $|H|/p_i$, so this gives $2t |H|$ auxiliary vertices and $t |H| + |H| \leq 2t|H|$ edges for each coordinate $i$.  In total, the $G_{row}$ gadget contains $2t^2|H|$ auxiliary vertices and $2t^2|H|$ edges.

\paragraph{The cycle gadget:}
Consider again the group $H = \Z_3 \oplus \Z_3$.  In this case, the bijection $f: H \rightarrow H$ in which $f(a, b) = (a^3 + a, b)$ is \emph{not} an automorphism on $H$.  However, it can be extended to an automorphism on $G_{row}$.  This is because $f$ respects the order of the coordinates, and this is essentially all that the row gadget checks.  The cycle gadget will enforce that each individual coordinate behaves appropriately.

Given the gadget $G_{row}$, we extend it to a variable gadget $G$ as follows: pick a coordinate $i \in [t]$, and assume without loss of generality that $i = 1$.  Let $R$ be a $1$-row in $V_x$.  If $p_1 = 2$, then add a single edge between the two vertices in $R$.  Otherwise, $p_1 > 2$, and $R$ can be written as
\begin{equation*}
x_0 =  (0, a_2, \ldots, a_t), \quad
x_1 =  (1, a_2, \ldots, a_t),\quad \ldots, \quad
x_{p_1-1} =  (p_1 -1, a_2, \ldots, a_t).
\end{equation*}
In this case, we imagine $x_0, x_1, \ldots, x_{p_1 - 1}$ as being arranged clockwise around a circle, and between the elements $x_j$ and $x_j + 1$ (where addition is modulo $p_1$), we add the vertices and edges shown in \pref{fig:gadget}.  For example, \pref{fig:cycle} shows the result when $p_1 = 4$.  Doing this for each $i$-row and adding the result on top of $G_{row}$ completes the construction of $G$.  Set $A_{cycle}$ to be the set of additional auxiliary vertices created in this step, which is disjoint from $A_{row}$.

\begin{figure}
\centering
\begin{subfigure}{.5\textwidth}
	\centering
	\includegraphics{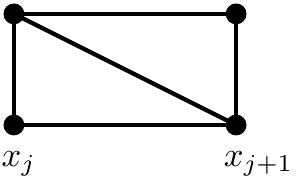}
	\caption{The basic unit of the cycle gadget.  The vertices labeled $x_j$ and $x_{j+1}$ are variable vertices, whereas the two unlabeled vertices are auxiliary vertices.}
	\label{fig:gadget}
\end{subfigure}%
\begin{subfigure}{.5\textwidth}
	\centering
	\includegraphics{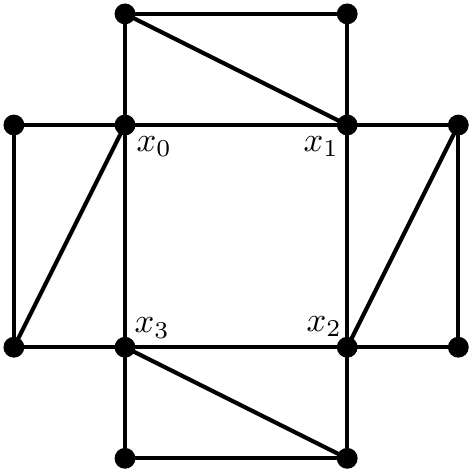}
	\caption{The cycle gadget applied to $\Z_4$.}
	\label{fig:cycle}
\end{subfigure}
\caption{The cycle gadget construction.  Each $i$-row has such a gadget placed on top of it.}
\end{figure}

We first show that $G$ satisfies the completeness condition:
\begin{proposition}
For each $a \in H$, $G$ has an automorphism $f_a$ such that $f_a(b) = a + b$ for each $b \in H$.
\end{proposition}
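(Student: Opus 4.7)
The plan is to generalize the strategy of \pref{prop:rowcompleteness} by building up $f_a$ from the generators $e_1, \ldots, e_t$ of $H$. Since translations by $H$ form a group under composition, and since automorphisms of $G$ are closed under composition, it suffices to construct, for each $j \in [t]$, an automorphism $f_{e_j}$ of $G$ whose restriction to $V_x$ is the map $b \mapsto b + e_j$. Then for an arbitrary element $a = \sum_j c_j e_j \in H$, the composition $f_{e_1}^{c_1} \circ \cdots \circ f_{e_t}^{c_t}$ will have the desired restriction.

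To construct $f_{e_j}$, first set $f_{e_j}(b) = b + e_j$ on $V_x$. By (the proof of) \pref{prop:rowcompleteness}, this extends to an automorphism of $G_{row}$: for every row $R$ of $V_x$, translation by $e_j$ sends $R$ to a row of the same type, and the star graph attached to $R$ is mapped isomorphically onto the star graph attached to $f_{e_j}(R)$. It remains to further extend this action to the auxiliary vertices in $A_{cycle}$ in a manner consistent with the cycle-gadget edges.

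The key observation is that translation by $e_j$ not only maps $i$-rows to $i$-rows but also respects the canonical cyclic ordering used when placing cycle gadgets. Writing an $i$-row as $R = \{x_0, x_1, \ldots, x_{p_i - 1}\}$ ordered by the $i$-th coordinate, translation by $e_j$ sends $x_k$ to $x'_k := x_k + e_j$: when $j \neq i$, the image $R' = \{x'_0, \ldots, x'_{p_i - 1}\}$ is a distinct parallel $i$-row with the same ordering; when $j = i$, we have $R' = R$ and the map is a single cyclic rotation $x_k \mapsto x_{k+1 \bmod p_i}$. In either case I will define $f_{e_j}$ on $A_{cycle}$ by sending, for each $k$, the two auxiliary vertices of the basic unit of \pref{fig:gadget} placed between $x_k$ and $x_{k+1}$ in $R$ to the two auxiliary vertices of the basic unit placed between $x'_k$ and $x'_{k+1}$ in $R'$, via the obvious correspondence of the two copies of the basic unit. (The degenerate $p_i = 2$ case, where the cycle gadget collapses to a single edge, is handled identically.)

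The main obstacle is simply the bookkeeping: one must verify that this global prescription is well-defined and that every edge of $G$ is preserved. Well-definedness is immediate because the cycle gadgets placed on distinct $i$-rows (for any $i$) share no auxiliary vertices, so the definitions for different rows cannot conflict. Every cycle-gadget edge is preserved because each such edge lies within a single basic unit and is mapped to the corresponding edge of the image basic unit; the edges of $G_{row}$ are preserved by \pref{prop:rowcompleteness}. Hence $f_{e_j}$ is a graph automorphism of $G$, and composing these produces the required $f_a$ for each $a \in H$.
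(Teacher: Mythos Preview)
Your proof is correct and follows essentially the same approach as the paper: build $f_a$ by composing the generator-shifts $f_{e_j}$, extend each $f_{e_j}$ from $G_{row}$ to the cycle gadgets by observing that translation by $e_j$ maps $i$-rows to $i$-rows while preserving their cyclic ordering, and then map basic units to basic units accordingly. The paper's proof is terser but makes the same argument.
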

\begin{proof}
Define $f_j$ on the variable vertices and the star graphs as in \pref{prop:rowcompleteness}.  We need to define $f_j$ on the cycle gadgets as well, and the same reasoning as before shows that because $f_j$ maps $i$-rows to $i$-rows (and it maintains the relative order of the vertices in each rows), we can set $f_j$ to locally map the cycle gadget from row $R$ to row $f_j(R)$ so that $f_j$ is an automorphism on $G$.  And as before, it is easy to extend this to hold for all $f_a$.
\end{proof}
In addition, it satisfies the soundness condition:
\begin{proposition}
Let $f$ be an automorphism on $G$.  Then there exists an element $a \in H$ such that $f(b) = a+ b$ for each $b \in H$.
\end{proposition}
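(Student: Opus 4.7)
The plan is to combine the row-preserving property from \pref{prop:rowgadget} with a local analysis of the cycle gadget on each row, and then stitch the local shifts together into a single global translation of $H$. First, \pref{prop:rowgadget} already tells us that $f$ permutes the $i$-rows of $V_x$ for every coordinate $i \in [t]$. The main work is to upgrade this to a local rotation statement: for each $i$-row $R = \{x_0, x_1, \ldots, x_{p_i-1}\}$ of $V_x$ (listed in $\Z_{p_i}$-order), the restriction of $f$ to $R$ satisfies $f(x_j) = x_{(j+s)\bmod p_i}$ for some shift $s \in \Z_{p_i}$ depending on $R$.

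To establish this local rotation property, I would first isolate the cycle-gadget auxiliary vertices attached to each row using local invariants (such as degree and membership in specific induced subgraphs) to distinguish them from variable vertices and from the row-gadget auxiliaries; this forces $f$ to send the cycle gadget built on $R$ to the cycle gadget built on $f(R)$. The case $p_i = 2$ is immediate since the cycle gadget there is a single edge. For $p_i \geq 3$, the basic unit in \pref{fig:gadget} is arranged so that the two auxiliary vertices inserted between consecutive $x_j$ and $x_{j+1}$ play distinguishable roles, giving the cycle an intrinsic orientation that forbids reflections and leaves only the $p_i$ cyclic rotations as automorphisms. Verifying this orientation property is the main obstacle: it is a finite local check on the gadget of \pref{fig:gadget}, but it has to be done carefully because an orientation-reversing symmetry on even a single row would let $f$ act as $b \mapsto -b$ there and spoil the argument.

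Once the local rotation property is in hand, set $a = f(\mathbf{0})$, where $\mathbf{0}$ denotes the identity of $H$. Applying the rotation property to the $i$-row through $\mathbf{0}$ immediately yields $f(e_i) = a + e_i$ for every standard basis vector $e_i = (0,\ldots,0,1,0,\ldots,0)$. I would then proceed by induction on the number of nonzero coordinates of $b \in H$: assuming $f(b) = a + b$, applying the rotation property to the $i$-row through $b$ pins down $f$ on the entire row from its value at $b$, giving $f(b + e_i) = a + b + e_i$. Since every element of $H$ is reachable from $\mathbf{0}$ by a sequence of unit-vector increments, we conclude $f(b) = a + b$ for all $b \in V_x$, which is exactly the required soundness statement.
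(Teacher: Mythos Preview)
Your proposal is correct and takes essentially the same approach as the paper's proof: separate the vertex classes, invoke row-preservation from \pref{prop:rowgadget}, show the cycle gadget forces an orientation-preserving shift on each row, and then globalize via $a = f(0)$. The paper makes your ``finite local check'' concrete rather than leaving it abstract: in the gadget of \pref{fig:gadget} the auxiliary vertex $u$ is the \emph{unique} $A_{cycle}$-vertex adjacent to both $x_j$ and $x_{j+1}$, while the second auxiliary $v$ is adjacent to $u$ and to $x_{j+1}$ but not to $x_j$, and this asymmetry immediately forces $f(x_j + e_i) = f(x_j) + e_i$, disposing of exactly the reflection you flagged as the main obstacle.
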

\begin{proof}
Let $f$ be an automorphism on $G$, and let $R$ be an $i$-row of $V_x$.  Note that none of the vertices in $A_{cycle}$ have degree one.  Therefore, the only degree-one vertices in $G$ are in the star graphs from the row gadget.  This is enough to reuse the argument in \pref{prop:rowgadget} to show that $f(V_x) = V_x$, $f(A_{row}) = A_{row}$, and $f(A_{cycle}) = A_{cycle}$.  In addition, we may apply \pref{prop:rowgadget} to $R$, which shows that $f(R)$ is also an $i$-row of $V_x$.

Write the elements of $R$, as in the above paragraph, as $x_0, \ldots, x_{p_i - 1}$, where $x_j$ has a $j$ in the $i$th coordinate.  We have the relationship $x_{j+1} = x_j + e_i$.  Let $u$ be the vertex adjacent to both $x_j$ and $x_{j+1}$ in \pref{fig:gadget}, and let $v$ be the vertex adjacent to both $u$ and $x_{j+1}$.  The only row that $x_j$ and $x_{j+1}$ are both in is $R$.  This means that $u$ is the \emph{only} vertex in $A_{cycle}$ adjacent to both $x_j$ and $x_{j+1}$, and likewise $v$ is the only vertex in $A_{cycle}$ adjacent to both $u$ and $x_{j+1}$.  This means that if $f$ is an automorphism, $f(u)$ must be the only vertex in $A_{cycle}$ adjacent to both $f(x_j)$ and $f(x_{j+1})$, and $f(v)$ must be the only vertex in $A_{cycle}$ adjacent to both $f(u)$ and $f(x_{j+1})$.  The only way this is possible is if $f(x_{j+1}) = f(x_j) + e_i$.  Rewriting $x_{j+1}$ as $x_j + e_i$, we conclude that $f(x_j + e_i) = f(x_j) + e_i$.

The above argument only used that $x_j$ and $x_{j+1}$ differed by one in the $i$th coordinate.  As a result, we know that $f(b + e_i) = f(b) + e_i$ for any coordinate $i$.  More generally, we know that $f(a+b) = f(a) + b$, which can be shown by applying the previous fact coordinate-by-coordinate.  Set $a:= f(0)$.  Then $f(b) = f(0) + b = a+b$.  This completes the soundness case.
\end{proof}

Each time the cyclic gadget is applied to an $i$-row, it adds $2 p_i$ auxiliary vertices and $5 p_i$ edges to the graph.  As the number of $i$-rows is $|H|/p_i$, this adds $2 |H|$ vertices and $5 |H|$ edges per coordinate $i$, for a total of $2t|H|$ vertices and $5t|H|$ edges.  Combining this with the row gadget, the vertex gadget creates at most $4t^2 |H|$ vertices and $7t^2 |H|$ edges.  Using the fact that $p_1\cdots p_t = |H|$, we see that $t \leq \log_2 |H|$, giving the bound in the lemma statement.

As for the size of the maximum clique, we consider three cases. If $|H| =2$, then the variable gadget is a $2$-clique.  If $|H| = 3$, then the variable gadget is as in \pref{fig:cycle}, except with a triangle in the center rather than a square.  It can be checked that the largest clique in this case is of size three.  Finally, if $|H| > 3$, any clique containing an auxiliary variable from a cycle gadget must be of size at most three, as \pref{fig:gadget} verifies.  Next, any group of variable vertices can appear in a clique together only if they appear in the same $i$-row together.  Furthermore, the largest clique in an $i$-row is of size two unless $p_i = 3$, meaning the three variable vertices in that $i$-row form a triangle.  In this case (supposing $t >1$), they will also all be connected to the center of a star graph by the row gadget.  This is the only way to form a clique of size four, and no larger clique is possible.  Thus, the maximum clique size is at most $\min\{4, |H|\}$.
\end{proof}

\subsection{Label-extended graph proof}\label{sec:labelext}

\begin{proof}[Proof of \pref{lem:constraintvariable}]
We handle the completeness and soundness cases separately:
\paragraph{Completeness.}
Given $\alpha$, define $f$ as follows: first, $f(x_i \mapsto b) := x_i \mapsto (b-\alpha(x_i))$ for each $i \in [k]$ and $b \in H$.  Next, for each assignment $\beta$ which satisfies $C$, set $f(\beta) := \beta - \alpha$, using coordinate-wise subtraction.  For this definition to be legitimate, we must ensure that $\beta-\alpha$ is a satisfying assignment of $\pred$.  Because $C$ is an $\ABEL$ constraint, it can be written as
\begin{equation*}
\pred(x_1 + a_1, \ldots, x_k + a_k) = 1.
\end{equation*}
Because $\alpha$ and $\beta$ both satisfy $C$, they can be written as $\alpha' - (a_1, \ldots, a_k)$ and $\beta' - (a_1, \ldots, a_k)$, where $\alpha'$ and $\beta'$ are satisfying assignments of $\phi$.  Then by the subgroup property of $\phi$, $\beta-\alpha = \beta'-\alpha' \in \phi$.  Therefore, $f$ is a legitimate bijection between the vertices of $G$ and $\homog{G}$.

Given an assignment $\beta$, there is an edge in $G$ between $\beta$ and $x_i \mapsto \beta(i)$, for each $i \in [k]$.  Then $f(\beta) = \beta- \alpha$, and $f(x_i \mapsto \beta(i)) =x_i \mapsto (\beta(i)- \alpha(i))$, both of which are connected by an edge in $\homog{G}$.  Thus, $f$ is a homomorphism.

\paragraph{Soundness.}
Let $\beta$ be the assignment in $V_{C}$ which $f$ maps to the all-$0$'s assignment.  Then for each $i\in[k]$, $x_i \mapsto \beta(i)$ is connected to $\beta$ in $G$, so $f(x_i\mapsto \beta(i)) = x_i \mapsto (\beta(i) - \alpha(i))$ must be connected to $0$ in $\homog{G}$.  This means that $\beta(i) - \alpha(i) = 0$ for all $i \in[k]$.  In particular, $\alpha = \beta$, and because $\beta$ is a vertex in $V_C$, it (and by extension $\alpha$) is a satisfying assignment of $C$.
\end{proof} 
\fi

\section{Conclusions} \label{sec:conclusions}

We have shown SOS/Lasserre gaps for \giso and hardness of \robustgiso based on various average case hardness assumptions. The gaps we obtained are tiny. Although we did not attempt to optimize over the parameters in order to maximize the gaps, we believe the current approach cannot make the soundness as small as an arbitrary constant. Therefore one intriguing problem at this point is to find a way amplifying the gaps. A possible approach is to prove a ``parallel repetition theorem for \giso'', i.e. find a graph product operator $\square$ such that $\GI(G^{\square \ell}, H^{\square \ell}) \leq \GI(G, H)^{\omega_{\ell}(1)}$, where $\omega_{\ell}(1)$ means a function which goes to $+\infty$ as $\ell \to +\infty$. A natural candidate for $\square$ is the tensor product $\tensor$. It is known that (\cite{McK71}), for (exact) \gisolong, if both $G$ and $H$ are connected and non-bipartite graphs, $G \tensor G$ is not isomorphic to $H \tensor H$ when $G$ is not isomorphic to $H$. But how the tensor product works with approximate \gisolong remains to be explored. 

Another future direction is to prove robust asymmetry property for random $d$-regular graphs. Since in this model we do not have the problem of many isolated vertices, even for constant $d$, we might be able to prove a $(\beta, \Omega(\beta))$-asymmetry property for beta being as small as $\Theta\left(\frac{1}{n}\right)$, where in contrast, there is an $\exp(-O(c))$ lower bound for $\beta$ in the $\calG_{n, cn}$ model.



\bibliographystyle{alpha}
\bibliography{GISO-Lasserre}

\end{document}